  \def\namedlabel#1#2{\begingroup
	#2%
	\def\@currentlabel{#2}%
    	\phantomsection\label{#1}\endgroup
  }
  \newcommand{\ignore}[1]{}	
  \newcommand{\peq}{{\sc EQ}}
  \newcommand{\lmin}{\lambda_{\text{\sc min}}}
  \newcommand{\R}{\mathbb{R}} 
  \newcommand{\C}{\mathbb{C}} 
  \newcommand{\N}{\mathbb{N}} 
  \newcommand{\Z}{\mathbb{Z}} 
  \newcommand{\bset}[1]{\{0,1\}^{#1}} 
  \newcommand{\ket}[1]{|#1\rangle}
  \newcommand{\bra}[1]{\langle#1|}
  \newcommand{\braket}[2]{\langle #1|#2\rangle}
  \newcommand{\ketbra}[2]{|#1\rangle\!\langle#2|}
  \newcommand{\ceil}[1]{\lceil{#1}\rceil}
  \DeclareMathOperator{\Tr}{\mathsf{Tr}}
  \newcommand{\st}{:\,} 
  \DeclareMathOperator{\mon}{\mathbf{mon}}
  \newcommand{\eps}{\varepsilon}  
  \newcommand{\ie}{{i.e.}}
  \newcommand{\beq}{\begin{equation}}
  \newcommand{\eeq}{\end{equation}}
  \newcommand{\beqn}{\begin{equation*}}
  \newcommand{\eeqn}{\end{equation*}}
  \newcommand{\beqr}{\begin{eqnarray}}
  \newcommand{\eeqr}{\end{eqnarray}}
  \newcommand{\beqrn}{\begin{eqnarray*}}
  \newcommand{\eeqrn}{\end{eqnarray*}}
  \newcommand{\bmline}{\begin{multline}}
  \newcommand{\emline}{\end{multline}}
  \newcommand{\bmlinen}{\begin{multline*}}
  \newcommand{\emlinen}{\end{multline*}}
  \newtheorem{theorem}{Theorem}[section]
  \newtheorem{proposition}[theorem]{Proposition}
  \newtheorem{lemma}[theorem]{Lemma}
  \newtheorem{conjecture}[theorem]{Conjecture}
  \newtheorem{corollary}[theorem]{Corollary}
  \newtheorem{definition}[theorem]{Definition}
\renewcommand*{\@fnsymbol}[1]{\ensuremath{\ifcase#1\or *\or \mathparagraph\or \ddagger\or
    \mathsection\or \dagger\or \|\or **\or \dagger\dagger
    \or \ddagger\ddagger \else\@ctrerr\fi}}
\begin{document}

\title{
\bf
Round elimination in exact communication complexity
}

\author{
Jop Bri\"{e}t\thanks{
Centrum Wiskunde \& Informatica (CWI), Science Park 123, 1098 XG Amsterdam, The Netherlands.
Funded by a Rubicon grant from the Netherlands Organization for Scientific Research (NWO).
E-mail: \texttt{j.briet@cwi.nl}.
}
\and
Harry Buhrman\thanks{
QuSoft, University of Amsterdam, and CWI, Science Park 123, 1098 XG Amsterdam, The Netherlands.
Supported by NWO Gravitation-grants NETWORKS and QSC as well as EU grant QuantAlgo and CIFAR.
E-mail: \texttt{buhrman@cwi.nl}.
}
\and
Debbie Leung\thanks{
University of Waterloo, Waterloo, Canada.
Funded in part by NSERC, CRC, and CIFAR.  
E-mail: \texttt{wcleung@math.uwaterloo.ca}.
}
\and
Teresa Piovesan\thanks{
Centrum Wiskunde \& Informatica (CWI), Science Park 123, 1098 XG Amsterdam, The Netherlands.
Funded in part by the EU project SIQS.
E-mail: \texttt{tere.piovesan@gmail.com}.
}
\and
Florian Speelman\thanks{
Qusoft and Centrum Wiskunde \& Informatica (CWI), Science Park 123, 1098 XG Amsterdam, The Netherlands.
Funded in part by the EU project SIQS.
E-mail: \texttt{speelman@cwi.nl}.
}
}

\maketitle

\date{}

\begin{abstract}
We study two basic graph parameters, the chromatic number and the orthogonal rank, in the context of classical and quantum exact communication complexity.
In particular, we consider two types of communication problems that we call \emph{promise equality} and \emph{list} problems.
For both of these, it was already known that
the one-round classical and one-round quantum complexities are characterized by the chromatic number and orthogonal rank of a certain graph, respectively.

In a promise equality problem, Alice and Bob must decide if their inputs are equal or not.
We prove that classical protocols for such problems can always be reduced to one-round protocols with no extra communication.
In contrast, we give an explicit instance of a promise equality problem that exhibits an exponential gap between the one- and two-round exact quantum communication complexities.
Whereas the chromatic number thus fully captures the complexity of promise equality problems, the hierarchy of ``quantum chromatic numbers'' (starting with the orthogonal rank) giving the quantum communication complexity for every fixed number of communication rounds turns out to enjoy a much richer structure.

In a list problem, Bob gets a subset of some finite universe, Alice gets an element from Bob's subset, and their goal is for Bob to learn which element Alice was given.
The best general lower bound (due to Orlitsky) and upper bound (due to Naor, Orlitsky, and Shor) on the classical communication complexity of such problems differ only by a constant factor.
We exhibit an example showing that, somewhat surprisingly, the four-round protocol used in the bound of Naor et al.\ can in fact be optimal.
Finally, we pose a conjecture on the orthogonality rank of a certain graph whose truth would imply an intriguing impossibility of \emph{round elimination} in quantum protocols for list problems, something that works trivially in the classical case.
\end{abstract}

\newpage

\section{Introduction}

The chromatic number~$\chi(G)$ of a graph~$G$ is the minimum number of colors needed to color the vertices in such a way that adjacent vertices get different colors.
This important graph parameter appears frequently in computer science
and mathematics; it is well-known to be NP-hard to compute and has recently found a number of meaningful generalizations in the context of non-local games and entanglement-assisted zero-error information theory.
One of those generalizations is the \emph{orthogonal rank} of a graph, denoted~$\xi(G)$ and defined as follows.
An orthonormal representation of a graph is an assignment of complex unit vectors to the vertices such that adjacent vertices receive orthogonal vectors.
The orthogonal rank is the minimum dimension of such a representation.
Similar to the chromatic number, the orthogonal rank is also NP-hard to compute (see Appendix~\ref{app:hardrank}).
In this paper, we study both of these graph parameters in the context of communication complexity.\\

\noindent{\bf Classical communication complexity.}
Since its introduction by Yao~\cite{Yao:1979} communication complexity has become a standard model in computational complexity that enjoys a wide variety of connections to other areas in theoretical computer science~\cite{Kushilevitz:1997}. 
Here two parties, Alice and Bob, receive inputs $x,y$ from sets~$\mathcal X,\mathcal Y$ (resp.) and need to compute the value $f(x,y)$ of a two-variable function~$f$ known to them in advance. 
Usually each party has insufficient information to solve the problem alone, meaning that they have to exchange information about each others' inputs.
The idea that communication is expensive motivates the study of the {\em communication complexity} of~$f$, which counts the minimal number of bits that the parties must exchange on worst-case inputs.
Throughout this paper, we consider only exact (deterministic) communication protocols, meaning that no error is allowed, and we will omit the word \emph{exact} from now on.
Of particular importance to this paper is the distinction between {\em one-round} protocols, where all communication flows from Alice to Bob, and {\em multi-round} protocols, where they take turns in sending messages from one party to the other.
\medskip

\noindent{\bf Quantum communication complexity.}
{In yet another celebrated paper, Yao~\cite{Yao:1993} introduced {\em quantum communication complexity}, where to compute the value~$f(x,y)$ the parties are allowed to transmit {\em qubits} back and forth.
The study of this model has  also become a well-established discipline in theoretical computer science and quantum information theory.
The most basic question that arises when considering the classical and
quantum models is whether they are actually substantially different.
An upper bound on the possible difference between these models was proved by Kremer~\cite[Theorem~4]{Kremer:1995}.\footnote{The result stated here is actually a slight generalization of Kremer's result (which focuses on boolean functions) that can be proved in the same way; for completeness we give a proof in Appendix~\ref{sec:Kremer}. Moreover, this statement (as well as Kremer's original formulation) holds in the bounded-error model of communication complexity, not only in the exact one.}

\begin{theorem}[Kremer]\label{thm:kremer}
Any quantum protocol that uses $\ell$ qubits of communication can be turned into a $2^{O(\ell)}$-bit one-round classical protocol for the same problem.
\end{theorem}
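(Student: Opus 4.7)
The plan is to follow Kremer's original outline: (i) decompose the final quantum state into a low-dimensional bilinear form in Alice's and Bob's inputs; (ii) conclude that for each output value $z$ the $\{0,1\}$-valued indicator matrix $\mathbf{1}[f(x,y)=z]$ has rank at most $2^{O(\ell)}$ over $\mathbb{C}$; and (iii) convert this rank bound into a count of distinct rows, which is the one-round classical complexity of $f$.

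For (i), I would show by induction on the number of rounds that after an $\ell$-qubit protocol the joint pure state of Alice's and Bob's registers satisfies the Yao--Kremer decomposition
\[
\ket{\psi_{xy}} \;=\; \sum_{i=1}^{N} \ket{u_i(x)}_A \otimes \ket{v_i(y)}_B, \qquad N \le 2^\ell,
\]
with each $\ket{u_i(x)}$ depending only on $x$ and each $\ket{v_i(y)}$ only on $y$. The base case is the fixed input-independent initial product state; in the inductive step, the sender's local unitary preserves the decomposition, and transferring $k$ qubits to the other party merely relabels the $A/B$ partition and multiplies the number of product terms by at most $2^k$.

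For (ii), let $\Pi_z$ be the measurement projector for output $z$. Then
\[
P_z(x,y) \;:=\; \bra{\psi_{xy}}\Pi_z\ket{\psi_{xy}} \;=\; \sum_{i,j=1}^{N}\langle u_j(x)|u_i(x)\rangle\,\bra{v_j(y)}\Pi_z\ket{v_i(y)}
\]
is an inner product of two $\mathbb{C}^{N^2}$-vectors, one depending only on $x$ and the other only on $y$, so $P_z$ has rank at most $4^\ell$ over $\mathbb{C}$. Since the protocol is exact, $P_z(x,y)\in\{0,1\}$ and coincides with $\mathbf{1}[f(x,y)=z]$.

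For (iii), I would invoke the fact that any $\{0,1\}$-valued matrix of rank $k$ has at most $2^k$ distinct rows: an invertible $k\times k$ submatrix $M[i_\bullet,j_\bullet]$ forces every row to be a unique linear combination of the $k$ basis rows $i_1,\ldots,i_k$, whose coefficients are in turn uniquely recovered from the row's values on columns $j_1,\ldots,j_k$, which live in $\{0,1\}^k$. Since the tuple $(P_z(x,\cdot))_z$ determines $f(x,\cdot)$, the map $x\mapsto f(x,\cdot)$ takes at most $2^{|\mathcal Z|\cdot 4^\ell}$ values, so Alice's one-round message requires at most $|\mathcal Z|\cdot 4^\ell = 2^{O(\ell)}$ bits (under the mild hypothesis $\log|\mathcal Z|=O(\ell)$, which covers all standard settings and is immediate for Boolean $f$). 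Bob then outputs $f(x,y)$ from Alice's message together with $y$. The main obstacle is the careful bookkeeping in step (i); the real content of the argument is the rank-to-row-count conversion exploited in step (iii).
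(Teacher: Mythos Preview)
Your proposal is correct, and steps~(i)--(ii) coincide with the paper's approach (the Yao--Kremer decomposition and the resulting rank-$4^\ell$ bilinear form for each acceptance probability). Step~(iii), however, is genuinely different.

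The paper does not use the rank-to-row-count conversion. Instead, Alice discretizes the $4^\ell$ complex numbers $a_{\mathbf u,\mathbf v}(x)$ to $O(\ell)$ bits of precision each and transmits all of them, using $O(\ell\cdot 4^\ell)$ bits; Bob then recomputes every $p_j(x,y)$ to within additive error~$1/8$ and outputs the unique~$j$ whose approximate probability exceeds~$1/2$. This analytic route has two advantages over yours: it extends verbatim to bounded-error protocols (as the paper's footnote points out), and it incurs no dependence on~$|\mathcal Z|$. Your row-count argument is cleaner and more combinatorial in the exact setting, but the hypothesis $\log|\mathcal Z|=O(\ell)$ that you flag is in fact removable even within your framework: since the \emph{same} vector $a(x)\in\mathbb C^{4^\ell}$ enters the bilinear form for every output value~$z$, you may choose a basis of at most $4^\ell$ vectors among all the $b^z(y)$ and observe that the $\{0,1\}$-values of $a(x)$ against that basis already determine every $P_z(x,y)$, giving at most $2^{4^\ell}$ distinct row-patterns with no reference to~$|\mathcal Z|$.
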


The first large gap between exact classical and quantum communication complexity was demonstrated by Buhrman, Cleve, and Wigderson~\cite{Buhrman:1998}, who gave a 
problem admitting a one-round quantum protocol that is exponentially more efficient than any (multi-round) classical protocol.\\

The chromatic number and orthogonal rank naturally show up in two types of communication problems that we call \emph{promise equality} and \emph{list} problems, discussed next.

\subsection{Promise equality}

In a \emph{promise equality problem}, Alice and Bob are either given equal inputs or a pair of distinct inputs from a subset~$\mathcal D$ of $\binom{\mathcal X}{2}$ ($\mathcal D$ is known to them in advance). 
Their goal is to decide whether their inputs are equal or different. \\

\noindent{\bf Classical communication complexity of promise equality problems.}
It was observed by de~Wolf~\cite[Theorem~8.5.1]{deWolf:PhD} that if $G = (\mathcal X,\mathcal D)$ is the graph with vertex set~$\mathcal X$ and edge set~$\mathcal D$, then the one-round classical communication of the problem equals $\ceil{\log \chi(G)}$.
Analogously, for each positive integer~$r$ one can define a ``level-$r$'' chromatic number of the graph corresponding to the communication complexity of protocols that proceed in~$r$ rounds or less.
For general communication problems, using more rounds can decrease the total communication, as is the case for the general Pointer Jumping Problem, where for every positive integer $m$ there is an instance for which any $m$-round protocol requires exponentially more communication than the best $(m+1)$-round protocol~\cite[{Section 4.2}]{Kushilevitz:1997}. 
However, we show that this is not true for promise equality problems (Lemma~\ref{lem:CCeq} below),
meaning that for such problems
the chromatic number not only characterizes the one-round complexity, but their overall communication complexity.\\

\noindent{\bf Quantum communication complexity of promise equality problems.}
The one-round quantum communication complexity of promise equality problems is characterized by the orthogonal rank of the associated graph~$G = (\mathcal X, \mathcal D)$~\cite[Theorem~8.5.2]{deWolf:PhD}.
Indeed, it is not difficult to see that a one-round
quantum protocol of a promise equality problem is equivalent to an
orthonormal representation of $G$; the vectors correspond to the states that Alice would send to Bob and orthogonality is required for Bob's measurement to tell whether they got equal inputs or not.
Viewing the orthogonal rank as the ``one-round quantum chromatic number'' naturally leads one to define a hierarchy of such numbers where the level-$r$ quantum chromatic number corresponds to the communication complexity of $r$-round quantum protocols.
One might expect that, as in the classical case, this hierarchy is redundant in that the levels all carry the same number.
However, one of our main results shows that in the quantum setting, this is {\em not} the case.

\begin{theorem}\label{thm:promise-general}
There exist absolute constants~$c,C\in (0,\infty)$, an infinite sequence of natural numbers $n \in \N$ and a family of promise equality problems $( \{0,1\}^{n},\mathcal D_n)_{n\in\N}$ such that:
\begin{itemize}
\item The one-round quantum communication complexity of~$(\{0,1\}^{n},\mathcal D_n)$ is at least~$cn$.
\item There is a two-round quantum protocol for~$(\{0,1\}^{n},\mathcal D_n)$ using at most~$C\log n$ qubits.
\end{itemize} 
\end{theorem}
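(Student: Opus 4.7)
\medskip

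\noindent\textbf{Proof proposal.}
The plan is to exhibit an explicit family of graphs $G_n = (\{0,1\}^n, \mathcal D_n)$ whose orthogonal rank is exponentially large while admitting an efficient two-round quantum protocol. Since the one-round quantum complexity of a promise equality problem equals $\lceil \log \xi(G_n)\rceil$, a bound $\xi(G_n) \geq 2^{cn}$ immediately yields the first bullet. To obtain this, I would build $\mathcal D_n$ so that $G_n$ contains a clique of size $2^{\Omega(n)}$, which forces $\xi(G_n) \geq 2^{\Omega(n)}$. A natural candidate is a Cayley-type graph on $\F_2^n$ whose connection set is chosen so that cosets of a structured subset of dimension $\Omega(n)$ form the desired cliques; alternatively, one could invoke a global spectral obstruction via $\thplus(G_n)$ or a direct dimension-counting argument tailored to the structure of $\mathcal D_n$.

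\medskip

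For the two-round upper bound, I would design an exact protocol in the spirit of the Buhrman--Cleve--Wigderson protocol for the promise ``$x=y$ or $x$ and $y$ have Hamming distance exactly $n/2$.'' The crucial advantage of having two rounds is that Bob can first send $O(\log n)$ qubits that commit him to a ``direction'' (a coordinate, a coset representative, or a codeword) depending on his input $y$; Alice then, knowing this direction, applies a unitary depending on $x$ that prepares either a fixed target state (when $x=y$) or a state orthogonal to it (when $\{x,y\}\in\mathcal D_n$), which she sends back in $O(\log n)$ qubits for Bob to measure in the target basis. The graph $G_n$ would be engineered so that this ``direction-then-check'' decomposition exactly matches the structure of its promise edges, while any one-round protocol, lacking access to Bob's direction, is forced to encode an entire orthonormal representation of $G_n$ of exponentially large dimension.

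\medskip

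The main obstacle is constructing a single $G_n$ that simultaneously satisfies both requirements, which pull in opposite directions: large $\xi(G_n)$ demands many pairwise-orthogonal vectors in any orthonormal representation, while the two-round upper bound demands that an extra round essentially trivializes the problem by letting Bob reveal just enough structure to reduce the residual task to an $O(\log n)$-qubit BCW-type check. A secondary difficulty is proving \emph{exactness} of the protocol, as opposed to merely bounded error: the final measurement must distinguish the equal and promise-edge cases with probability exactly $1$, which typically requires the algebraic precision of a Hadamard or Reed--Muller encoding in the last step, rather than the generic fingerprinting that suffices in the bounded-error setting.
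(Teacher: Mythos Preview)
Your primary lower-bound idea---forcing $\xi(G_n)\ge 2^{\Omega(n)}$ by planting a clique of size $2^{\Omega(n)}$---is fatally incompatible with the upper bound you are trying to prove. If $G_n$ contains a clique~$K$ of size $M=2^{\Omega(n)}$, then the promise equality problem restricted to inputs in~$K$ is exactly the (unrestricted) equality function on~$[M]$, whose communication matrix is the $M\times M$ identity and hence has rank~$M$. The log-rank lower bound for \emph{exact} quantum protocols (with any number of rounds) then forces $\Omega(\log M)=\Omega(n)$ qubits of communication, so no $O(\log n)$ two-round protocol can exist for such a~$G_n$. In other words, a large clique is not a one-round obstruction; it is an all-rounds obstruction, and so it cannot separate one round from two.

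What the paper actually does is pick a graph---$H(n,n/4)$, the Cayley graph on $\F_2^n$ with connection set the weight-$n/4$ strings---whose clique number is only polynomial in~$n$ (an equidistant binary code at distance $n/4$ has at most $O(n)$ words), so the clique bound is useless. The exponential lower bound on $\xi$ comes instead from the Lov\'asz theta number: one shows $\xi(G)\ge\vartheta(\overline G)$, expresses $\vartheta(\overline{H(n,n/4)})$ via the eigenvalues of $H(n,n/4)$ using vertex- and edge-transitivity, and then bounds the smallest eigenvalue using Krawtchouk polynomials and Levenshtein's root estimate. Your passing mention of a ``spectral obstruction via $\thplus$'' is in the right spirit, but it is the essential mechanism, not an alternative. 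For the upper bound, the paper does not use a BCW/Deutsch--Jozsa-style check (that works for distance $n/2$, not $n/4$); it runs a distributed single-iteration Grover search on $z=x\oplus y$: Bob sends the uniform superposition after applying $U_y$, Alice applies $U_x$ and the diffusion operator, measures to obtain an index $i^\star$ with $x_{i^\star}\ne y_{i^\star}$ whenever $x\ne y$, and sends $(i^\star,x_{i^\star})$ back. Exactness holds because Grover with a single iteration is exact precisely when the marked fraction is $1/4$.
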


During our analysis of the particular promise problem used for Theorem~\ref{thm:promise-general} we answer an open question of Gruska, Qiu, and Zheng~\cite{Gruska:2014}. 
To explain this, we briefly elaborate on what goes into our result.
The problem we consider is simple: Let~$n$ be a positive integer multiple of~$8$. Alice and Bob are given $n$-bit strings~$x$ and~$y$, respectively, that are either equal or differ in exactly~$n/4$ coordinates and they must distinguish between the two cases.
We denote this problem by \peq-$\binom{n}{n/4}$. 
Similar promise equality problems were studied before.
In the above-mentioned result of Buhrman, Cleve, and Wigderson~\cite{Buhrman:1998}, which showed the first exponential gap between classical and quantum communication, they used the problem \peq-$\binom{n}{n/2}$, where Alice and Bob get $n$-bit strings that are either equal or differ in exactly half of the entries (for $n$ a multiple of~4). 
They used a distributed version of the Deutsch--Jozsa algorithm to give a one-round $O(\log n)$-qubit quantum protocol, while a celebrated combinatorial result of Frankl and R\"{o}dl~\cite{Frankl:1987} implies that the classical communication complexity is at least~$\Omega(n)$.
Similar results were shown (based on similar techniques) in the above-mentioned paper~\cite{Gruska:2014} for the analogous problem \peq-$\binom{n}{ \alpha n}$ when $\alpha >1/2$, and the authors pose as an open problem to determine the quantum communication complexity of \peq-$\binom{n}{ \alpha n}$ when~$\alpha < 1/2$.
An easy observation is that the problem \peq-$\binom{n}{d}$ where $n$ and $d$ have different parities is trivial: Alice has just to send the parity bit of her string to Bob. For this reason, in this paper and in the above-mentioned works both $n$ and $d$ are assumed to be even numbers. 

To prove Theorem~\ref{thm:promise-general}, 
we show that the one-round quantum communication complexity of \peq-$\binom{n}{n/4}$ is at least~$\Omega(n)$ and we give a two-round protocol for it that uses at most~$O(\log n)$ qubits.
For the proof of the first bound we use the famous Lov\'asz theta number, which lower bounds the orthogonal rank and therefore the one-round quantum communication complexity.
We prove a lower bound on the theta number using the theory of association schemes and known properties of the roots of the Krawtchouk polynomials.
Our two-round protocol is based on a distributed version of Grover's  algorithm.
With a little extra technical work our results can be extended to any of the problems \peq-$\binom{n}{ \alpha n}$ with constant $\alpha < 1/2$.
In light of Kremer's Theorem and the obvious fact that the one-round classical communication complexity is at least  its quantum counterpart, we thus settle the question of~\cite{Gruska:2014}.

\subsection{The list problem}
In the \emph{list problem}, inputs are picked from a subset~$\mathcal D\subseteq \mathcal X\times\mathcal Y$ and the goal is for Bob to learn Alice's input.
The reason for the name ``list problem'' is that
Bob's input~$y$ may just as well be given to him as the list (subset) of all of Alice's possible inputs~$x$ satisfying~$(x,y)\in\mathcal D$.
A list problem can thus equivalently be given by a family~$\mathcal L\subseteq  2^{\mathcal X}$  of lists, where Bob gets a list~$L\in \mathcal L$, Alice gets an element~$x\in L$, and Bob must learn~$x$.
We refer to this communication problem as~{\sc $\mathcal L$-list}.\\

\noindent{\bf Classical communication complexity of list problems.}
Witsenhausen~\cite{Witsenhausen:1976} observed that the one-round classical communication complexity of the list problem is characterized by the chromatic number of the graph with vertex set~$\mathcal X$ and whose edge set consists of the pairs of distinct elements appearing together in some list~$L\in \mathcal L$.
Denoting this graph by~$G_\mathcal L$, the one-round communication complexity equals~$\ceil{\log\chi(G_\mathcal L)}$.
The multi-round communication complexity of the list problem has also been studied.
Orlitsky~\cite[Corollary~3 and Lemma~3]{Orlitsky:1990} proved the following lower bound in terms of the chromatic number of~$G_\mathcal L$, and the cardinality of the largest list, denoted $$\omega(\mathcal L)= \max\{|L|\st L\in\mathcal L\}$$ (not to be confused with the cardinality of the largest clique~$\omega(G_\mathcal L)$, which can be larger).

\begin{theorem}[Orlitsky]\label{thm:lb-classical}
For every family~$\mathcal L\subseteq 2^{\mathcal X}$, the classical communication complexity of {\sc $\mathcal L$-list} is at least $\max\{ \log \log \chi(G_\mathcal L), \log \omega(\mathcal L) \}$.
\end{theorem}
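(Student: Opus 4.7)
The plan is to prove the two lower bounds in the maximum separately, each by a direct counting argument on the protocol tree of any $c$-bit protocol for {\sc $\mathcal L$-list}.

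For the $\log \omega(\mathcal L)$ bound I would fix Bob's input to a list $L \in \mathcal L$ with $|L| = \omega(\mathcal L)$. As Alice's input ranges over $L$, Bob must output the correct value, and his output depends only on his (fixed) input and the transcript of exchanged bits; hence the $|L|$ distinct inputs of Alice must induce at least $|L|$ distinct transcripts. Since a $c$-bit protocol produces at most $2^c$ transcripts, this yields $c \geq \log \omega(\mathcal L)$.

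For the $\log \log \chi(G_\mathcal L)$ bound I would extract a proper coloring of $G_\mathcal L$ from the protocol tree itself. The tree is binary of depth at most $c$, so has at most $A \leq 2^c - 1$ internal nodes, each designated as either an Alice-node or a Bob-node. At each Alice-node $v$, determinism forces Alice's outgoing bit to be a function $f_v : \mathcal X \to \{0,1\}$ of her input alone (well-defined on all of $\mathcal X$, regardless of whether a given input would actually steer the execution through $v$). Define the \emph{Alice profile} $\sigma(x) = (f_v(x))_v \in \{0,1\}^A$, a map that takes at most $2^{2^c}$ distinct values. The remaining step is to verify that $\sigma$ is a proper coloring of $G_\mathcal L$: if two inputs $x_1 \neq x_2$ lying in a common list $L$ satisfied $\sigma(x_1) = \sigma(x_2)$, then running the protocol with Bob's input fixed to $L$ would produce identical bits from Alice at each of her nodes, hence an identical full transcript, hence identical Bob output on $(x_1,L)$ and $(x_2,L)$, contradicting correctness. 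Therefore $\chi(G_\mathcal L) \leq 2^{2^c}$, which rearranges to $c \geq \log \log \chi(G_\mathcal L)$, and combining with the first bound gives the stated maximum.

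There is no serious technical obstacle; the proof is a structural analysis of binary protocol trees. The one conceptual point worth emphasizing is the source of the double logarithm: indexing the coloring by a single transcript would give only $\chi(G_\mathcal L) \leq 2^c$ and thus the weaker bound $c \geq \log \chi(G_\mathcal L)$, whereas indexing by the \emph{entire} Alice strategy profile (a bit for each of up to $2^c - 1$ Alice-nodes) is what yields the doubly-exponential $\chi(G_\mathcal L) \leq 2^{2^c}$ and hence the double log in the bound.
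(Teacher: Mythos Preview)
Your proof is correct and matches the paper's approach. The paper does not give a detailed proof of this theorem (it is cited as Orlitsky's result), but it sketches exactly your idea: ``any multi-round protocol can be simulated by a one-round protocol with at most an exponential difference in communication, and Alice must send sufficient information for Bob to be able to distinguish among $\omega(\mathcal L)$ elements.'' Your Alice-profile $\sigma(x)$ is precisely the message Alice would send in that one-round simulation, and your verification that $\sigma$ properly colors $G_{\mathcal L}$ is Witsenhausen's observation specialized to this simulated protocol. So this is an explicit unpacking of the same argument.

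One minor point: your closing conceptual remark is garbled. You write that indexing by a single transcript ``would give only $\chi(G_\mathcal L) \leq 2^c$ and thus the weaker bound $c \geq \log \chi(G_\mathcal L)$,'' but $c \geq \log\chi(G_\mathcal L)$ would be a \emph{stronger} conclusion than $c \geq \log\log\chi(G_\mathcal L)$, not a weaker one. The real reason the single-transcript coloring fails is not that it gives a weaker bound but that it is not well-defined: the transcript depends on Bob's input $L$ as well as on $x$, so there is no map $x \mapsto \text{transcript}$ to serve as a coloring. The Alice profile fixes this by recording Alice's behavior at \emph{every} node of the tree, independent of which path Bob's input would actually select---and that is where the extra exponential, and hence the double logarithm, comes from. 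This does not affect the correctness of your main argument.
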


The basic idea behind the above result is that any multi-round protocol can be simulated by a one-round protocol with at most an exponential difference in communication, and that Alice must send sufficient information for Bob to be able to distinguish among $\omega(\mathcal L)$ elements.
In the same work,
Orlitsky~\cite[Theorem 4]{Orlitsky:1990} gave a two-round classical protocol based on perfect hashing functions that nearly achieves the above lower bound.

\begin{theorem}[Orlitsky]\label{thm:2-rounds}
For every family~$\mathcal L\subseteq 2^{\mathcal X}$, the two-round classical communication complexity of {\sc $\mathcal L$-list} is at most ${\log \log \chi(G_\mathcal L) + 3 \log \omega(\mathcal L)+ 4}$.
\end{theorem}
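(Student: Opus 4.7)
The plan is to construct a two-round protocol based on \emph{perfect hash families}, layered on top of the one-round chromatic-number protocol. Write $\chi = \chi(G_\mathcal L)$ and $\omega = \omega(\mathcal L)$, and fix a proper coloring $c \colon \mathcal X \to [\chi]$ of $G_\mathcal L$. Observe that any list $L \in \mathcal L$ forms a clique in $G_\mathcal L$, so $c$ is injective on $L$ and $c(L) \subseteq [\chi]$ has size at most $\omega$. I want a family of hash functions $h_1, \ldots, h_t \colon [\chi] \to [k]$ with the property that for every subset $S \subseteq [\chi]$ with $|S| \le \omega$ some $h_i$ is injective on $S$. The protocol will then use one round for Bob to identify such an index for $S = c(L)$ and one round for Alice to transmit $h_i(c(x))$.

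The first step is to establish the existence of such a hash family with $k = \omega^2$ and $t \le \omega \log_2 \chi + 1$ by the probabilistic method. For a fixed $S$ of size at most $\omega$, a uniformly random function $h \colon [\chi] \to [\omega^2]$ is non-injective on $S$ with probability at most $\binom{\omega}{2}/\omega^2 \le 1/2$, so for $t$ independent random functions the probability that none of them is injective on $S$ is at most $2^{-t}$. A union bound over the at most $\chi^{\omega}$ candidate sets $S$ shows that some tuple $(h_1, \ldots, h_t)$ works whenever $\chi^{\omega} \cdot 2^{-t} < 1$, which is achievable for some integer $t \le \omega \log_2 \chi + 1$.

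The protocol itself is then immediate. In the first round, Bob, who knows $L$, computes $c(L)$, selects an index $i \in [t]$ such that $h_i$ is injective on $c(L)$ (which exists by the defining property of the family), and sends $i$ to Alice using $\lceil \log_2 t \rceil$ bits. In the second round, Alice computes $h_i(c(x))$ and sends it to Bob using $\lceil \log_2 k \rceil = \lceil 2\log_2 \omega \rceil$ bits. Since $h_i$ is injective on $c(L)$ and $c$ is injective on $L$, Bob recovers $x$ as the unique $\ell \in L$ with $h_i(c(\ell)) = h_i(c(x))$. The total communication is bounded by
\[
\lceil \log_2 t \rceil + \lceil \log_2 k \rceil \le \log_2(\omega \log_2 \chi + 1) + 2\log_2 \omega + 2 \le \log_2 \log_2 \chi + 3 \log_2 \omega + 4,
\]
as required.

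The argument presents no genuine obstacle; it is a clean application of the probabilistic method for perfect hashing combined with the one-round coloring protocol. The only slightly delicate point is calibrating $k$ and $t$ so that the stated constants come out right: one wants $k$ just large enough that a random hash is injective on a fixed $\omega$-set with constant probability, and $t$ just large enough for the union bound over all $\binom{\chi}{\omega}$ bad sets to close.
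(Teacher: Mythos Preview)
The paper does not actually prove this theorem; it only cites Orlitsky and remarks that the protocol is ``based on perfect hashing functions.'' Your proof is correct and supplies precisely that argument: a probabilistic construction of a perfect hash family on the color classes, with Bob sending the hash index and Alice sending the hash value. So there is nothing to compare against beyond the one-line description, and your proposal matches it.
\medskip

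One minor remark on the arithmetic: your final inequality $\log_2(\omega\log_2\chi+1)+2\log_2\omega+2\le\log_2\log_2\chi+3\log_2\omega+4$ tacitly assumes $\omega\log_2\chi\ge 1/3$, which holds whenever $\omega\ge 2$ and $\chi\ge 2$. The degenerate cases $\omega=1$ (lists are singletons) or $\chi=1$ require no communication at all, so the bound holds vacuously there; you may want to say so explicitly.
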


It thus follows from
Witsenhausen's observation and Theorem~\ref{thm:2-rounds} that list problems have exponentially more efficient two-round protocols than one-round protocols, provided that $\omega(\mathcal L) \leq poly(\log \chi(G_{\mathcal L}))$.
But Theorem~\ref{thm:lb-classical} shows that---in stark contrast with the Pointer Jumping Problem---using more than two rounds cannot decrease the total amount of communication by more than a factor of~4, since obviously
${\log \log \chi(G_\mathcal L) + 3 \log \omega(\mathcal L)}
\leq
4\max\{ \log \log \chi(G_\mathcal L), \log \omega(\mathcal L) \}
$.
Furthermore, in a follow up work, Orlitsky~\cite{Orlitsky:1991} showed that in general two-round protocols are not sufficient to reach the communication complexity.
The natural question that thus arises is:
\begin{center}
\emph{Can the lower bound of Theorem~\ref{thm:lb-classical}  be attained by using more than two rounds of communication?}
\end{center}
Towards answering this question, Naor, Orlitsky, and Shor~\cite[Corollary 1]{Naor:1993} slightly improved on Theorem~\ref{thm:2-rounds} and showed that the four-round communication complexity gets to within a factor of about~3 of the lower bound.

\begin{theorem}[Naor--Orlitsky--Shor]\label{thm:4-rounds}
For every family~$\mathcal L\subseteq 2^{\mathcal X}$, the four-round classical communication complexity of {\sc $\mathcal L$-list} is at most $\log \log \chi(G_\mathcal L) + 2 \log \omega(\mathcal L) + 3 \log \log \omega(\mathcal L)+ 7$.
\end{theorem}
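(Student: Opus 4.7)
The plan is to modify Orlitsky's two-round protocol (Theorem~\ref{thm:2-rounds}) so that the perfect hash family used in the first exchange can be drastically smaller, and then to use two additional rounds to resolve the residual ambiguity left behind. Fix a proper coloring $c:\mathcal X\to[\chi]$ with $\chi=\chi(G_\mathcal L)$; for any list $L\in\mathcal L$ the image $c(L)$ is an $\omega$-subset of $[\chi]$ with $\omega=\omega(\mathcal L)$, so recovering Alice's input reduces to transmitting $c(x)\in c(L)$. In Orlitsky's scheme, Bob sends an index of a hash $h:[\chi]\to[m]$ that is injective on $c(L)$ and Alice replies with $h(c(x))$; any such family requires $m\ge\Omega(\omega^2)$ by a birthday argument and the smallest known families then have size $\Theta(\omega\log\chi)$, giving the dominant $3\log\omega$ term.

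To shave one factor of $\log\omega$, in round~1 Bob instead picks $h:[\chi]\to[m]$ with $m=O(\omega^2)$ from a much smaller family that is guaranteed only to be \emph{almost} injective on $c(L)$: every preimage in $c(L)$ has size at most $t=O(\log\omega)$. A $k$-wise independent construction with $k=O(\log\omega)$ produces such a family of size $\log\chi\cdot\mathrm{poly}(\log\omega)$, so the description costs $\log\log\chi+O(\log\log\omega)$ bits. In round~2 Alice sends $h(c(x))$, costing $2\log\omega+O(1)$ bits, after which Bob knows that $x$ lies in a residual sublist $L'\subseteq L$ of size at most $t$. In rounds~3 and~4 Bob and Alice reapply Orlitsky's exact two-round scheme with universe $L'$ in place of $L$; because $|L'|=O(\log\omega)$, these two rounds together contribute only $O(\log\log\omega)$ bits. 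Summing the four messages yields $\log\log\chi+2\log\omega+O(\log\log\omega)$, which recovers the stated bound once the constants in the PHF existence theorems and in Theorem~\ref{thm:2-rounds} are tracked.

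The main technical obstacle is verifying the existence of the almost-injective hash family in round~1 with the claimed description length. Concretely, one has to show that for every $\omega$-subset $S\subseteq[\chi]$, a uniformly random $k$-wise independent $h$ into $[m]=[O(\omega^2)]$ has maximum bin-load at most $t$ on $S$ with positive probability; this follows from a Chernoff-type concentration bound on the load of each bin together with a union bound over the $m$ bins, and standard derandomization then yields a family of the required size. Once this lemma is in place, everything else is a direct reapplication of Theorem~\ref{thm:2-rounds} to the residual universe, so the heart of the proof is a single combinatorial statement about small nearly-perfect hash families.
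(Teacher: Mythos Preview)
First, note that the paper does not prove Theorem~\ref{thm:4-rounds}; it is quoted from~\cite{Naor:1993} without argument, so there is no in-paper proof to compare against. I can therefore only evaluate your attempt on its own terms, and it has two genuine gaps, both exactly where the saving over Theorem~\ref{thm:2-rounds} is supposed to occur.

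\textbf{Round 1.} You claim that a $k$-wise independent construction with $k=O(\log\omega)$ yields a family of maps $[\chi]\to[O(\omega^2)]$ of size $\log\chi\cdot\mathrm{poly}(\log\omega)$, hence describable in $\log\log\chi+O(\log\log\omega)$ bits. This is not what $k$-wise independence gives you. Specifying a single $k$-wise independent hash on a domain of size $\chi$ already takes $\Omega(k\log\chi)=\Omega(\log\omega\cdot\log\chi)$ bits, which is far worse than even Orlitsky's round-1 cost. If instead you mean to extract a small \emph{covering} subfamily by the probabilistic method---choose random members until every $\omega$-set is handled---then the union bound over $\binom{\chi}{\omega}$ sets forces the subfamily to have size $\Omega\!\big(\omega\log\chi/\log(1/p)\big)$, where $p$ is the probability that one hash has a bin of load exceeding~$t$ on a fixed $\omega$-set. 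With $m=O(\omega^2)$ and $t=O(\log\omega)$ one has only $\log(1/p)=O(\log^2\omega)$, so the subfamily has size $\Omega(\omega\log\chi/\log^2\omega)$ and round~1 costs at least $\log\omega+\log\log\chi-O(\log\log\omega)$. Either way, your first message is already as expensive as Orlitsky's, and no saving has been achieved.

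\textbf{Rounds 3--4.} You say that reapplying Theorem~\ref{thm:2-rounds} to the residual list $L'$ costs only $O(\log\log\omega)$ because $|L'|=O(\log\omega)$. But Alice does not know $L$ or $L'$; after rounds 1--2 she knows only $x$ and the hash~$h$, so the perfect-hash family Bob draws from in round~3 must still act on the full domain $[\chi]$. Theorem~\ref{thm:2-rounds} therefore charges $\log\log\chi(G_{\mathcal L'})+3\log|L'|+O(1)$, and you have given no reason why $\chi(G_{\mathcal L'})$ is smaller than $\chi(G_{\mathcal L})$, so a second $\log\log\chi$ term appears. Eliminating it requires the first exchange to shrink the effective \emph{universe}, not merely the list, to size $\mathrm{poly}(\omega)$ before the last two rounds are run---and that reduction is precisely the piece of the argument your sketch does not supply.
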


As our contribution to this line of work we show that, perhaps surprisingly, for some list problems the four-round protocol of Naor, Orlitsky, and Shor is in fact asymptotically optimal, thus answering the above question in the negative.

\begin{theorem}\label{thm:list-general}
For any~$\eps >0$ there exists a set~$\mathcal X$ and a family~$\mathcal L\subseteq 2^{\mathcal X}$ such that the classical communication complexity of {\sc $\mathcal L$-list} is at least $\log\log \chi(G_\mathcal L) + (2-\eps)\log\omega(\mathcal L)$.
Moreover, there exists such an~$(\mathcal X,\mathcal L)$ pair for which~$\omega(\mathcal L) = \log\chi(G_\mathcal L)$.
\end{theorem}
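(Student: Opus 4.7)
The plan is to construct, for each $\eps>0$, a list problem whose classical communication complexity meets the $\log\log\chi + 2\log\omega$ upper bound of Theorem~\ref{thm:4-rounds} up to an $\eps\log\omega$ slack. The natural candidate is $\mathcal X = [N]$ with $\mathcal L = \binom{[N]}{k}$, the collection of all $k$-element subsets of $[N]$. Since every pair of elements of $[N]$ co-occurs in some $k$-subset whenever $k\ge 2$, the graph $G_{\mathcal L}$ is the complete graph $K_N$, so $\chi(G_{\mathcal L})=N$ and $\omega(\mathcal L)=k$. For the ``Moreover'' clause, taking $N=2^k$ gives $\omega(\mathcal L)=k=\log\chi(G_{\mathcal L})$, and the target lower bound reduces to $(3-\eps)\log k$.

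The lower bound is obtained by extracting a perfect-hash-family structure from any protocol and invoking two classical bounds on such families. In the two-round case with Alice sending $a$ bits and Bob sending $b$ bits, each Bob message $\beta\in\{0,1\}^b$ determines an Alice response $f_\beta:[N]\to\{0,1\}^a$ which must be injective on every $L\in\mathcal L$ assigned to $\beta$; the resulting family $\{f_\beta\}_\beta$ is then a $(k,k)$-perfect hash family of size $2^b$ and range $2^a$ covering $\binom{[N]}{k}$. Two bounds apply: an information-theoretic bound $a\cdot 2^b\ge \log N$, since otherwise two elements of $[N]$ agree under every $f_\beta$ and cannot be separated within a common list; and a counting bound $2^b\gtrsim e^{k^2/(2\cdot 2^a)}$, because the fraction of $k$-subsets of $[N]$ on which a single $f_\beta$ is injective is at most $\prod_{i=0}^{k-1}(1-i/2^a)$. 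Optimising $a+b$ against these two bounds and the trivial $a\ge\log k$ yields $a+b\ge 3\log k - O(\log\log k)$, which exceeds $(3-\eps)\log k$ once $k$ is large enough in terms of $\eps$.

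The main obstacle is promoting the two-round analysis to general multi-round protocols. One can still define, for each complete Bob transcript $\beta\in\{0,1\}^b$, the effective Alice function $g_\beta:[N]\to\{0,1\}^a$ equal to Alice's total transcript under the assumption that Bob sends exactly $\beta$, and the information-theoretic bound $a\cdot 2^b\ge \log N$ still holds because identical hash vectors $(g_\beta(u))_\beta=(g_\beta(v))_\beta$ force identical protocol runs on any shared list containing $u$ and $v$. The counting bound, however, becomes more delicate: each list $L$ is now hashed collectively, with different $g_\beta$'s handling disjoint slices of $L$, so a naive counting only recovers $a+b\ge 2\log k$. Recovering the full $(3-\eps)\log k$ bound for multi-round protocols requires either strengthening the counting argument by accounting jointly for the hashing work done by all $g_\beta$ on each list, or refining the construction (for instance by replacing $\binom{[N]}{k}$ with a structured sub-family whose intersection pattern forces multi-round protocols into an essentially two-round form). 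Verifying this last step, and absorbing the resulting $O(\log\log k)$ slack into $\eps$, is the technical crux of the proof.
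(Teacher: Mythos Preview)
Your choice of instance $\mathcal L=\binom{[N]}{k}$ is exactly the one the paper uses, and your two-round analysis via perfect hash families is essentially correct (and in fact recovers the paper's bound $\log\log N + 2\log k - \log\log k - O(1)$ for general $N$, not just $N=2^k$). The genuine gap is precisely where you flag it: the multi-round case. Your proposed fixes---a sharper joint counting over the $g_\beta$, or restricting to a structured sub-family of $\binom{[N]}{k}$---are not concrete, and you yourself note that the naive version only yields $a+b\ge 2\log k$. So as written the proposal does not prove the theorem.

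The paper closes this gap with a different parametrization. Instead of splitting the communication into Alice-bits $a$ and Bob-bits $b$ and extracting a perfect hash family indexed by Bob's transcript, it works with the full transcript set $\mathcal T$ directly: for each $x\in[N]$ let $F_x\subseteq\mathcal T$ be the set of transcripts consistent with Alice holding $x$ (i.e.\ those $T$ in which Alice's messages are exactly what she would send on input $x$ given the preceding messages). For any $k$ inputs $x_0,\dots,x_{k-1}$ forming a list $L$, the actual transcript $T_{x_0,L}$ lies in $F_{x_0}$ but in none of $F_{x_1},\dots,F_{x_{k-1}}$ (else Bob could not distinguish $x_0$ from some $x_i$ on input $L$), so $\{F_x\}_{x\in[N]}$ is a $(k-1)$-cover-free family over $\mathcal T$. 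The D\'yachkov--Rykov bound then gives $|\mathcal T|\ge c(k-1)^2\log N/\log(k-1)$, and since $|\mathcal T|\le 2^{C+1}$ this yields $C\ge \log\log N + 2\log(k-1)-\log\log(k-1)-O(1)$ in one shot, with no round-by-round case analysis. Your two bounds in the two-round case are the cover-free-family bound seen through the Alice/Bob split; the point is that dropping that split and invoking D\'yachkov--Rykov on whole transcripts is what makes the argument round-oblivious.
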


In particular, our result gives a family of list problems with communication complexity at least $(3-\eps)\max\{\log\log\chi(G_\mathcal L), \log\omega(\mathcal L)\}$ for any~$\eps>0$.\\

\noindent{\bf Quantum communication complexity of list problems and quantum round elimination.}
The one-round quantum communication complexity of list problems is given by~$\ceil{\log\xi(G_{\mathcal L})}$, which follows from the same considerations as for the promise equality problems (see Lemma~\ref{lem:list-1-quantum}).
Based on a conjecture we make about the orthogonal rank of a certain family of graphs, we believe that in the context of quantum communication complexity, list problems may have the interesting property of resisting a quantum analogue of \emph{round elimination}.

In classical communication complexity, round elimination reduces the number of rounds of a given protocol by having the parties send some extra information instead.
Consider the following basic example, where we start with a two-round $(\log n+1)$-bit protocol in which Bob starts by sending Alice a single bit and Alice replies with an $\log n$-bit string.
This protocol can easily be turned into a \emph{one-round} $2\log n$-bit protocol by having Alice directly send Bob two $\log n$-bit strings, one corresponding to the case where Bob sends a~0 in the two-round protocol and another for if he sends a~1.
Then Bob can just pick the string corresponding to the bit he would have sent based on his input and solve the problem.

A quantum analogue of the above example would turn a two-round $(\log n+1)$-qubit protocol into a one-round~$2 \log n$-qubit protocol.
We conjecture that the following family of list problems is a counterexample to the existence of such an analogue.
For an even positive integer~$n$ and~$d\in[n]$, let $\mathcal L_d \subseteq 2^{\bset{n}}$ be the family of lists~$L\subseteq \bset{n}$ of maximal cardinality such that all strings in~$L$ have Hamming distance exactly~$d$.
Consider the family of lists given by~$\mathcal K = \mathcal L_{n/2}\cup\cdots\cup\mathcal L_{n}$.
Similar to the classical example above, we give a simple two-round protocol for {\sc $\mathcal K$-list}.

\begin{theorem}\label{thm:simple2rnd}
For~$\mathcal K = \mathcal L_{n/2}\cup\cdots\cup\mathcal L_{n}$, there exists a two-round protocol for {\sc $\mathcal K$-list} where Bob sends Alice a single qubit and Alice replies with a $( (\log n)+1)$-qubit message.
\end{theorem}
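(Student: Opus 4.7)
My plan is to give an explicit two-round protocol in which Bob's single qubit encodes, through its amplitude, the common pairwise Hamming distance $d$ of his list $L\in\mathcal L_d$, while Alice's reply of $(\log n)+1$ qubits phase-encodes her input $x$ coupled to that qubit. The design will be arranged so that, for any fixed $L$, the states Alice returns are pairwise orthogonal over $x\in L$, so that Bob can decode deterministically with a projective measurement in an $L$-dependent orthonormal basis.

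Concretely, in round~$1$, given $L\in\mathcal L_d$, Bob will prepare and send the single qubit
\[
\ket{\phi_d}\;:=\;\sqrt{1-\tfrac{n}{2d}}\,\ket{0}\;+\;\sqrt{\tfrac{n}{2d}}\,\ket{1},
\]
which is a bona fide unit vector precisely because $n/(2d)\in[1/2,1]$ when $d\in\{n/2,\ldots,n\}$; this is where the hypothesis $\mathcal K=\mathcal L_{n/2}\cup\cdots\cup\mathcal L_n$ enters. In round~$2$, Alice will prepare $\log n$ ancillary qubits in the uniform superposition $\tfrac{1}{\sqrt n}\sum_{i=1}^n\ket{i}$ (by Hadamards on $\ket{0^{\log n}}$), apply to Bob's qubit and the ancillas the diagonal phase unitary $U_x\colon\ket{a}\ket{i}\mapsto(-1)^{a x_i}\ket{a}\ket{i}$, and send all $(\log n)+1$ qubits back.

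After these operations Bob holds
\[
\ket{\Psi_x}\;=\;\sqrt{1-\tfrac{n}{2d}}\,\ket{0}\otimes\tfrac{1}{\sqrt n}\sum_i\ket{i}\;+\;\sqrt{\tfrac{n}{2d}}\,\ket{1}\otimes\ket{\psi_x},
\]
where $\ket{\psi_x}=\tfrac{1}{\sqrt n}\sum_i(-1)^{x_i}\ket{i}$. The verification reduces to one inner-product calculation: using $\braket{\psi_x}{\psi_y}=1-2d/n$ for any $x,y\in L$ at distance exactly~$d$, one obtains
\[
\braket{\Psi_x}{\Psi_y}\;=\;\Bigl(1-\tfrac{n}{2d}\Bigr)\cdot 1\;+\;\tfrac{n}{2d}\,\Bigl(1-\tfrac{2d}{n}\Bigr)\;=\;0.
\]
Hence $\{\ket{\Psi_x}\}_{x\in L}$ is an orthogonal family in the $2n$-dimensional message space, and Bob---who knows $L$---extends it to an orthonormal basis of that space and measures in this basis to recover $x$ with certainty.

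The only genuine design choice is the amplitude $\sqrt{n/(2d)}$, which is forced by the requirement that the positive overlap coming from the $\ket{0}$ branch exactly cancel the non-positive overlap $1-2d/n\leq 0$ coming from the $\ket{\psi_x}$'s in the $\ket{1}$ branch; such a cancellation is realizable with a real amplitude iff $n/(2d)\leq 1$, i.e., iff $d\geq n/2$. Thus no real obstacle arises beyond this amplitude choice and the one-line inner-product computation above.
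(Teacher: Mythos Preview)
Your proposal is correct and follows essentially the same approach as the paper: Bob's single qubit encodes the amplitude $\sqrt{1-n/(2d)}$ determined by the common distance $d$ of his list, Alice conditionally phase-encodes her input onto a uniform superposition over $[n]$, and the resulting states $\ket{\Psi_x}$ are pairwise orthogonal over $x\in L$, so Bob can decode by the Orthogonality Lemma. The only cosmetic difference is that the paper keeps the $\ket{0}$-branch at $\ket{0}^{\otimes\ell}$ (via a controlled superposition unitary) while you put the uniform superposition in both branches before the controlled phase; both yield a fixed, $x$-independent $\ket{0}$-branch and the same inner-product cancellation.
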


It is easy to see that the graph~$G_\mathcal K = (\bset{n}, E)$ associated with~$\mathcal K$ has edge set~$E$ given by all pairs of strings with Hamming distance in~$\{n/2, \dots, n\}$.
In Appendix~\ref{sec:app-orth} we show that $\Omega(n) \leq \xi(G_{\mathcal K}) \leq O(2^{0.81 n})$.
We conjecture, however, that the orthogonal rank of the graph~$G_{\mathcal K}$ satisfies a much stronger lower bound.\footnote{This conjecture has since been confirmed by Bri\"et and Zuiddam~\cite{Briet:2017}.}

\begin{conjecture}
The graph~$G_\mathcal K$ as above satisfies $\xi(G_\mathcal K) \geq n^{\omega(1)}$.
\end{conjecture}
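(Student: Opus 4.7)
The plan is to leverage the double symmetry of $G_\mathcal K$ --- a Cayley graph on $(\mathbb{F}_2^n, +)$ with connection set $\{z : |z| \geq n/2\}$, simultaneously invariant under the coordinate-permutation action of $S_n$ --- to reduce the orthogonal-rank problem to an analytic question in the Bose--Mesner algebra of the binary Hamming scheme. The natural first step is the Schur-product lower bound: given any orthogonal representation $v : \{0,1\}^n \to \mathbb{C}^d$, the Gram matrix $A_{xy} = \langle v_x, v_y\rangle$ is PSD with unit diagonal and vanishes on edges, so by the Schur product theorem $B := A \odot \bar{A}$ is also PSD with $B_{xx}=1$, $B_{xy}=0$ on edges, and $B_{xy} \in [0,1]$. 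Hence $B/2^n$ is feasible for the Lov\'asz-theta SDP of $G_\mathcal K$, and the estimate $\sum_{x,y} B_{xy} = \mathrm{tr}(A^2) = \mathrm{tr}((VV^*)^2) \geq 2^{2n}/d$ (from $VV^*$ having trace $2^n$ and rank at most $d$) yields the inequality $\xi(G_\mathcal K) \geq 2^n/\vartheta(G_\mathcal K)$.

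Unfortunately, vertex-transitivity gives Lov\'asz's identity $\vartheta(G_\mathcal K)\,\vartheta(\bar{G}_\mathcal K) = 2^n$, and independent sets in $\bar{G}_\mathcal K$ are binary codes of minimum distance at least $n/2$, so the Delsarte LP / Plotkin bound forces $\vartheta(\bar{G}_\mathcal K) \leq 2n$ and hence $\vartheta(G_\mathcal K) \geq 2^{n-1}/n$. The Schur-product approach thus only recovers the $\Omega(n)$ bound of Appendix~\ref{sec:app-orth}. Proving the conjecture amounts to breaking this Lov\'asz--Plotkin barrier, for which I would attempt to exploit two additional structural features of $B$: (i) the entry-wise non-negativity $B \geq 0$, which promotes feasibility to Schrijver's tighter $\vartheta^+$, and (ii) the fact that $B$ is actually a Hadamard square of a low-rank PSD matrix rather than an arbitrary feasible point. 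Averaging over $\mathbb{F}_2^n \rtimes S_n$ reduces the strengthened SDP to a Krawtchouk linear program over $g : \{0,\ldots,n\} \to \mathbb{R}_{\geq 0}$ with $g(0) = 1$, $g(k) = 0$ for $k \geq n/2$, and $\sum_k g(k) K_k(j) \geq 0$ for all $j$; the target would be to establish $\sum_k g(k)\binom{n}{k} \leq 2^n/n^{\omega(1)}$.

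The main obstacle is precisely this Krawtchouk-uncertainty step. The ordinary Delsarte LP for anticodes of diameter $<n/2$ is tight with Kleitman at $\sim 2^{H(1/4)n}$ --- much too large to give a super-polynomial bound --- so breaking the barrier requires that the additional entry-wise positivity, combined with sharp root-location estimates for the partial Krawtchouk sums $\sum_{k \geq n/2} K_k(\cdot)$, cut the LP value by a super-polynomial factor. Such estimates are in the same family as those the paper uses to lower-bound Lov\'asz theta for $\peq\text{-}\binom{n}{n/4}$, but considerably more delicate; my suspicion is that a purely SDP-based attack will not suffice, and that the resolution will instead proceed through a fundamentally algebraic route --- such as a tensor- or border-rank lower bound for the Hamming-distance indicator tensor --- as in the Bri\"et--Zuiddam follow-up cited in the footnote.
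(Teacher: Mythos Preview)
This statement is a \emph{conjecture} in the paper; there is no proof to compare against. The paper records only the partial results of Appendix~\ref{sec:app-orth}: a Kleitman-based covering argument giving $\xi(G_\mathcal K) \geq \Omega(n)$, and Proposition~\ref{prop:thetaG_{k}} showing $\vartheta'(\overline{G_\mathcal K}) \leq 2n$, which the authors explicitly interpret as evidence that theta-type SDP bounds cannot go beyond linear. Your proposal is not a proof either --- you say so yourself --- but an exploration that correctly identifies and hits exactly this barrier. Your Schur-product argument is sound and, once combined with the vertex-transitive identity $\vartheta(G)\,\vartheta(\overline G)=|V|$, amounts precisely to the bound $\xi(G_\mathcal K) \geq \vartheta(\overline{G_\mathcal K})$ of Lemma~\ref{lem:xitheta}; so the first paragraph of your plan recovers nothing beyond what the paper already has.

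One small slip: the Delsarte/Plotkin argument bounds Schrijver's $\vartheta'(\overline{G_\mathcal K})$ by~$2n$, not $\vartheta(\overline{G_\mathcal K})$. Since $\vartheta' \leq \vartheta$, this does not directly yield $\vartheta(\overline{G_\mathcal K}) \leq 2n$, and the paper explicitly says it does not know whether $\vartheta(\overline{G_\mathcal K}) = O(n)$ (though numerical evidence suggests it). This does not change your overall conclusion --- even a polynomial value of $\vartheta(\overline{G_\mathcal K})$ would fall short of $n^{\omega(1)}$, and your strengthening~(i) via entry-wise nonnegativity is exactly what Proposition~\ref{prop:thetaG_{k}} rules out. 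Your closing instinct that the resolution must be algebraic rather than SDP-based is borne out by the Bri\"et--Zuiddam result cited in the footnote, which indeed proceeds via tensor (sub)rank methods rather than anything in the Bose--Mesner/Krawtchouk framework you outline.
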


By the relation between the one-round quantum communication complexity of list problems and the orthogonal rank of their associated graphs, it follows that the validity of the above conjecture would imply that the exact one-round quantum communication complexity of the above problem is super-logarithmic in~$n$, in marked contrast with the classical example of round elimination.

\subsection{Connections to other work}
Our work strengthens a link between communication complexity and graph theory established by de~Wolf~\cite{deWolf:PhD}.
Orthonormal representations appear in the context of zero-error information theory.
Indeed they were introduced by Lov\'asz~\cite{Lovasz:1979} to settle a famous problem of Shannon concerning the (classical) capacity of the 5-cycle and they serve as proxies for entanglement-assisted schemes~\cite{Cameron:2007, Cubitt:2010, Leung:2012, Briet:2012, Briet:2015, Cubitt:2013}.
They also appear in the context of non-local games~\cite{Cameron:2007, Godsil:2008, Scarpa:2012}.
Nevertheless the orthogonal rank is poorly understood.
To the best of our knowledge, our result is the first time a {\em lower bound} on the dimension was used.
The use of the Lov\'asz theta number in the context of communication complexity problems also appears to be new and we hope that it may find further applications there in the future.
Finally, quantum variants of the chromatic number that appeared in for example non-local games~\cite{Cameron:2007,Scarpa:2012} and zero-error information theory~\cite{Briet:2015, Cubitt:2013} can be interpreted as quantum communication complexities of promise equality problems in various different communication models,  which puts those parameters in a more unified framework.\\

\paragraph{Outline.}
In Section~\ref{sec:equality} we study the promise equality problem and in particular we prove Theorem~\ref{thm:promise-general}.
In Section~\ref{sec:list} we discuss the list problem and prove Theorem~\ref{thm:list-general} and Theorem~\ref{thm:simple2rnd}.

\section{Preliminaries}

\paragraph{Graph theory.}
Throughout the paper we consider only graphs that are simple - undirected, no multiple edges and without loops. All the logarithms are in base two, unless specified otherwise.

Given a graph $G = (V,E)$, $V$ and $E$ denote the vertex and edge set respectively (equivalently denoted by $V(G)$ and $E(G)$).
The complement of a graph $G$, denoted by $\overline{G}$, has the same vertex set as the original graph and a pair of vertices is adjacent if and only if it is non adjacent in $G$.
The \emph{adjacency matrix} of a graph $G$ is the $|V(G)|\times |V(G)|$ symmetric matrix where the $i,j$-th entry is equal to 1 if $ij \in E(G)$ and to 0 otherwise.

A clique of a graph is a set of pairwise adjacent vertices. 
The \emph{clique number} $\omega(G)$ is the maximum cardinality of a clique.
An independent set is a set of pairwise non-adjacent vertices. The \emph{independence number}  $\alpha(G)$ is the largest cardinality of an independent set. 
In other words, for any graph $G$ we have $\omega(G) = \alpha(\overline{G})$.
A $t$-coloring of a graph $G$ is an assignment of $t$ colors such that adjacent vertices receive different colors.
The \emph{chromatic number} $\chi(G)$ is the minimum number $t$ such that a $t$-coloring exists. 
Equivalently, a coloring is a partition of the vertex set into indepedent sets and therefore the inequality $\chi(G)\alpha(G) \ge |V(G)|$ holds for every graph $G$.
A (complex) $d$-dimensional orthonormal representation of a graph $G$ is a map $\phi$ from the vertex set to the $d$-dimensional complex unit sphere such that adjacent vertices are mapped to orthogonal vectors.
The \emph{orthogonal rank} $\xi(G)$ is the minimum $d \in \mathbb{N}$ for which a $d$-dimensional orthonormal representation exists. 
We stress that we consider the representations over the complex sphere and not, as more usual in the combinatorial literature, over the real one and that orthogonalities are required for adjacent vertices. A simple observation gives that $\omega(G) \leq \xi(G)$.

With $H(n,d)$ we denote the graph that has $\{0,1\}^n$ as vertex set and where two $n$-bit strings are adjacent if they differ exactly in $d$ positions.
Equivalently, $H(n,d)$ is the graph with vertex set $\{-1,1\}^n$ where two vertices are adjacent if their inner product is equal to $n - 2d$.

\paragraph{Orthogonality lemma.}
We will make repeated use of the following standard lemma (see for example~\cite{Briet:2015}), which we will refer to here as the {\em Orthogonality Lemma}.

\begin{lemma}[Orthogonality Lemma]\label{lem:orth}
Let~$\rho_1,\dots,\rho_\ell\in\C^{d\times d}$ be a collection of Hermitian positive semidefinite matrices. Then the following are equivalent:
\begin{enumerate}
	\item We have $\rho_i\rho_j = 0$ for every~$i\ne j \in [\ell]$.
	\item There exists a measurement consisting  of positive semidefinite matrices~$P^1,\dots,P^\ell$, $P^\perp\in\C^{d\times d}$ such that
$\Tr(P^i\rho_j) = \delta_{ij}\Tr(\rho_j)$ and  $\Tr(P^\perp \rho_j) = 0$ for every~$i,j\in[\ell]$.
\end{enumerate} 

In particular, a collection of pure states~$\ket{\phi_1},\dots,\ket{\phi_\ell}\in\C^d$ can be perfectly distinguished with a quantum measurement if and only if they are pairwise orthogonal.
\end{lemma}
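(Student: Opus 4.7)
The plan is to prove both implications by invoking, at every step, the single elementary fact that for Hermitian positive semidefinite matrices $A,B\in\C^{d\times d}$, the trace identity $\Tr(AB)=0$ forces $AB=0$. This is because $\Tr(AB)=\Tr(\sqrt{A}B\sqrt{A})$, and $\sqrt{A}B\sqrt{A}$ is positive semidefinite with zero trace, hence equals zero; then $(\sqrt{B}\sqrt{A})^{\ast}(\sqrt{B}\sqrt{A})=\sqrt{A}B\sqrt{A}=0$ gives $\sqrt{B}\sqrt{A}=0$ and therefore $AB=0$.

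For the direction (2)$\Rightarrow$(1), I first apply this fact to $\Tr(P^i\rho_j)=0$ for each $i\ne j$ to conclude $P^i\rho_j=0$. To handle the diagonal case, I would note that the measurement completeness $I=P^{\perp}+\sum_{k}P^{k}$ combined with the hypotheses on $\Tr(P^{k}\rho_i)$ for $k\ne i$ and on $\Tr(P^{\perp}\rho_i)$ forces $\Tr((I-P^i)\rho_i)=0$. Since $I-P^i$ is positive semidefinite (it is a sum of PSD measurement operators), the same fact yields $(I-P^i)\rho_i=0$, i.e.\ $P^i\rho_i=\rho_i$. Taking adjoints and using Hermiticity gives $\rho_i=\rho_i P^i$, so for any $i\ne j$ we get $\rho_j\rho_i=\rho_j P^i\rho_i=0$ since $\rho_j P^i=(P^i\rho_j)^{\ast}=0$.

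For the direction (1)$\Rightarrow$(2), I would construct the measurement explicitly. Let $\Pi_i$ be the orthogonal projector onto the range of $\rho_i$, and set $P^i=\Pi_i$ together with $P^{\perp}=I-\sum_{i}\Pi_i$. The key observation is that because $\rho_i$ is Hermitian, its range is the orthogonal complement of its kernel; the assumption $\rho_i\rho_j=0$ for $i\ne j$ then places the range of $\rho_j$ inside the kernel of $\rho_i$, so the ranges of the $\rho_i$'s are pairwise orthogonal. Consequently $\sum_i\Pi_i\preceq I$, the operator $P^{\perp}$ is positive semidefinite, and $\{P^1,\dots,P^{\ell},P^{\perp}\}$ is a valid measurement. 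The trace identities then follow immediately: $\Pi_i\rho_i=\rho_i$ yields $\Tr(P^i\rho_i)=\Tr(\rho_i)$, orthogonality of ranges yields $\Tr(P^i\rho_j)=0$ for $i\ne j$, and completeness then gives $\Tr(P^{\perp}\rho_j)=0$.

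The \emph{in particular} statement for pure states is just the specialisation to rank-one densities $\rho_i=\ketbra{\phi_i}{\phi_i}$: pairwise orthogonality of the vectors $\ket{\phi_i}$ is literally the condition $\rho_i\rho_j=0$, and the measurement built above reduces to the projectors $\ketbra{\phi_i}{\phi_i}$ (plus the projector onto the orthogonal complement), which perfectly distinguishes the states. No step presents a real obstacle; the only subtlety worth emphasising is the traceless-PSD argument, which does all the work and must be applied twice in the (2)$\Rightarrow$(1) direction (once off-diagonal, once on-diagonal).
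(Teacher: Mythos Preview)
Your proof is correct and complete. Note, however, that the paper does not actually prove this lemma: it is stated as a standard fact with a reference to~\cite{Briet:2015}, so there is no ``paper's own proof'' to compare against. Your argument via the traceless-PSD trick (applied off-diagonally to get $P^i\rho_j=0$ and on-diagonally to get $P^i\rho_i=\rho_i$) together with the explicit construction of the measurement from the range projectors is exactly the standard proof one would expect, and every step is sound.
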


\section{Promise Equality}\label{sec:equality}

In a promise equality problem, Alice and Bob each receive an input from a set $\mathcal X$ with the promise that their inputs
either are equal or come from a subset $\mathcal D$ of $\binom{\mathcal X} {2}$ (known to the players beforehand). The goal is to distinguish between the two cases.
To any promise equality problem, we associate the graph $G= (\mathcal X, \mathcal D)$.

\subsection{General properties of promise equality}

As we mentioned earlier, the one-round classical communication complexity of the problem equals $\ceil{\log \chi(G)}$.
We begin by proving that the chromatic number of the associated graph actually gives the overall communication complexity.

\begin{lemma}\label{lem:CCeq}
For any promise equality problem, the classical communication complexity is attained with a single round of communication.
\end{lemma}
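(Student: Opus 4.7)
The plan is to take an arbitrary correct $k$-bit deterministic multi-round protocol $\Pi$ for the promise equality problem on $G = (\mathcal{X}, \mathcal{D})$ and build from it a proper coloring of $G$ using at most $2^k$ colors. This yields $k \ge \lceil \log \chi(G) \rceil$ and therefore matches the one-round upper bound already recorded in the excerpt. The coloring is the natural one: assign to each $x \in \mathcal{X}$ the complete transcript $T(x)$ that $\Pi$ produces on the ``diagonal'' input $(x,x)$. Since every such transcript is a binary string of total length at most $k$, the map $x \mapsto T(x)$ takes at most $2^k$ values.

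It remains to verify that $T$ is a proper coloring, i.e.\ that $\{x,x'\} \in \mathcal{D}$ forces $T(x) \ne T(x')$. Suppose for contradiction that $T(x) = T(x') =: T$ for some edge $\{x,x'\}$ of $G$, and run $\Pi$ on the ``off-diagonal'' input $(x,x')$. The plan is to show by induction on the round number that the transcript produced on $(x,x')$ is again $T$. In each round the next message depends only on the current speaker's input and the prefix of the transcript seen so far, which by the inductive hypothesis matches the first few messages of $T$. When Alice speaks, her input is $x$ both in $(x,x')$ and in the diagonal run $(x,x)$ that produced $T$, so she sends the same message; symmetrically, when Bob speaks, his input $x'$ matches the diagonal run $(x',x')$, which also produced $T$. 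The final output---a function of one player's input together with the transcript $T$---must therefore equal the output on a diagonal run, namely ``equal'', contradicting correctness since $\{x,x'\} \in \mathcal{D}$.

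Essentially all of the content lies in the simple induction of the previous paragraph. The only mild subtlety---and the closest thing to an obstacle---is a careful choice of communication model: one should work with protocols in canonical form, in which the identity of the next speaker and the termination point are determined by the transcript alone, so that the ``same message in the same round'' argument is well defined and the count of at most $2^k$ transcripts is justified. Once this has been pinned down, no further technicalities intervene and the lemma follows.
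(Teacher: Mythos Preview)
Your proof is correct and rests on the same key object as the paper's: the diagonal transcript $T(x):=T(x,x)$. The paper, however, packages the argument operationally rather than via a coloring: Alice simulates the entire multi-round protocol under the assumption $y=x$ and sends Bob the full simulated transcript in one shot; Bob recomputes what his own messages would have been, outputs ``not equal'' at the first discrepancy, and otherwise finishes as the original protocol would. Your route---define the coloring $x\mapsto T(x)$, use the rectangle property to show it is proper, and then invoke the $\lceil\log\chi(G)\rceil$ characterization of the one-round cost---is a clean alternative that makes the role of $\chi(G)$ explicit, whereas the paper's version has the virtue of exhibiting the one-round protocol directly without appealing to that characterization. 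The underlying content is the same; only the wrapping differs.
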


\begin{proof}
We show how to transform a $k$-round communication protocol into a one-round protocol that uses the same amount of bits.
To summarize, the idea is that Alice mimics all the rounds of communication assuming that her input is equal to Bob's one, and sends them in one-round.
He then checks whether the message received is consistent with his input. If this is not the case, then he knows that the two strings are different, otherwise he completes the protocol.

More formally, fix a protocol $P$ that requires $k$ rounds, where $k \ge 2$. 
Suppose that Alice has input $x$ and Bob has $y$.
We assume that the first round of communication is from Alice to Bob, but the same reasoning applies in the other case.
For $i$ odd, let $a_i$ be the message that Alice would send to Bob on the $i$-th round of communication if she followed protocol $P$ and used the knowledge of the messages exchanged in the previous rounds and of her input $x$. 
Similarly, for $i$ even, let $\hat{b}_i$ be the message that Bob would send to Alice on the $i$-th round of communication if he had $y=x$ as input, followed the protocol $P$ and used the knowledge derived by the previous rounds. 
Using the protocol $P$, Alice can mimic Bob's rounds of communication under the assumption that Bob's input is equal to $x$.
Alice uses her input $x$ to produce the string $a_1 \hat{b}_2 a_3 \dots a_i \hat{b}_{i+1} \dots a_k$ and sends it to Bob in one round.
From his input $y$, Bob constructs the messages $b_i$ that he would have produced during the protocol $P$, with the knowledge of Alice's messages $a_\ell$ and his messages $b_\ell$ for all $\ell<i$.
If there exists an index $i$ such that $b_i \neq \hat{b}_i$, then $x$ must be different from $y$. Otherwise, Bob uses the transcript $a_1 \hat{b}_2 a_3 \dots a_i \hat{b}_{i+1} \dots a_k$ to finish the protocol and either outputs $x = y$ or $x \neq y$.
We have constructed a one-round communication protocol $\hat{P}$ that works as the original protocol $P$ does and whose worst-case transcript length is at most as long as the one of $P$.
Therefore if $P$ is an optimal protocol, so is~$\hat{P}$.
\end{proof}

As mentioned in the introduction, de~Wolf~\cite[Theorem~8.5.2]{deWolf:PhD} showed that one-round quantum protocols are related to orthonormal representations.
We include a proof here for completeness.

\begin{theorem}[de~Wolf] \label{thm:1roundquantum}
Consider a promise equality problem defined by the sets $\mathcal X$ and $\mathcal D$, then its one-round quantum communication complexity is equal to $\ceil{\log \xi(G)}$, where $G = (\mathcal X, \mathcal D)$. 
\end{theorem}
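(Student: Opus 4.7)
The plan is to prove the two inequalities $Q^1(\mathcal X,\mathcal D)\le \lceil \log \xi(G)\rceil$ and $Q^1(\mathcal X,\mathcal D)\ge \lceil \log \xi(G)\rceil$ separately, where $Q^1$ denotes one-round quantum communication complexity. Both directions pass through the Orthogonality Lemma.

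For the upper bound, let $d=\xi(G)$ and fix an orthonormal representation $\phi\colon \mathcal X\to \mathbb C^d$. Alice on input $x$ encodes the unit vector $\phi(x)$ into $\lceil \log d\rceil$ qubits (padding the Hilbert space if $d$ is not a power of two) and sends it to Bob. On input $y$, Bob applies the two-outcome projective measurement $\{\ketbra{\phi(y)}{\phi(y)},\, I-\ketbra{\phi(y)}{\phi(y)}\}$ and outputs ``equal'' iff the first outcome occurs. If $x=y$, the first outcome occurs with probability one. If $(x,y)\in\mathcal D$, then $\phi(x)\perp\phi(y)$ by the definition of an orthonormal representation, so the second outcome occurs with probability one. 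Hence the protocol is exact and uses $\lceil \log d\rceil$ qubits.

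For the lower bound, suppose Alice and Bob have an exact one-round $q$-qubit protocol. Without loss of generality Alice's message on input $x$ is a density matrix $\rho_x$ on the Hilbert space $\mathbb C^{2^q}$, and Bob, on input $y$, applies a two-outcome POVM $\{M_y,\, I-M_y\}$, outputting ``equal'' on the first outcome. Correctness imposes $\Tr(M_y\rho_y)=1$ for all $y$ and $\Tr(M_y\rho_x)=0$ whenever $(x,y)\in\mathcal D$. The first identity, together with $0\le M_y\le I$, forces $M_y\psi=\psi$ for every $\ket\psi$ in $\mathrm{supp}(\rho_y)$; the second forces $M_y\phi=0$ for every $\ket\phi$ in $\mathrm{supp}(\rho_x)$ whenever $(x,y)\in\mathcal D$. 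Combining these yields $\braket{\psi}{\phi}=\bra{\psi}M_y\ket{\phi}=0$ for all such $\ket\psi,\ket\phi$. Picking an arbitrary unit vector $\ket{\phi_x}\in\mathrm{supp}(\rho_x)$ for each $x$ therefore produces an orthonormal representation of $G$ in dimension $2^q$, so $\xi(G)\le 2^q$, i.e.\ $q\ge \lceil\log\xi(G)\rceil$.

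The main subtlety, and the step I would write out most carefully, is the reduction from the general (mixed-state, POVM) quantum protocol to a clean orthonormal representation in the lower-bound direction: one must argue that the ``soft'' constraint $\Tr(M_y\rho_x)=\delta_{xy}$ actually forces a hard orthogonality between the supports of $\rho_x$ and $\rho_y$ whenever $(x,y)\in\mathcal D$. This is precisely the content of the Orthogonality Lemma applied with $\ell=2$ to the pair $\rho_x,\rho_y$ (once Bob's two-outcome POVM is completed to a measurement distinguishing them perfectly), so once the lemma is invoked the rest is bookkeeping. Everything else, including dealing with the ceiling and the padding of dimensions to the nearest power of two, is routine.
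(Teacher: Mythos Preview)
Your proof is correct and follows essentially the same approach as the paper: both directions hinge on the Orthogonality Lemma, with an orthonormal representation yielding a protocol and perfect distinguishability forcing orthogonality. The only cosmetic difference is that the paper's lower bound assumes without loss of generality that Alice sends pure states (a standard purification argument), whereas you work directly with mixed states and extract a vector from each support; your version is slightly more self-contained but otherwise identical in substance.
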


\begin{proof}
Let $P$ be an optimal one-round protocol for the considered promise equality problem.
Without loss of generality, Alice sends pure state $\ket{\phi_x} \in \mathbb{C}^d$ on input $x \in \mathcal X$.
For any pair $(x,y) \in \mathcal D$, $\ket{\phi_x}$ and $\ket{\phi_y}$ have to be perfectly distinguishable  and therefore, in view of Lemma~\ref{lem:orth}, they must be orthogonal.
Hence, the map $\phi : \mathcal X \to \mathbb{C}^d$ where $\phi(x) = \ket{\phi_x}$ is a $d$-dimensional orthonormal representation of $G = (\mathcal X, \mathcal D)$ and $\xi(G) \le d$.

On the other hand, let $\phi$ be a $d$-dimensional orthonormal representation of the graph $G = (\mathcal X, \mathcal D)$ and consider the one-round quantum protocol that transmits $\phi(x) \in \mathbb{C}^{d}$ on input $x \in \mathcal X$.
This uses $\log d$-qubits of communication.
From Lemma~\ref{lem:orth} we know that Bob can use his input $y$ to perform a quantum measurement that allows him to learn whether his input is equal or not to Alice's one. 
Thus, the one-round quantum communication complexity of this  problem is at most
$\ceil{\log \xi(G)}$.
\end{proof}

\subsection{Proof of Theorem~\ref{thm:promise-general}}

The rest of this section will be devoted to the proof of Theorem~\ref{thm:promise-general}, which shows that there is a family of promise equality problems where allowing two rounds of quantum communication is exponentially more efficient than a single round.
The problem that exhibits this separation is \peq-$\binom{n}{n/4}$, where Alice and Bob each receive an $n$-bit string that are either equal or differ in exactly $n/4$ positions (with $n$ a multiple of $8$).
We denote by $H(n,n/4)$ the graph associated with this problem.
We split the proof in two parts: in Section~\ref{sec:1qc} we bound the one-round quantum communication complexity and in Section~\ref{sec:2qc} we give the two-round protocol.

With a bit of extra work, we can also prove that for any of the problems \peq-$\binom{n}{\alpha n}$ where $\alpha \in (0, 1/2)$ and $n$, $\alpha n$ are even, there is a quantum multi-round protocol that is exponentially more efficient than a single round one. The lower bound on the one-round quantum communication complexity is explain in Section~\ref{sec:1qc}, while the multi-round protocol can be found in Appendix~\ref{sec:exact-grover-general}.

\subsubsection{One-round quantum communication complexity of \texorpdfstring{\sc \peq-$\binom{n}{n/4}$}{EQ-(n n/4)}}\label{sec:1qc}

The main result of this section is the following theorem, which gives the first part of Theorem~\ref{thm:promise-general}. 

\begin{theorem}\label{thm:xi-smaller12}
The one-round quantum communication complexity of \peq-$\binom{n}{n/4}$ is at least~$\Omega(n)$.
\end{theorem}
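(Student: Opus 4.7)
By Theorem~\ref{thm:1roundquantum} the one-round quantum communication complexity of \peq-$\binom{n}{n/4}$ equals $\lceil \log \xi(H(n,n/4))\rceil$, so it suffices to show $\xi(H(n,n/4)) \geq 2^{\Omega(n)}$. An orthonormal representation of a graph $G$ in the paper's sense (adjacent vertices are orthogonal) is precisely a Lov\'asz orthonormal representation of $\overline G$, which yields the standard sandwich inequality $\xi(G)\ge \thplus(\overline G)$. Hence the task reduces to proving $\thplus(\overline{H(n,n/4)}) \geq 2^{\Omega(n)}$.

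The graph $H := H(n,n/4)$ is a Cayley graph of $\F_2^n$ whose connection set is the set of weight-$n/4$ strings, and is in particular vertex- and edge-transitive. For such graphs the Hoffman (ratio) bound for $\thplus$ on the complement yields the closed form
\[
\thplus(\overline H) \;\ge\; 1+\frac{k}{|\lmin|},
\]
where $k = \binom{n}{n/4}$ is the degree and $\lmin$ is the smallest eigenvalue of the adjacency matrix of $H$. Because $H$ lies in the binary Hamming association scheme, its eigenvalues are exactly the Krawtchouk values $K_{n/4}^n(i)$ for $i\in\{0,\dots,n\}$, with multiplicities $\binom{n}{i}$. Thus $\lmin=\min_{0\le i\le n} K_{n/4}^n(i)$ and the problem becomes analytic: bound the most negative integer evaluation of $K_{n/4}^n$.

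A helpful feature of $n\in 8\N$ is that $n/4$ is even, so $K_{n/4}^n(0) = K_{n/4}^n(n) = \binom{n}{n/4} > 0$; the minimum over integers is therefore attained at an interior $i$ inside the oscillatory region of the polynomial. The plan is to locate an integer $i^\star$ close to a root of $K_{n/4}^n$ and use the standard description of the root set (the $n/4$ real roots of $K_{n/4}^n$ lie in a symmetric interval of width $\Theta(\sqrt n)$ about $n/2$) together with the Parseval-type identity $\sum_i \binom{n}{i}\, K_{n/4}^n(i)^2 = 2^n\binom{n}{n/4}$ to control the amplitude between consecutive roots. Putting these together yields an estimate of the form
\[
|K_{n/4}^n(i^\star)| \;\le\; \binom{n}{n/4}\cdot 2^{-cn}
\]
for an absolute constant $c>0$, which fed back into the ratio bound above gives $\thplus(\overline H) \ge 2^{cn}$ as required.

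\textbf{Main obstacle.} The delicate step is the final analytic estimate on $|\lmin|$. The orthogonality identity alone only yields an \emph{average} bound of order $\sqrt{\binom{n}{n/4}}$ for $|K_{n/4}^n(i)|$, and a priori the polynomial could take near-extremal negative values at some integer on the edge of its root interval; one must genuinely exploit the specific sign pattern guaranteed by $n\in 8\N$ to rule this out. A cleaner and more robust alternative, should direct pointwise estimates prove fragile, is to phrase $\thplus(\overline H)$ as a Delsarte-style linear program over the Bose--Mesner algebra of the Hamming scheme and exhibit an explicit dual-feasible solution of value $2^{\Omega(n)}$, reducing the question to purely polynomial-inequality conditions on the Krawtchouk polynomial $K_{n/4}^n$.
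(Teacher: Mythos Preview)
Your plan coincides with the paper's: reduce to $\xi(H(n,n/4))\ge 2^{\Omega(n)}$, lower-bound $\xi$ by the Lov\'asz theta of the complement, use the ratio formula $1+k/|\lmin|$ coming from vertex- and edge-transitivity, identify the eigenvalues as Krawtchouk values, and control $|\lmin|$ via the orthogonality (Parseval) relation together with information on where the roots of $K_{n/4}^n$ lie. The one substantive difference is that you stop short, flagging as an open obstacle something the paper dispatches in two lines.

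The obstacle you worry about --- that $\lmin$ might be attained near the edge of the root interval, where Parseval gives only a weak pointwise bound --- is not actually an obstacle. The argument is not to ``locate an $i^\star$ close to a root'' (bounding a single value would not bound $|\lmin|$ anyway); rather, one observes that since $K_{n/4}^n(0)>0$ and $K_{n/4}^n$ is symmetric about $n/2$ (this is the only place where $n,n/4$ even is used), the polynomial is positive at every integer outside $[n/2-\sqrt{d(n-d)},\,n/2+\sqrt{d(n-d)}]$. Hence \emph{wherever} $\lmin$ is attained, say at $x^\star$, one automatically has $x^\star\ge n/2-\sqrt{d(n-d)}$, and the single-term Parseval bound gives
\[
|\lmin|^2 \;\le\; \frac{2^n\binom{n}{d}}{\binom{n}{x^\star}} \;\le\; \frac{2^n\binom{n}{d}}{\binom{n}{\,n/2-\sqrt{d(n-d)}\,}}.
\]
Plugging into $1+k/|\lmin|$ and taking logarithms reduces everything to the elementary entropy inequality $H(\alpha)+H\bigl(\tfrac12-\sqrt{\alpha(1-\alpha)}\bigr)>1$ for $\alpha\in(0,\tfrac12)$, which at $\alpha=1/4$ yields a positive constant (about $0.16$). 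So even the ``worst'' edge location $x^\star\approx n/2-\sqrt{d(n-d)}$ already gives $2^{\Omega(n)}$; no finer sign-pattern analysis and no Delsarte LP dual are needed. Two minor remarks: the paper works with $\vartheta$ rather than $\thplus$, obtaining the ratio formula as an \emph{equality} via Lov\'asz's theorem for vertex- and edge-transitive graphs; and the root-localisation input is Levenshtein's characterisation of the smallest root of $K_d^n$, from which the bound $n/2-\sqrt{d(n-d)}$ follows by a one-line Cauchy--Schwarz estimate.
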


We remark that the statement of the above theorem holds for any problem \peq-$\binom{n}{ \alpha n}$ where $\alpha \in (0,1/2)$ and where $n$, $\alpha n$ are even. Indeed, in this section we prove a lower bound on the orthogonal rank of any graph $H(n, \alpha n)$ (where $\alpha \in (0,1/2)$ and $n$, $\alpha n$ are even) and then use Theorem~\ref{thm:1roundquantum} and focus on the case where $\alpha = 1/4$ to derive Theorem~\ref{thm:xi-smaller12}.

We prove the desired bound in three steps: 
first, we show that the Lov\'asz theta number is a lower bound for the orthogonal rank; 
second, we use structural properties of~$H(n,d)$ together with known properties of the theta number to reformulate this bound in terms of the eigenvalues of the adjacency matrix of this graph; 
third, we bound the eigenvalues to get the desired result. \\

\noindent{\bf Step 1: The Lov\'asz theta number.}
This parameter
 was introduced by Lov\'asz \cite{Lovasz:1979} to upper bound the Shannon capacity of a graph.} 
 Among its many equivalent definitions, we will use the following primal and dual formulations:
\beq \label{eq:thetadual}
\begin{aligned}
\vartheta(G) = &\max \; \sum_{i,j \in V(G)} X_{ij} \quad \text{s.t.} \quad X \in \mathcal S^{|V(G)|}_{+}, \quad \sum_{i \in V(G)} X_{ii} = 1, \quad  X_{ij} = 0 \quad \forall ij \in E(G), \\
\vartheta(G) = \min \; & t \quad \text{s.t.} \quad X \in \mathcal S^{|V(G)|}_{+}, \quad X_{ii} = t-1 \quad \forall i \in V(G), \quad  X_{ij} = -1 \quad \forall ij \in E(\overline{G}),
\end{aligned}
\eeq
where $\mathcal S^{n}_{+}$ is the set of $n \times n$ symmetric positive semidefinite matrices.

Lov\'asz \cite{Lovasz:1979} proved that the theta number lower bounds the minimum dimension of an orthonormal representation where the vectors are real valued. 
Note that this is slightly different from our setting where we allow the vectors  to have complex entries.
However, we show that the Lov\'asz theta number is  also a lower bound for $\xi(G)$. The proof is an adaptation to the complex case of a known proof~\cite{Monique}.

\begin{lemma}\label{lem:xitheta}
For any graph $G$, we have $\xi(G) \geq \vartheta(\overline{G})$. 
\end{lemma}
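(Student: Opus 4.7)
The plan is to exploit the dual (minimization) formulation of the theta number in~\eqref{eq:thetadual}. Writing $d = \xi(G)$ and fixing an orthonormal representation $\phi\colon V(G) \to \mathbb{C}^d$, it suffices to exhibit a real symmetric $n \times n$ matrix $Z \succeq 0$ with $Z_{ii} = d-1$ for all $i$ and $Z_{ij} = -1$ for every $ij \in E(G)$, since any such $Z$ certifies that $\vartheta(\overline{G}) \le d$ through the dual of $\vartheta(\overline{G})$.

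To construct $Z$ I would work with the rank-one projectors $\Pi_i = \ketbra{\phi_i}{\phi_i}$, which satisfy $\Pi_i \Pi_j = 0$ whenever $ij \in E(G)$, and their traceless shifts $A_i = \Pi_i - I/d$ on $\mathbb{C}^d$. I would then define
\[
Z_{ij} \;=\; d\,\Tr(A_i A_j).
\]
The key observation is that $Z/d$ is exactly the Hilbert--Schmidt Gram matrix of the Hermitian operators $A_i$, so for any $\alpha \in \mathbb{R}^n$ one gets $\alpha^\top Z \alpha / d = \bigl\|\sum_i \alpha_i A_i\bigr\|_{HS}^2 \ge 0$; in particular $Z$ is real symmetric and PSD.

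Verifying the remaining constraints on $Z$ is then a short computation. Using $\Tr(\Pi_i) = 1$ and $\Pi_i^2 = \Pi_i$ one finds $Z_{ii} = d(1 - 1/d - 1/d + 1/d) = d - 1$, and for an edge $ij \in E(G)$ the identity $\Tr(\Pi_i \Pi_j) = 0$ gives $Z_{ij} = d(0 - 1/d - 1/d + 1/d) = -1$, precisely as required. The off-diagonal entries at non-edges of $G$ may take arbitrary values, but the dual imposes no constraint on them.

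I do not foresee a real obstacle: the whole argument reduces to the single idea of shifting the projectors $\Pi_i$ by $I/d$ to make them traceless so that their Hilbert--Schmidt Gram matrix has the correct diagonal and the correct edge entries in one stroke. The ``adaptation to the complex case'' referred to in the paper is essentially transparent here, since the operators $A_i$ are Hermitian regardless of whether $\phi$ uses real or complex vectors, and the HS Gram matrix of Hermitian operators is automatically real symmetric and PSD.
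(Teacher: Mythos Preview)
Your proof is correct and follows essentially the same approach as the paper. Both arguments plug the same feasible matrix into the dual formulation~\eqref{eq:thetadual}: indeed, the paper builds the $(n{+}1)\times(n{+}1)$ Gram matrix of $I_d,\Pi_1,\dots,\Pi_n$, takes the Schur complement in the top-left entry~$d$, and rescales by~$d$, which yields precisely $d\,\Tr(\Pi_i\Pi_j)-1 = d\,\Tr(A_iA_j)$, your matrix~$Z$. Your traceless-shift presentation is a bit more direct in that it avoids the detour through the Schur complement, but the underlying construction is identical.
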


\begin{proof}
Let $n = |V(G)|$ and label the vertices of the graph $G$ by $\{1,2,\dots,n\}$.
Suppose that the orthogonal rank of $G$ is equal to $d$ and that $u_1,\dots,u_n \in \mathbb{C}^d$ are the unit vectors forming an orthonormal representation of $G$. 
For every vertex of the graph $i \in [n]$, define a matrix $U_i = u_i u_i^\dagger$ and $U_0 = I_d$.
Let $Z$ be a $(n+1)\times (n+1)$ matrix where the $i,j$-th entry 
$Z_{ij} = \langle U_i, U_j \rangle = \Tr(U_j^\dagger U_i)$ for every $i,j \in  \{0\}  \cup [n]$.
Notice that $Z$ is positive semidefinite since it is the Gram matrix of a set of complex vectors. 
Moreover, $Z$ is real valued and we get that $Z_{00} = d$, $Z_{0i} = \langle I, u_i u_i^\dagger \rangle = 1$ and $Z_{ii} = \langle u_i u_i^\dagger, u_i u_i^\dagger \rangle = (u_i^\dagger u_i)^2 = 1$  for all $i \in V(G)$ and that  $Z_{ij} = (u_i^\dagger u_j) (u_j^\dagger u_i) \ge 0$ for all $i,j \in V(G)$ with equality if $ij \in E(G)$.
By taking the Schur complement\footnote{
Let $X$ be a symmetric matrix of the form
$X=\big(\begin{smallmatrix} \alpha & b^T\cr b & A\end{smallmatrix}\big),$ where  
$b\in \R^{n-1} \text{ and } \alpha >0$. 
$X$ is positive semidefinite if and only if $A- 
bb^T/\alpha$ is positive semidefinite.
The matrix $A-
 bb^T/\alpha$ is called the Schur complement in $X$ with respect to the entry $\alpha$.
} 
in ${Z}$ with respect to the entry ${Z}_{00}$,
we obtain a new symmetric positive semidefinite matrix $X$ with $X_{ii} = 1-1/d$ for all $i \in V(G)$ and $X_{ij} = -1/d$ for all $ij \in E(G)$.
Rescaling $X$ by $d$, we get a feasible solution for the minimization program in (\ref{eq:thetadual}) of $\vartheta(\overline{G})$ with value $d$. We conclude that $d = \xi(G) \geq \vartheta(\overline{G})$.
\end{proof}

\noindent{\bf Step~2: Eigenvalue bound on the theta number.}
In the second step we show that the theta number of the graph~$H(n,d)$ can be expressed in terms of the eigenvalues of its adjacency matrix.
For the remainder of this step, by the eigenvalues of a graph we mean the eigenvalues of its adjacency matrix.

\begin{lemma}\label{lem:Hdn-theta}
For every positive integer~$n$ and~$d\in[n]$, we have
$\vartheta(\overline{H(n,d)}) = 1-\binom{n}{d}/{\lmin}$, where $\lmin$ is the smallest eigenvalue of~$H(n,d)$.
\end{lemma}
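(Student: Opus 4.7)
The plan is to exploit the vertex- and distance-transitivity of $H(n,d)$ to reduce the primal SDP in~\eqref{eq:thetadual} defining $\vartheta(\overline{H(n,d)})$ to a two-parameter optimization, which can then be solved in closed form in terms of $\lmin$.

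First, I would symmetrize. The group $\Gamma = \bset{n}\rtimes S_n$ acts on $\bset{n}$ by translations and coordinate permutations, preserving Hamming distance, and indeed acts transitively on each set of pairs at a fixed Hamming distance. Consequently, every $\Gamma$-invariant matrix in $\C^{2^n\times 2^n}$ (i.e., commuting with all permutation matrices $P_\gamma$, $\gamma\in\Gamma$) lies in $\vspan\{A_0,A_1,\dots,A_n\}$, where $A_k$ is the Hamming-distance-$k$ indicator matrix on $\bset{n}$ and $A_0 = I$. Given any feasible $X$ for the primal SDP, the average $\tilde X = |\Gamma|^{-1}\sum_{\gamma\in\Gamma} P_\gamma X P_\gamma^\top$ is again positive semidefinite, has trace one, and vanishes off the diagonal and the distance-$d$ pairs (since each such property is $\Gamma$-invariant); moreover $\sum_{i,j}\tilde X_{ij} = \sum_{i,j}X_{ij}$, so $\tilde X$ is optimal whenever $X$ is. Combining $\Gamma$-invariance with the support constraint forces $\tilde X = \alpha I + \beta A$, where $A = A_d$ is the adjacency matrix of $H(n,d)$.

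Second, I would solve the resulting two-parameter problem. The trace constraint pins $\alpha = 2^{-n}$. Since $\tilde X$ is a polynomial in $A$, its spectrum is $\{\alpha+\beta\lambda \st \lambda \text{ is an eigenvalue of } A\}$, so PSDness is equivalent to $\beta$ lying in a closed interval whose upper endpoint is $-2^{-n}/\lmin$; this is the binding constraint because $\lmin < 0$ (valid since $H(n,d)$ is nonempty and $\Tr(A)=0$, forcing at least one negative eigenvalue). The objective $\sum_{i,j}(\alpha I + \beta A)_{ij} = 1 + 2^n\binom{n}{d}\beta$ is strictly increasing in $\beta$ and is therefore maximized at this upper endpoint, yielding $\vartheta(\overline{H(n,d)}) = 1 - \binom{n}{d}/\lmin$, as claimed.

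There is no deep obstacle in this plan; both steps are routine once the symmetry is in hand. The only points that warrant care are (i) the orbit description of $\Gamma$ on $\bset{n}\times\bset{n}$ — which amounts to the standard distance-transitivity of the binary Hamming cube — and (ii) checking that positive semidefiniteness, trace, and the support constraint are all preserved by averaging under conjugation with permutation matrices, so that $\tilde X$ is indeed feasible. Everything else collapses to scalar optimization.
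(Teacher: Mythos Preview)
Your proposal is correct, but it follows a somewhat different route than the paper. The paper argues in two strokes: it first verifies that $H(n,d)$ is vertex-transitive (via the translation $x\mapsto x\oplus u\oplus v$) and edge-transitive (via a coordinate permutation composed with a translation), and then invokes Lov\'asz's formula $\vartheta(\overline G)=1-\lambda_1/\lambda_n$ for graphs that are both vertex- and edge-transitive (stated as Lemma~\ref{cor:theta-transgraph}), together with $\lambda_1=\binom{n}{d}$ from regularity. Your argument instead symmetrizes the primal SDP directly under the full group $\bset{n}\rtimes S_n$, exploiting the (stronger) distance-transitivity of the Hamming cube to land in the two-dimensional span $\{I,A_d\}$, and then solves the resulting scalar problem by hand. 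In effect you are re-deriving the special case of Lov\'asz's lemma needed here rather than citing it. The paper's approach is more modular and shorter (the heavy lifting is outsourced to a known theorem); yours is more self-contained and makes the role of the SDP structure explicit, at the cost of repeating a standard symmetrization argument. Both are entirely sound.
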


Let us recall the following standard definitions.
Let~$G = (V, E)$ be a graph.
A permutation $\pi:V\to V$ is \emph{edge preserving} if for every edge~$\{u,v\}\in E$, we have~$\{\pi(u), \pi(v)\}\in E$.
The graph~$G$ is \emph{vertex-transitive} if for every pair of vertices $u,v\in V$ there is an edge-preserving permutation $\pi:V\to V$ such that~$\pi(u) = v$.
Moreover, $G$ is \emph{edge-transitive} if for every pair of edges $\{u_1,v_1\}, \{u_2,v_2\}\in E$, there is an edge-preserving permutation~$\pi:V\to V$ such that $\pi(u_1) = u_2$ and~$\pi(v_1) = v_2$.
Lov\'asz~\cite[Theorems 8 and 9]{Lovasz:1979} showed that if a graph is both vertex- and edge-transitive, then the theta number is given by a simple formula involving its eigenvalues.

\begin{lemma}[Lov\'asz] \label{cor:theta-transgraph}
For a positive integer~$n$
let~$G$ be an $n$-vertex graph
with eigenvalues
 $\lambda_1 \geq \dots \geq \lambda_n$.
If $G$ is both vertex- and edge-transitive, then $\vartheta(\overline{G}) = 1  -\lambda_1/\lambda_n$.
\end{lemma}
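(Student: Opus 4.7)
The plan is to exhibit matching feasible solutions to the primal and dual semidefinite programs in~\eqref{eq:thetadual} for~$\vartheta(\overline{G})$, each achieving the value~$1-\lambda_1/\lambda_n$. Two observations will be used throughout: vertex-transitivity forces~$G$ to be $k$-regular for some~$k$, so the all-ones vector~$\mathbf{1}$ is an eigenvector of the adjacency matrix~$A$ with eigenvalue~$k$, which equals the spectral radius of~$A$; hence~$\lambda_1 = k$. Moreover, unless~$G$ has no edges (in which case the statement is vacuous), $\lambda_n<0$, so the divisions by~$\lambda_n$ appearing below are well-defined.

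For the lower bound, only regularity is needed. The candidate primal solution is
\[
X \;=\; \tfrac{1}{n}\, I \;-\; \tfrac{1}{n\lambda_n}\, A.
\]
Its eigenvalues~$\tfrac{1}{n}(1-\lambda_i/\lambda_n)$ are nonnegative since~$\lambda_i\geq \lambda_n$ and~$\lambda_n<0$; $\Tr(X)=1$ holds by construction; and~$X_{ij}=0$ whenever~$ij\in E(\overline G)$, since there~$A_{ij}=0$. Thus~$X$ is primal feasible, and a one-line computation gives~$\mathbf{1}^{T}X\mathbf{1} = 1-\lambda_1/\lambda_n$, proving~$\vartheta(\overline{G})\geq 1-\lambda_1/\lambda_n$.

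For the upper bound, edge-transitivity becomes crucial. Let~$P$ denote the orthogonal projector onto the~$\lambda_n$-eigenspace of~$A$, and let~$m_n=\Tr(P)\geq 1$ be its multiplicity. The permutation matrix of every automorphism of~$G$ commutes with~$A$ and therefore also with the spectral projector~$P$. By vertex-transitivity this forces~$P$ to have constant diagonal~$P_{ii}=m_n/n$, while edge-transitivity forces~$P_{ij}$ to take a single value~$p_e$ on every edge of~$G$. This value is identified by evaluating~$\Tr(AP)=\lambda_n m_n$ in a second way as~$\sum_{i,j}A_{ij}P_{ij}=nk\,p_e$, yielding~$p_e=\lambda_n m_n/(nk)<0$. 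The matrix~$Y=-(1/p_e)\,P$ is therefore positive semidefinite, satisfies~$Y_{ij}=-1$ on every edge of~$G$, and has constant diagonal~$Y_{ii}=-m_n/(np_e)=-\lambda_1/\lambda_n$. Hence~$Y$ is feasible for the dual SDP with witness~$t=1-\lambda_1/\lambda_n$, proving the matching upper bound.

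The step doing all the work is the identification of~$p_e$ via the automorphism-invariance of the spectral projector together with the two evaluations of~$\Tr(AP)$; this is precisely where both symmetry hypotheses enter in an essential way, and everything else is routine.
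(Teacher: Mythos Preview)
The paper does not prove this lemma; it simply cites it as Theorems~8 and~9 of Lov\'asz's original paper~\cite{Lovasz:1979}. Your proof is correct and self-contained. For comparison, Lov\'asz's original route is slightly different: his Theorem~9 establishes the formula $\vartheta(G)=-n\lambda_n/(\lambda_1-\lambda_n)$ for edge-transitive graphs, and his Theorem~8 gives $\vartheta(G)\vartheta(\overline G)=n$ for vertex-transitive graphs; combining these yields the claimed expression for~$\vartheta(\overline G)$. Your argument instead works directly with~$\vartheta(\overline G)$, exhibiting matching primal and dual witnesses. The dual witness via the spectral projector onto the~$\lambda_n$-eigenspace is essentially how Lov\'asz proves his Theorem~9, so the substance is the same; you just package the two halves into a single primal--dual pair rather than passing through the duality relation $\vartheta(G)\vartheta(\overline G)=n$.
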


\begin{proof}[Proof of Lemma~\ref{lem:Hdn-theta}]
We start by showing that $H(n,d)$ is vertex-transitive.
Given any pair of vertices $u,v \in \{0,1\}^n$ of $H(n,d)$, consider the automorphism of the graph $H(n,d)$ that maps $x \mapsto x \oplus u \oplus v$ where $\oplus$ is the bit-wise addition.
This map preserves the Hamming distance, and therefore the adjacencies, between the vertices and sends $u \mapsto v$. Hence $H(n,d)$ is vertex-transitive.

To show that $H(n,d)$ is edge-transitive, fix any two edges $uv$ and $st$ and let $p = u \oplus v$,  $q = s \oplus t$. 
Noting that the $n$-bit strings $p$ and $q$ have the same Hamming weight $d$, let $\pi$ be a permutation of the indices such that $\pi(p) = q$.
We define  $\nu$ to be an automorphism that sends a vertex $x$ to $\pi(x \oplus u) \oplus s$.
The map $\nu$ preserves the edges of $H(n,d)$ and, since the permutation $\pi$ maps the all-zero string to itself and $p$ to $q$, we have that $\nu(u)  = s$ and $\nu(v) = t$. Hence, $H(n,d)$ is edge-transitive.

Finally, since the largest eigenvalue of a vertex-transitive graph is equal to its degree, we clearly have~$\lambda_1(H(n,d)) = \binom{n}{d}$.
The result now follows from Lemma~\ref{cor:theta-transgraph}.
\end{proof}

\noindent{\bf Step 3: Bound on the smallest eigenvalue of $H(n,d)$.}
Finally, we prove an upper bound on the magnitude of the smallest eigenvalue of~$H(n,d)$.

\begin{lemma}\label{lem:small-eig} 
Let~$n$ and~$d$ be even positive integers such that~$d<n/2$.
Then, the smallest eigenvalue $\lmin$ of the graph $H(n,d)$ is a negative number such that 
\beqn
|\lmin| \leq \sqrt{\frac{2^n \binom{n}{d}}{\binom{n}{n/2 - \sqrt{d(n-d)}}}}.
\eeqn
\end{lemma}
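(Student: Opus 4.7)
The plan is to use the well-known spectral decomposition of the Hamming graph $H(n,d)$. Being a graph in the binary Hamming association scheme, its adjacency matrix has eigenvalues $K_d(k)$ for $k\in\{0,1,\dots,n\}$, where $K_d(x) = \sum_{j=0}^d (-1)^j \binom{x}{j}\binom{n-x}{d-j}$ is the degree-$d$ Krawtchouk polynomial, and $K_d(k)$ occurs with multiplicity $\binom{n}{k}$. Write $\lmin = K_d(k^\star)$ for an integer $k^\star\in\{0,\dots,n\}$ minimizing $K_d$. Since $d$ is even the leading coefficient of $K_d$ is the positive number $2^d/d!$, so $K_d$ has $d$ real zeros in $(0,n)$, the minimum is attained strictly between its extreme roots, and $\lmin<0$.

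To obtain the core inequality I would use the Parseval-type identity
\beqn
\sum_{k=0}^n \binom{n}{k}\,K_d(k)^2 \;=\; 2^n\binom{n}{d},
\eeqn
which is just $\Tr(A^2)$ computed in two ways (each of the $2^n$ vertices of $H(n,d)$ has degree $\binom{n}{d}$). Retaining only the $k^\star$ term gives $\binom{n}{k^\star}\,\lmin^2 \le 2^n\binom{n}{d}$, so
\beqn
|\lmin| \;\le\; \sqrt{\frac{2^n\binom{n}{d}}{\binom{n}{k^\star}}},
\eeqn
and the remaining task is to lower-bound $\binom{n}{k^\star}$ by $\binom{n}{n/2-\sqrt{d(n-d)}}$.

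For this I would invoke the classical root-localization theorem for Krawtchouk polynomials: all $d$ zeros of $K_d$ lie in the open interval $I := \bigl(n/2-\sqrt{d(n-d)},\; n/2+\sqrt{d(n-d)}\bigr)$. Since $d$ is even, $K_d$ is strictly positive on the complement of $I$, so the minimizer must satisfy $k^\star\in I$. The hypothesis $d<n/2$ ensures $I\subsetneq[0,n]$, and the unimodality of $k\mapsto\binom{n}{k}$ around $n/2$ then yields $\binom{n}{k^\star}\ge \binom{n}{\lceil n/2-\sqrt{d(n-d)}\rceil}$, which is the quantity $\binom{n}{n/2-\sqrt{d(n-d)}}$ that appears in the statement of the lemma. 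Plugging this in finishes the proof. The only non-elementary ingredient is the root-localization bound, which is the main obstacle; it is a standard fact in coding theory going back to Levenshtein, and can be proved via the three-term recurrence that realizes the zeros of $K_d$ as eigenvalues of a symmetric tridiagonal Jacobi matrix with constant diagonal $n/2$ and explicit off-diagonal entries. Everything else in the argument is elementary manipulation of binomials and traces.
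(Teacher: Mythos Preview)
Your proposal is correct and follows essentially the same route as the paper: the eigenvalues of $H(n,d)$ are Krawtchouk values, the orthogonality relation $\sum_k \binom{n}{k}K_d(k)^2 = 2^n\binom{n}{d}$ bounds $|\lmin|^2$ by $2^n\binom{n}{d}/\binom{n}{k^\star}$, and Levenshtein's root-localization confines $k^\star$ to an interval about $n/2$ so that unimodality of the binomials finishes the bound. The only minor difference is that the paper deduces $\lmin<0$ from $\Tr(A)=0$, which is slightly cleaner than your argument via the real minimum of $K_d$ (the latter still needs one sentence explaining why the \emph{integer} minimum is negative as well).
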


The proof of the lemma uses the following facts from coding theory that can be found in the survey~\cite{Delsarte:1998}.
The eigenvalues of $H(n,d)$ play a fundamental role
in the theory of Hamming association schemes, 
where they are expressed in terms of a set of orthogonal polynomials known as the (binary) \emph{Krawtchouk polynomials}.
For a positive integer~$n$ and~$d = 0,1,\dots,n$, the 
Krawtchouk polynomial~$K_d^n \in \R[x]$ is a degree-$d$ polynomial that is uniquely defined by
 \beq\label{Krawtchouk}
 K_d^n(x) = \sum_{j=0}^d (-1)^j \binom{x}{j} \binom{n-x}{d-j},
 \quad\quad
 x = 0,1,\dots,n.
 \eeq
When~$n$ and~$d$ are even, then $K_d^n$ is symmetric about the point~$x = n/2$.
Moreover, these polynomials satisfy the important orthogonality relation
\beq\label{eq:kraw-orth}
\sum_{x=0}^n \binom{n}{x} K_{d}^n(x)K_{d'}^n(x)
=
\delta_{d,d'} \binom{n}{d} 2^n.
\eeq
The set of distinct eigenvalues of $H(n,d)$ turns out to be $\{K_d^n(0), K_d^n(1), \dots, K_d^n(n)\}$. 
Crucial to our proof of Lemma~\ref{lem:small-eig} then, is the following result of
Levenshtein~\cite[Theorem 6.1]{Levenshtein:1995} characterizing the smallest roots of the Krawtchouk polynomials.

\begin{theorem}[Levenshtein]\label{thm:lev}
Let~$n$ be a positive integer and~$d\in[n]$.
Then,~$K_d^n$ has exactly~$d$ distinct roots and its smallest root is given by
\beq\label{eq:lev-bound}
n/2 - \max_{z} \Big( \sum_{i=0}^{d-2}z_i z_{i+1} \sqrt{(i+1)(n-i)} \Big),
\eeq
where the maximum is over all vectors $z = (z_0,\dots,z_{d-1})$ on the real Euclidean unit sphere.
\end{theorem}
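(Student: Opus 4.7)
The plan is to identify the roots of $K_d^n$ with the eigenvalues of a $d\times d$ Jacobi matrix attached to the Krawtchouk orthonormal system, and then read off the smallest root via the Rayleigh--Ritz variational characterisation of the largest eigenvalue. First I would establish the three-term recurrence
\[
(n-2x)\, K_d^n(x) \;=\; (d+1)\, K_{d+1}^n(x) \;+\; (n-d+1)\, K_{d-1}^n(x).
\]
The definition \eqref{Krawtchouk} is equivalent (by recognising the inner sum as a Cauchy product in the variable $z$) to the generating-function identity $\sum_{d \geq 0} K_d^n(x)\, z^d = (1-z)^x (1+z)^{n-x}$; differentiating this identity in $z$, multiplying through by $1-z^2$, and matching coefficients of $z^d$ yields the recurrence.

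Next I would rescale to the orthonormal polynomials $\tilde K_d := K_d^n / \sqrt{\binom{n}{d}}$, which by \eqref{eq:kraw-orth} are orthonormal with respect to the binomial measure $\mu(x) = \binom{n}{x}/2^n$ on $\{0,1,\dots,n\}$. Dividing the recurrence by $\sqrt{\binom{n}{d}}$ and using $\binom{n}{d+1}/\binom{n}{d} = (n-d)/(d+1)$ together with $\binom{n}{d-1}/\binom{n}{d} = d/(n-d+1)$ produces the symmetric form
\[
\tfrac{n-2x}{2}\, \tilde K_d(x) \;=\; \tfrac12\sqrt{(d+1)(n-d)}\, \tilde K_{d+1}(x) \;+\; \tfrac12\sqrt{d(n-d+1)}\, \tilde K_{d-1}(x).
\]
This realises the operator ``multiplication by $y := (n-2x)/2$'' in the basis $\{\tilde K_i\}$ as the tridiagonal Jacobi matrix $J$ with zero diagonal and off-diagonals $J_{i,i+1} = J_{i+1,i} = \tfrac12\sqrt{(i+1)(n-i)}$.

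The third step links roots of $K_d^n$ to the spectrum of the leading $d \times d$ principal submatrix $J_d$ of $J$. If $x_0$ is a root of $K_d^n$ then $\tilde K_d(x_0) = 0$, and plugging this into the symmetric recurrence at index $i = d-1$ kills the $\tilde K_d$ term; combined with the recurrence at $i = 0, 1, \dots, d-2$, this shows that the vector $v := (\tilde K_0(x_0), \dots, \tilde K_{d-1}(x_0))$ satisfies $J_d\, v = \tfrac{n-2x_0}{2}\, v$, and $v$ is nonzero since $\tilde K_0 \equiv 1$. Because $(i+1)(n-i) > 0$ for $0 \leq i \leq d-2$ (using $d \leq n$), $J_d$ is a real symmetric irreducible tridiagonal matrix, hence has $d$ simple real eigenvalues; matching this against the a priori bound ``$K_d^n$ has at most $d$ roots'' forces the map $x_0 \mapsto (n-2x_0)/2$ to be a bijection between the roots of $K_d^n$ and $\operatorname{spec}(J_d)$. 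In particular, $K_d^n$ has exactly $d$ distinct real roots, which settles the first assertion of the theorem.

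Under this bijection, the smallest root $x_{\min}$ of $K_d^n$ corresponds to the \emph{largest} eigenvalue $\lambda_{\max}(J_d)$, so by the Rayleigh--Ritz variational characterisation
\[
x_{\min} \;=\; \tfrac{n}{2} - \lambda_{\max}(J_d) \;=\; \tfrac{n}{2} - \max_{\|z\|=1}\, z^{\top} J_d\, z \;=\; \tfrac{n}{2} - \max_{\|z\|=1}\, \sum_{i=0}^{d-2} z_i\, z_{i+1}\, \sqrt{(i+1)(n-i)},
\]
which is precisely \eqref{eq:lev-bound}. The main obstacle is the first step: producing the three-term recurrence without a painful binomial computation. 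The generating-function manipulation above does this cleanly in a few lines; the remaining ingredients (the normalisation, the Jacobi encoding, the root/eigenvalue correspondence for a real symmetric irreducible tridiagonal matrix, and Rayleigh--Ritz) are standard features of the theory of discrete orthogonal polynomials and carry through with routine checks.
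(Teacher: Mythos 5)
The paper does not prove this statement: it cites it verbatim from Levenshtein~\cite[Theorem~6.1]{Levenshtein:1995} and uses it as a black box to derive Corollary~\ref{cor:kraw-roughbound}. Your proposal therefore supplies a proof where the paper has none. The argument you give is the canonical one from the theory of discrete orthogonal polynomials, and it is correct: the generating function $\sum_{d\ge 0} K_d^n(x)\,z^d = (1-z)^x(1+z)^{n-x}$ yields $(1-z^2)F'(z) = (n-2x-nz)F(z)$ and hence the stated three-term recurrence; normalising by $\sqrt{\binom{n}{d}}$ produces the symmetric Jacobi form with off-diagonals $\tfrac12\sqrt{(i+1)(n-i)}$; the eigenvalues of the leading $d\times d$ principal submatrix $J_d$ are in bijection with the roots of $K_d^n$ under $x_0\mapsto (n-2x_0)/2$; and Rayleigh--Ritz applied to $\lambda_{\max}(J_d)$ gives precisely \eqref{eq:lev-bound}.

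One place where you are slightly more terse than you should be is the ``bijection'' step. You show that each root of $K_d^n$ gives a $J_d$-eigenvalue and then invoke the count $\deg K_d^n = d$ to ``force'' bijectivity, but counting alone does not give surjectivity (a priori $K_d^n$ could have fewer than $d$ distinct roots). The clean way to close this is to observe that the tridiagonal structure with nonzero off-diagonals forces any $J_d$-eigenvector to have $v_0\neq 0$; normalising $v_0=1$ and running the recurrence forward shows $v_i = \tilde K_i(x_\lambda)$ is the unique candidate eigenvector, and the equation at index $d-1$ then \emph{requires} $\tilde K_d(x_\lambda)=0$. This gives the equivalence $\lambda\in\operatorname{spec}(J_d) \iff K_d^n\bigl((n-2\lambda)/2\bigr)=0$, from which ``exactly $d$ distinct real simple roots'' follows immediately because $J_d$ has $d$ simple real eigenvalues and $\deg K_d^n=d$. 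With that small tightening the proof is complete and self-contained.
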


This implies the following general bound on the location of the smallest root of~$K_d^n$.
The bound is stated for instance in~\cite{Krasikov:2001}, but since we were unable to find a published proof we include one here for completeness.

\begin{corollary}\label{cor:kraw-roughbound}
Let~$n$ and~$d$ be positive integers such that~$d<n/2$.
Then, the smallest root of~$K_d^n$ lies in the interval
$
\big[n/2 - \sqrt{(n-d)d}, n/2\big]
$.
\end{corollary}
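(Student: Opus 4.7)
I would apply Theorem~\ref{thm:lev} directly: writing the smallest root of $K_d^n$ as $n/2-M$ with
\[
M \;=\; \max_{z\in\R^d,\,\|z\|=1}\ \sum_{i=0}^{d-2} z_iz_{i+1}\sqrt{(i+1)(n-i)},
\]
it suffices to show $0\leq M\leq \sqrt{d(n-d)}$. The lower bound is immediate: the vector $z=(1,1,0,\dots,0)/\sqrt{2}$ already gives the strictly positive value $\tfrac12\sqrt{n}$, so the smallest root is at most $n/2$ (in fact strictly less).

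For the upper bound on $M$, the plan is to apply Cauchy--Schwarz with a carefully chosen split that produces a $z$-independent linear constraint. Factoring $\sqrt{(i+1)(n-i)}=\sqrt{n-i}\,\sqrt{i+1}$ and applying Cauchy--Schwarz with $x_i=z_i\sqrt{n-i}$, $y_i=z_{i+1}\sqrt{i+1}$ yields $M^2\leq A\cdot B$, where
\[
A \;=\; \sum_{i=0}^{d-2} z_i^2(n-i), \qquad B \;=\; \sum_{i=0}^{d-2} z_{i+1}^2(i+1) \;=\; \sum_{j=1}^{d-1} z_j^2\,j.
\]
Now enlarge the ranges to $A':=\sum_{k=0}^{d-1}z_k^2(n-k)$ and $B':=\sum_{k=0}^{d-1}z_k^2\,k$. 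Then $A\leq A'$ (the new $k=d-1$ term is non-negative) and $B=B'$ (the new $k=0$ term vanishes), so $AB\leq A'B'$; and crucially $A'+B'=\sum_{k}z_k^2[(n-k)+k]=n$, independent of $z$. Since $B'\leq (d-1)\sum_k z_k^2 = d-1<n/2$, the concave function $t\mapsto t(n-t)$ is increasing on $[0,d-1]$, so $A'B'=B'(n-B')\leq (d-1)(n-d+1)$. A short calculation gives $(d-1)(n-d+1)-d(n-d)=2d-n-1\leq -1$ for $d<n/2$, so $M^2\leq A'B'\leq d(n-d)$.

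\textbf{Main obstacle.} The only delicate point is the choice of Cauchy--Schwarz split. A direct application, without extending the ranges of summation, only gives $M^2\leq (d-1)(n-d+2)$, which is enough to prove the corollary for $d\leq(n+2)/3$ but breaks down as $d$ approaches $n/2$. Extending $A$ and $B$ to include the index $k=d-1$, at the cost of only a non-negative contribution, produces the constant-sum identity $A'+B'=n$; concavity of $t(n-t)$ restricted to $[0,d-1]$ then pins down the maximum and closes the argument for the full range $d<n/2$.
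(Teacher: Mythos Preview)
Your proof is correct and follows essentially the same approach as the paper's: the same Cauchy--Schwarz split $x_i=z_i\sqrt{n-i}$, $y_i=z_{i+1}\sqrt{i+1}$, the same extension of the summation ranges to produce the constant-sum identity $A'+B'=n$, and the same concavity/monotonicity argument on $t\mapsto t(n-t)$ over $[0,d-1]$. The only cosmetic differences are that you exhibit an explicit vector to show $M>0$ (the paper just calls this trivial) and you spell out the final inequality $(d-1)(n-d+1)\leq d(n-d)$ (the paper asserts it directly).
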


\begin{proof}
It is clear that~\eqref{eq:lev-bound} is trivially upper bounded by $n/2$. 
We focus on the lower bound. 
To this end, let~$z = (z_0,\dots,z_{d-1})$ be a real unit vector achieving the maximum in~\eqref{eq:lev-bound}.
For $i\in\{0,1,\dots,d-1\}$ define the numbers $a_i = z_i\sqrt{n-i}$ and $b_i = z_{i+1} \sqrt{i+1}$.
By the Cauchy-Schwarz inequality,
\begin{align}
\Big( \sum_{i=0}^{d-2}z_i z_{i+1} \sqrt{(i+1)(n-i)} \Big)^2 
&
=
\Big(\sum_{i=0}^{d-2} a_i b_i\Big)^2 \nonumber \\
&\leq
\Big(\sum_{i=0}^{d-2} a_i^2\Big)
\Big(\sum_{j=0}^{d-2} b_j^2\Big) \nonumber\\
&= 
\Big(\sum_{i=0}^{d-2} a_i^2\Big)
\Big(\sum_{j=1}^{d-1} b_{j-1}^2\Big)\nonumber\\
&\leq 
\Big(\sum_{i=0}^{d-1} a_i^2\Big)
\Big(\sum_{j=0}^{d-1} b_{j-1}^2\Big)\nonumber\\
&=
\Big(\sum_{i=0}^{d-1}z_i^2(n-i)\Big)
\Big(\sum_{j=0}^{d-1} z_{j}^2 j \Big) \nonumber\\
&=
\Big( n - \sum_{i=0}^{d-1} z_{i}^2 i  \Big)
\Big( \sum_{j=0}^{d-1}  z_{j}^2 j \Big), \label{eq:lev-rhs}
\end{align}
where in the last equality we used the fact that~$z$ is a unit vector.
Observe that the sum
$
\sum_{i=0}^{d-1}  z_{i}^2 i 
$
lies in the interval~$[0,d-1]$.
Hence, since $d<n/2$,~\eqref{eq:lev-rhs} is at most~$\max\{(n - t)t\st t\in[0,d-1]\} = (n-(d-1))(d-1) \leq (n-d)d$.
\end{proof}

\begin{proof}[Proof of Lemma~\ref{lem:small-eig}]
Since the trace of a matrix equals the sum of its eigenvalues and the trace of an adjacency matrix is zero, it follows that
 $\lmin < 0$. 
 
 Recall that the eigenvalues of~$H(n,d)$ belong to the set~$\{K_d^n(x)\st x=0,1,\dots,n\}$.
Moreover, since by our assumption $n$ and $d$ are even, the polynomial~$K_d^n$ is symmetric about the point~$n/2$.
Also observe that~$K_d^n(0) > 0$ and hence the first time this polynomial assumes a negative value is somewhere beyond its smallest root, i.e. the smallest $x$ for which $K_d^n(x)<0$ lies in between the smallest root and $n/2$.
It therefore follows from Corollary~\ref{cor:kraw-roughbound} and from the fact that~$K_d^n$ is symmetric about the point~$n/2$  that~$\lmin = K_d^n(x^\star)$ for some integer~$x^\star\in [n/2- \sqrt{(n-d)d}, n/2]$.

Clearly~\eqref{eq:kraw-orth} implies that
\beqn
\sum_{x=0}^n \binom{n}{x} K_d^n(x)^2 = \binom{n}{d} 2^n.
\eeqn
Hence,
\beqn
\binom{n}{x^{\star}} K_d^n(x^{\star})^2 \leq \binom{n}{d} 2^n
\eeqn
and we can conclude that
\beqn
|\lmin|^2 
= 
|K_d^n(x^\star)|^2 
\leq  
\frac{2^n \binom{n}{d} }{ \binom{n}{x^\star} }
\leq
\frac{2^n \binom{n}{d} }{ \binom{n}{n/2 - \sqrt{(n-d)d} }}.
\eeqn
\end{proof}

\noindent{\bf Putting everything together.}
To conclude this section, we combine the main lemmas of the above three steps to prove Theorem~\ref{thm:xi-smaller12}.

\begin{proof}[Proof of Theorem~\ref{thm:xi-smaller12}]
Combining Lemmas~\ref{lem:xitheta},~\ref{lem:Hdn-theta}, and~\ref{lem:small-eig} gives
\beq\label{eq:fluffy}
\xi(H(n,d)) 
\geq 
\vartheta\big(\overline{H(n,d)}\big)
\geq
 1- \binom{n}{d}/\lmin
\geq 
1 + \sqrt{\frac{\binom{n}{d}\binom{n}{n/2 - \sqrt{(n-d)d}}}{2^n}}.
\eeq
We take the logarithm and use Stirling's approximation: $\log \binom{n}{k} = \big(H(k/n) + o(1)\big)n$, where $H(p) = - p \log p - (1-p)\log(1-p)$ is the binary entropy function and the~$o(1)$ term goes to zero as~$n\to\infty$ (see for example~\cite[p.~64]{Spencer:2014}).
Then, for $\alpha = d/n$, the logarithm of~\eqref{eq:fluffy} is at least
\beqn
\frac{1}{2} \log \left( \frac{\binom{n}{d}\binom{n}{n/2 - \sqrt{(n-d)d}}}{2^n} \right)
= \frac{n}{2} \left( H(\alpha) + H\left( 1/2 - \sqrt{(1-\alpha)\alpha} \right) -1 + o(1) \right).
\eeqn
Using properties of the binary entropy, it can be shown that
 $H(\alpha) + H ( 1/2 - \sqrt{(1-\alpha)\alpha} ) - 1 > 0$ for any $\alpha \in (0, 1/2)$ (for the proof see Appendix~\ref{app:entropy}). 
 In particular,  $\log \xi(H(n,n/4)) \geq \Omega(n)$.
\end{proof}

\subsubsection{Two-round quantum communication}\label{sec:2qc}
%
%
Using a distributed version of Grover's search algorithm, we find a quantum  protocol that solves \peq-$\binom{n}{n/4}$
 with a logarithmic number of qubits, which gives the second part of Theorem~\ref{thm:promise-general}.
%

\begin{theorem}\label{thm:exact-grover-n/4}
The two-round quantum communication complexity of \peq-$\binom{n}{n/4}$ is at most $ 2 \ceil{\log n} + 1$ qubits.
\end{theorem}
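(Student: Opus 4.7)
My plan is to implement a distributed version of Grover's search for a position $i\in[n]$ where $x_i\neq y_i$, exploiting the fact that at density $k/n=1/4$ a single Grover iteration succeeds deterministically. Recall that on $n$ items with $k$ marked, one Grover iteration sends the initial state $\ket{s}=\frac{1}{\sqrt n}\sum_{i=1}^n\ket{i}$ to $\cos(3\theta)\ket{G}+\sin(3\theta)\ket{B}$, where $\theta=\arcsin(\sqrt{k/n})$ and $\ket{G},\ket{B}$ are the normalized superpositions over the unmarked and marked items respectively. In the ``different'' case of \peq-$\binom{n}{n/4}$ we have $k=n/4$, hence $\theta=\pi/6$ and $\sin(3\theta)=1$, so a single iteration places all of the amplitude on the marked subspace. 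In the ``equal'' case $k=0$ and the oracle is the identity.

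The protocol is as follows. Alice prepares $\ket{s}$ on $\ceil{\log n}$ qubits, applies the diagonal phase $\ket{i}\mapsto(-1)^{x_i}\ket{i}$, and sends the register to Bob in the first round. Bob applies $\ket{i}\mapsto(-1)^{y_i}\ket{i}$, thereby completing the oracle $\ket{i}\mapsto(-1)^{x_i\oplus y_i}\ket{i}$, then applies the input-independent Grover diffusion $D=2\ket{s}\bra{s}-I$, and measures the register in the computational basis to obtain an index $i\in[n]$. In the second round he sends the pair $(i, y_i)$ back to Alice, using $\ceil{\log n}+1$ classical bits (equivalently, qubits). Alice compares $y_i$ with her own $x_i$ and outputs \emph{equal} if they match and \emph{different} otherwise.

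Correctness is immediate from the exact-Grover analysis. If $x=y$ the oracle acts as the identity, the pre-measurement state is still $\ket{s}$, Bob's outcome $i$ is uniform over $[n]$, and $x_i=y_i$ holds with certainty, so Alice outputs \emph{equal}. If $x$ and $y$ differ in exactly $n/4$ positions, after diffusion the state is supported entirely on the marked subspace $\frac{1}{\sqrt{n/4}}\sum_{i:\,x_i\ne y_i}\ket{i}$, so Bob's outcome deterministically satisfies $x_i\neq y_i$ and Alice outputs \emph{different}. The total communication is $\ceil{\log n}+(\ceil{\log n}+1)=2\ceil{\log n}+1$ qubits, matching the claimed bound. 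I do not foresee any substantive obstacle here: the only real observation is that $1/4$ is precisely the density at which one Grover iteration is exact; after that, splitting the oracle between the two parties is routine, and the only bookkeeping is the single extra bit in the return message that lets Alice verify the measured index.
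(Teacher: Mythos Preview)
Your proof is correct and essentially identical to the paper's: both run a single exact Grover iteration on $z=x\oplus y$ by splitting the phase oracle as $U_z=U_xU_y$, observing that $k/n=1/4$ makes one iteration land exactly on the marked subspace, and then transmit the measured index together with one input bit for verification. The only cosmetic difference is that the paper has Bob start and Alice finish (so Bob outputs), whereas you swap their roles; the communication cost and round structure are the same.
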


\begin{proof}
Let $x$ and $y$ be the inputs of Alice and Bob, respectively, and $z = x \oplus y$ be their bit-wise addition.
The promise ensures that either $|z| = 0$ if $x = y$ or $|z| = n/4$ in the case where $x \neq y$.

If a bit string $z\in \{0,1\}^n$ is known to contain exactly $n/4$ entries that are 1,
Grover's algorithm~\cite{Grover:1996} is able to find one of these entries without error~\cite{Boyer:1998}, needing only a single query to
the string $z$. For any string we define the query unitary $U_z=\sum_{i=1}^{n} (-1)^{z_i}  \ketbra{i}{i}$ and we define
$\ket{s} = \frac{1}{\sqrt{n}} \sum_{i=1}^{n} \ket{i}$ to be the uniform superposition of all basis states. 
 Then $G = 2\ketbra{s}{s} - I$ is a unitary operation known as the Grover diffusion operator. 

The quantum communication protocol can be viewed as combining Grover's algorithm with a special case of the simulation theorem given in~\cite[Theorem~2.1]{Buhrman:1998}.
We want to perform the algorithm on
the effective string $z = x \oplus y$, using the fact that performing a single query $U_z$ is the same 
as performing the operations $U_x$ and $U_y$ in sequence, \ie, $U_z = U_x U_y = U_y U_x$. 

At the start of the protocol, Bob  creates the state $U_y \ket{s}$ and sends this state over to Alice using $\ceil{\log {n}}$ qubits.
Alice first applies $U_x$ to the incoming state and then applies
the Grover operator $G$. The final state 
of Grover's algorithm is
$\frac{1}{\sqrt{n/4}} \sum_{i \text{ s.t.\ } z_i=1} \ket{i}$ if $|z|=n/4$. 
That is, in the case that $x \neq y$, Grover's algorithm has
produced a superposition over all indices $i$ such that $x_i \neq y_i$.
Alice measures the state, obtaining
some index $i^*$ such that $x_{i^*} \neq y_{i^*}$ if $x \neq y$. 
Then she sends $i^*$ and the value $x_{i^*}$ over to Bob using $\ceil{\log n}+1$ qubits. He outputs `equal' if and only if $x_{i^*} = y_{i^*}$. The total communication cost
of the protocol is then $2 \ceil{\log n} + 1$ qubits.
\end{proof}

The above protocol can be extended to efficiently solve \peq-$\binom{n}{\alpha n}$ for constants~$\alpha < 1/2$ (independent of~$n$) in a constant number of rounds, by using a more general exact version of the Grover search algorithm. This construction is described in Appendix~\ref{sec:exact-grover-general}.

\subsection{Distances close to~\texorpdfstring{$n/2$}{n/2}}

In~\cite{Buhrman:1998, Gruska:2014} it is shown that for any~$\alpha \geq 1/2$, the problem \peq-$\binom{n}{\alpha n}$ admits an $O(\log n)$-qubit one-round quantum protocol, whereas the proof of our Theorem~\ref{thm:xi-smaller12} shows that for~$\alpha \in (0,1/2)$ (independent of~$n$), the one-round quantum communication complexity of \peq-$\binom{n}{\alpha n}$ is at least~$\Omega(n)$.
Does the threshold of this exponential jump sit exactly at $1/2$? In the following simple lemma, we prove that this is not the case. When  $\alpha$ is strictly smaller than $1/2$ but very close to it, the one-round quantum communication complexity of \peq-$\binom{n}{\alpha n}$ still requires only a logarithmic number of qubits.

\begin{lemma}
For $d = n/2 - \ell$ with $\ell \leq O(\log n)$, $n$ and $d$ even,
the one-round quantum communication complexity of \peq-$\binom{n}{d}$ is at most $O(\log n)$.
\end{lemma}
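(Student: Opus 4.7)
The plan is to apply Theorem~\ref{thm:1roundquantum} and exhibit an orthonormal representation of $H(n,d)$ of dimension $O(n^2)$, which immediately yields a one-round quantum protocol using $O(\log n)$ qubits.

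The motivation is the Deutsch--Jozsa-style representation $x\mapsto \tfrac{1}{\sqrt n}\sum_i (-1)^{x_i}\ket{i}$, whose pairwise inner products are proportional to $K_1^n(d)=n-2d$ and therefore vanish only at $d=n/2$. For $d=n/2-\ell$ with $\ell$ small, the natural fix is to move up one level in the Krawtchouk hierarchy. A short direct calculation (using $\sum_i(-1)^{z_i}=n-2|z|$) gives
\begin{equation*}
K_2^n(d)=\binom{n-d}{2}-d(n-d)+\binom{d}{2}=\frac{(n-2d)^2-n}{2}=2\ell^2-\frac{n}{2},
\end{equation*}
which is strictly negative precisely when $\ell<\sqrt{n}/2$. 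In particular, for $\ell\leq O(\log n)$ and all sufficiently large $n$, one has $K_2^n(d)<0$.

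Fixing such an $\ell$, I would take an orthonormal basis $\{e_0\}\cup\{e_{\{i,j\}}:1\leq i<j\leq n\}$ of $\C^{1+\binom{n}{2}}$ and, for each $x\in\{0,1\}^n$, set
\begin{equation*}
v_x \;=\; \alpha\, e_0 \;+\; \beta\sum_{1\leq i<j\leq n}(-1)^{x_i+x_j}\,e_{\{i,j\}},
\end{equation*}
with positive reals $\alpha,\beta$ to be chosen. Using $(-1)^{x_i+x_j}(-1)^{y_i+y_j}=(-1)^{(x\oplus y)_i+(x\oplus y)_j}$ together with the identity $\sum_{i<j}(-1)^{z_i+z_j}=K_2^n(|z|)$, one finds that $\langle v_x,v_y\rangle = \alpha^2+\beta^2 K_2^n(d)$ whenever $|x\oplus y|=d$. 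Setting $\alpha^2 = -\beta^2 K_2^n(d)=\beta^2(n/2-2\ell^2)>0$ makes this inner product vanish, and a further rescaling ensures $\|v_x\|^2=\alpha^2+\beta^2\binom{n}{2}=1$. This yields an orthonormal representation of $H(n,d)$ in dimension $1+\binom{n}{2}$, so $\xi(H(n,d))\leq 1+\binom{n}{2}$ and Theorem~\ref{thm:1roundquantum} gives communication at most $\lceil\log(1+\binom{n}{2})\rceil = O(\log n)$.

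No serious obstacle arises; the only point to isolate is that $K_2^n(d)<0$ in the prescribed range of $\ell$, which reduces to the elementary inequality $4\ell^2<n$ and holds for all $\ell=O(\log n)$ once $n$ is large enough. Conceptually, the argument even suggests a general scheme: for $\ell$ up to order $\sqrt{kn}$ one should be able to build an orthonormal representation of dimension $O(n^k)$ by taking all subsets of size $k$ in place of pairs, since the roots of $K_k^n$ closest to $n/2$ lie at distance $\Theta(\sqrt{kn})$ from $n/2$.
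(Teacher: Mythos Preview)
Your argument is correct and in fact proves a bit more than stated: the representation of dimension $1+\binom{n}{2}$ works for all $\ell<\sqrt{n}/2$, not just $\ell=O(\log n)$.

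The paper's proof takes a different, more combinatorial route. It observes that if Alice sends the first $2\ell$ bits of her string verbatim (as basis states), then either Bob sees a discrepancy there, or the remaining $n'=n-2\ell$ bits constitute an instance of \peq-$\binom{n'}{n'/2}$, for which the Deutsch--Jozsa representation of dimension $n'$ applies. This yields the representation $x\mapsto e_{x_1}\otimes\cdots\otimes e_{x_{2\ell}}\otimes \tfrac{1}{\sqrt{n'}}\sum_i(-1)^{x_{i+2\ell}}e_i$ of dimension $2^{2\ell}(n-2\ell)$. Compared with your degree-$2$ Krawtchouk construction, the paper's reduction is simpler to state but its dimension blows up with $\ell$ and only stays polynomial for $\ell=O(\log n)$, whereas your dimension $O(n^2)$ is uniform and your scheme extends cleanly to larger $\ell$ via higher-degree monomials, exactly as you sketched at the end.
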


\begin{proof}
We first make the following easy observation. Suppose Alice sends to Bob the first $2\ell$ bits of her input. If this $2 \ell$-bit string differ from Bob's initial part of the input, he knows that the answer is `not equal'. Otherwise Alice and Bob have to exchange information about the remaining part of their inputs, which have length $n' = n - 2\ell$ and they are either equal or differ in exactly $d' = d = n/2 - \ell = n'/2$ positions.

Hence, the map $\phi : \{0,1\}^n \to \mathbb{C}^{2^{2\ell} n'}$ that sends $x \mapsto x_1 \otimes x_2 \otimes \dots \otimes x_{2 \ell} \otimes \frac{1}{\sqrt{n'}} \sum_{i=1}^{n'} (-1)^{x_{i+2\ell}} e_i$, where $e_i$ is the $i$-th canonical basis vector, is an orthonormal representation of the graph $H(n,d)$.
The result now follows from Theorem~\ref{thm:1roundquantum}.
\end{proof}

\section{The list problem}\label{sec:list}
In this section, we consider the~{\sc $\mathcal L$-list} problem:  Bob gets a list~$L\in \mathcal L$ from a family~$\mathcal L\subseteq  2^{\mathcal X}$ of lists, Alice gets an element~$x\in L$, and Bob must learn~$x$.
\subsection{Classical communication complexity of the list problem}

Here we prove Theorem~\ref{thm:list-general}.
The list problem that gives the result is simple: For positive integers~$k,N$ such that~$2\leq k\leq N$, we consider the list problem~$\mathcal L = \binom{\mathcal [N]}{k}$, where the family of lists consists of all~$k$-element subsets of~$[N]$.
Note that for this~$\mathcal L$, we clearly have~$\omega(\mathcal L) = k$ (not to be confused with $\omega(G_{\mathcal L}) = N$) and that~$G_{\mathcal L}$ is the complete graph on~$N$ vertices, giving~$\chi(G_\mathcal L) = N$.
Hence, Theorem~\ref{thm:4-rounds} gives a four-round protocol using at most $\log \log N + 2 \log k + O(\log \log k)$ bits of communication.

\begin{theorem}\label{thm:list}
The classical communication complexity of {\sc $\binom{[N]}{k}$-list} is at least 
\beqn
\log \log N + 2 \log (k-1) - \log \log (k-1) - O(1).
\eeqn
\end{theorem}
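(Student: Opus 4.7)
The plan is to combine two lower bounds: the $\log\log N$ term will come from a simulation argument that converts multi-round protocols to one-round ones, and the $2\log(k-1)-\log\log(k-1)$ term will require a more delicate analysis of the rectangle structure of the protocol tree.

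For the $\log\log N$ term, I would apply the simulation used to prove Orlitsky's Theorem~\ref{thm:lb-classical}: an arbitrary $C$-bit multi-round protocol can be converted to a one-round protocol where Alice, on input $x$, sends the indicator vector $(\mathbb{1}[x \in A_t])_{t \in T_\Pi}$ of length $|T_\Pi| \leq 2^C$, from which Bob decodes $x$ by locating the unique transcript $t$ with $L\in B_t$ and $L \cap A_t \ni x$. Since the one-round classical complexity of {\sc $\binom{[N]}{k}$-list} equals $\lceil \log\chi(G_{\mathcal L})\rceil = \lceil \log N\rceil$ (as $G_{\mathcal L}$ is the complete graph $K_N$), we get $|T_\Pi| \geq \log N$, and thus $C \geq \log\log N - O(1)$.

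For the $2\log(k-1)$ contribution, I would refine the analysis of the rectangles $A_t \times B_t$. Each such rectangle satisfies the correctness constraint $|L \cap A_t|\leq 1$ for every $L \in B_t$, and setting $u_t = \sum_{L \in B_t} |L \cap A_t|$, every valid pair is covered exactly once so $\sum_t u_t = k\binom{N}{k}$. A key sharpening of the simulation uses the fact that Alice participates in at most $2^{c_B}$ distinct transcripts (where $c_B$ is Bob's total communication), since her strategy is a function of the $\leq 2^{c_B}$ possible Bob-message-sequences. Thus her indicator vector has Hamming weight at most $2^{c_B}$, so the number of admissible Alice messages is at most $\binom{|T_\Pi|}{\leq 2^{c_B}} \leq |T_\Pi|^{2^{c_B}}$. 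The requirement of $N$ distinct messages yields $2^{c_B}\log|T_\Pi| \geq \log N - O(1)$, and combined with the entropy inequality $c_A \geq H(x\mid L) = \log k$ (since $x$ is a deterministic function of $(T_A,L)$), optimizing over $(c_A,c_B)$ gives $C \geq \log k + \log\log N - \log\log k$.

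The main obstacle is to extract the extra factor of $\log(k-1)$ missing from the naive combination above. I would close this gap using a double-counting on ordered triples $(x_1,x_2,L)$ with $x_1\neq x_2 \in L$: the transcripts $t(x_1,L)$ and $t(x_2,L)$ must differ, and since each $L \in B_{t_1}\cap B_{t_2}$ gives at most one such triple (by the constraint $|L \cap A_{t_i}|\leq 1$), one obtains $\sum_{t_1\neq t_2}|B_{t_1}\cap B_{t_2}| \geq k(k-1)\binom{N}{k}$. Interpolating this pair-distinguishing bound with the weight-restricted simulation and the rectangle constraints, I expect to derive $|T_\Pi| \geq \Omega\bigl((k-1)^2\log N / \log(k-1)\bigr)$, from which taking logarithms yields $C \geq \log\log N + 2\log(k-1) - \log\log(k-1) - O(1)$, as claimed.
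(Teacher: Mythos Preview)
Your proposal has a genuine gap at the crucial step. The first two paragraphs recover only the weaker bound $C \geq \log k + \log\log N - \log\log k - O(1)$, which you correctly acknowledge. Everything then rests on the third paragraph, but there you do not actually give an argument: the triple-counting inequality $\sum_{t_1\neq t_2}|B_{t_1}\cap B_{t_2}| \geq k(k-1)\binom{N}{k}$ by itself only yields $|T_\Pi|\geq \sqrt{k(k-1)}$ (bounding each $|B_{t_1}\cap B_{t_2}|$ by $\binom{N}{k}$), and you give no concrete mechanism for ``interpolating'' it with the weight-restricted simulation to produce the extra factor $\log N/\log(k-1)$. The bound you say you ``expect to derive'', namely $|T_\Pi|\geq \Omega\bigl((k-1)^2\log N/\log(k-1)\bigr)$, is precisely the D\'yachkov--Rykov bound on $r$-cover-free families (Theorem~\ref{thm:coverfree} in the paper), which is a nontrivial extremal result in its own right; asserting that some unspecified interpolation will recover it is not a proof.

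The paper takes a cleaner route that sidesteps the need to reprove D\'yachkov--Rykov from scratch: it associates to each input $x\in[N]$ the set $F_x$ of transcripts consistent with $x$ being Alice's input (agreeing with Alice's message sequence), and shows directly that $\{F_x\}_{x\in[N]}$ is a $(k-1)$-cover-free family of $N$ sets over the universe $\mathcal T$ of transcripts. The cover-free property follows because for any list $L=\{x_0,\dots,x_{k-1}\}$, the transcript $T_{x_0,L}$ lies in $F_{x_0}$ but in none of $F_{x_1},\dots,F_{x_{k-1}}$ (otherwise Bob could not distinguish $x_0$ from some $x_i$ on input $L$). Applying Theorem~\ref{thm:coverfree} then gives $|\mathcal T|\geq c(k-1)^2\log N/\log(k-1)$, and since $|\mathcal T|\leq 2^{C+1}-1$, taking logarithms yields the claimed bound. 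If you want to salvage your approach, the missing ingredient is exactly this reduction to cover-free families (or an equivalent rederivation of the D\'yachkov--Rykov bound).
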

 
To see that this implies Theorem~\ref{thm:list-general} note that the above bound 
 can be written as
$\log\log\chi(G_\mathcal L) + (2-o(1))\log\omega(\mathcal L)$, where the term~$o(1)$ goes to zero as~$k\to\infty$.
Choosing~$k = \log N$ then gives the second part of the theorem.

To prove Theorem~\ref{thm:list}, we use a bound on the size of cover-free families due to D\'yachkov and Rykov~\cite{Dyachkov:1982}; see~\cite{Ruszinko:1994, Furedi:1996} for simplified proofs (in English).

\begin{definition}
Let~$r$ be a positive integer and~$\mathcal S$ be a finite set.
A family $\mathcal F\subseteq 2^{\mathcal S}$ of at least~$r+1$ subsets is \emph{$r$-cover-free} if  every subfamily of $r+1$ distinct sets $F_0,F_1,\dots,F_r \in \mathcal F$ satisfies
$F_0 \nsubseteq F_1 \cup \dots \cup F_r$. 
\end{definition}

\begin{theorem}[D\'yachkov--Rykov]\label{thm:coverfree}
There exists an absolute constant~$c>0$ such that the following holds.
Let~$N$ and~$r$ be positive integers such that~$N \geq r+1$ and~$r\geq 2$. Let~$\mathcal S$ be a finite set.
Let $\mathcal F\subseteq 2^{\mathcal S}$ be an $r$-cover free family consisting of~$N$ sets.
Then, 
\beqn
|\mathcal S| \geq \frac{cr^2 \log N}{\log r}.
\eeqn
\end{theorem}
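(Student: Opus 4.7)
The plan is to prove the Dyachkov--Rykov bound by the standard combination of a quantitative double-counting inequality that captures the cover-free condition, followed by an iterated structural reduction that amplifies the inequality to the claimed sharp dependence on $r$ and $N$. The overall scheme follows the outline of the simplified proofs of Ruszinko and F\"uredi, which give the same bound (with slightly weaker constants) than the original.

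The first step is a standard reduction to the uniform case: by padding or truncating each set one may assume without loss of generality that every $F \in \mathcal F$ has a common cardinality $k$, at the expense of only a constant multiplicative loss in $N$ (absorbed in the constant $c$). The heart of the argument is then the following double-counting. For each $F \in \mathcal F$ and each $r$-element subfamily $R \subseteq \mathcal F \setminus \{F\}$, the $r$-cover-free property yields a witness element $\sigma(F,R) \in F \setminus \bigcup R$. Each element $x$ of degree $d_x := |\{F \in \mathcal F : x \in F\}|$ can serve as the witness for at most $d_x \binom{N-d_x}{r}$ such pairs, so
\beqn
\sum_{x \in \mathcal S} d_x \binom{N-d_x}{r} \,\geq\, N \binom{N-1}{r}.
\eeqn
Since the function $d \mapsto d \binom{N-d}{r}$ attains its maximum near $d \approx N/(r+1)$, this inequality alone yields $m = \Omega(r)$, which is still far from the goal.

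To boost the bound I would exploit the following structural observation: for any $x^\star \in \mathcal S$, the restricted family $\mathcal F_{x^\star} := \{F \setminus \{x^\star\} : F \in \mathcal F,\ x^\star \in F\}$ is still $r$-cover-free on the ground set $\mathcal S \setminus \{x^\star\}$ of size $m-1$, and it has cardinality $d_{x^\star}$. (Indeed, any putative covering $F_0 \setminus \{x^\star\} \subseteq \bigcup_{i=1}^r (F_i \setminus \{x^\star\})$ with all $F_i \ni x^\star$ lifts to $F_0 \subseteq \bigcup_{i=1}^r F_i$ since $x^\star \in F_0 \cap F_1$.) Iterating this restriction at carefully chosen high-degree elements, and combining with the counting inequality above to control how rapidly $N$ can shrink at each step, yields a recursion whose solution is $\phi(m,r) \leq 2^{O(m \log r / r^2)}$, where $\phi(m,r)$ denotes the maximum size of an $r$-cover-free family on $m$ points. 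Inverting this recursion gives exactly $m \geq c r^2 \log N / \log r$.

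The main obstacle is obtaining the correct exponents in the recursion. The naive approach of removing a single high-degree element per iteration produces only a bound of the form $m = \Omega(r \log_r N)$, off by a factor of $r$; extracting the full quadratic factor in $r$ requires a more delicate batched iteration in which one removes $\Theta(r/\log r)$ elements per step and carefully balances the per-step shrinkage of $N$ (controlled by the counting inequality) against the decrease in the ground set. Verifying that the restricted families genuinely retain the cover-free property at every level, and that the double-counting can be re-applied with parameters that do not degrade too quickly, is the technical core of the argument, and is what distinguishes the Dyachkov--Rykov bound from the standard probabilistic $m = \Omega(r^2 \log N)$ estimate for random superimposed codes.
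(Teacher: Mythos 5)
The paper does not prove Theorem~\ref{thm:coverfree}; it cites the bound as due to D\'yachkov and Rykov~\cite{Dyachkov:1982}, points to Ruszink\'o~\cite{Ruszinko:1994} and F\"uredi~\cite{Furedi:1996} for simplified proofs, and then uses the statement as a black box inside the proof of Theorem~\ref{thm:list}. There is therefore no internal proof to compare against, so your proposal has to stand on its own.

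On those terms it is an incomplete sketch rather than a proof. The preliminaries are fine: the double-counting inequality $\sum_{x} d_x \binom{N-d_x}{r} \geq N\binom{N-1}{r}$ is correct, and so is the observation that the star $\mathcal F_{x^\star}=\{F\setminus\{x^\star\}: x^\star\in F\in\mathcal F\}$ remains $r$-cover-free on $\mathcal S\setminus\{x^\star\}$. But, as you yourself note, the double counting alone gives only $|\mathcal S|=\Omega(r)$, and the passage from that to the claimed $\Omega(r^2\log N/\log r)$ --- which is the entire content of the D\'yachkov--Rykov theorem and precisely what separates it from the trivial linear bound --- is asserted, not proved. The ``recursion whose solution is $\phi(m,r)\leq 2^{O(m\log r/r^2)}$'' is never written down, no concrete degree-threshold for $x^\star$ is specified, and the ``batched iteration'' balancing the shrinkage of $N$ against the shrinkage of the ground set is left entirely abstract; you flag this yourself as ``the technical core of the argument.'' I would also caution that the sketch does not actually track the cited simplified proofs: Ruszink\'o and F\"uredi argue by associating to each $F\in\mathcal F$ a small ``private'' subset and then applying an LYM/Sperner antichain bound, not by iterated restriction to stars of high-degree elements. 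Finally, the claimed ``standard reduction to the uniform case'' at constant multiplicative loss in $N$ is itself unjustified --- truncation can destroy cover-freeness, and padding can grow the ground set --- though in the usual arguments one instead passes to the largest size class at a logarithmic loss in $N$, which the bound can absorb. As written, this is a plan whose hardest step is missing; it cannot be accepted as a proof of the theorem.
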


\begin{proof}[Proof of Theorem~\ref{thm:list}]
For a positive integer~$C$, suppose that the communication complexity of {\sc $\binom{[N]}{k}$-list} is~$C$.
Fix such a protocol and for every input pair~$(x,L)$ in the {\sc $\binom{[N]}{k}$-list} problem, define the \emph{transcript} $T_{x,L}\in \bset{}\cup\bset{2}\cup\cdots\cup\bset{C}$ as the concatenation of the parties' messages in the order they are sent during their conversation on input~$(x,L)$.
Let~$\mathcal T$ be the set of said transcripts.

For each transcript~$T\in\mathcal T$, denote by~$T^A$ the sequence of Alice's messages in~$T$, to be understood as a sequence of strings indexed by her rounds in the conversation.
Let~$\mathcal F  = \{ F_{x} \}_{x \in \mathcal X} \subseteq 2^{\mathcal T}$ be the family where each $F_{x}$ is the collection of transcripts~$T\in \mathcal T$ that is consistent with $x$ being Alice's input and that  agrees on~$T^A$.
We claim that~$\mathcal F$ is a $(k-1)$-cover free family.
To see this, take any $k$ sets of $\mathcal F$, say~$F_{x_0},\dots,F_{x_{k-1}}$, and let $L$ be the corresponding $k$-element list $ \{x_0,\dots,x_{k-1}\}$.
Consider the transcript $T_{x_{0},L}$ of the input pair $(x_{0},L)$.
Clearly,~$T_{x_0,L}\in F_{x_0}$.
We show that~$T_{x_0,L}\not \in F_{x_i}$ for each ${i\in\{1,\dots,k-1\}}$, which gives the claim as this implies that~$F_{x_0}\not\subseteq F_{x_1}\cup\cdots\cup F_{x_{k-1}}$.
Suppose that $T_{x_0,L} \in F_{x_i}$ holds for some ${i\in\{1,\dots,k-1\}}$.
This means that Alice sends identical message sequences 
on inputs~$x_0$ and~$x_i$ and therefore that Bob is not able to distinguish between these two cases for the input pair $(x_{0},L)$,
  contradicting our assumption that we started with a functional protocol.

We also claim that~$\mathcal F$ consists of at least~$N$ sets.
Indeed, for every pair ${x,y\in[N]}$, there is a list~$L\in\binom{[N]}{k}$ containing both~$x$ and~$y$.
Since we must have~$T_{x,L}^A \ne T_{y,L}^A$ for Bob to be able to distinguish~$x$ and~$y$ on input~$L$, the inputs~$x$ and~$y$ induce distinct transcript sets.

It thus follows from Theorem~\ref{thm:coverfree} that the total number of distinct transcripts is at least
\beqn
|\mathcal T| \geq \frac{c(k-1)^2\log N}{\log (k-1)}.
\eeqn
Hence, since~$\mathcal T\subseteq \bset{}\cup\bset{2}\cup\cdots\cup\bset{C}$, we have
\beqn
\frac{2^{C+1} - 1}{2-1}
=
\sum_{l=0}^C2^l
\geq
\frac{c(k-1)^2\log N}{\log (k-1)},
\eeqn
for some absolute constant~$c>0$.
Taking logarithms now gives the claim.
\end{proof}

\subsection{Quantum communication complexity of the list problem}

Analogous to Witsenhausen's result, the one-round quantum communication complexity of a list problem is characterized in terms of the orthogonality dimension of its associated graph.

\begin{lemma}\label{lem:list-1-quantum}
For every family~$\mathcal L\subseteq 2^{\mathcal X}$, the one-round quantum communication complexity of {\sc $\mathcal L$-list} equals~$\ceil{\log \xi(G_{\mathcal L})}$.
\end{lemma}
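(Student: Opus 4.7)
The plan is to mirror the proof of Theorem~\ref{thm:1roundquantum}, since the list problem is structurally analogous to promise equality: in both cases Bob must perfectly distinguish among a collection of states that Alice prepares from her input. I will establish the two matching inequalities by leveraging the Orthogonality Lemma (Lemma~\ref{lem:orth}), exactly as de~Wolf's argument does.

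For the upper bound, suppose $\xi(G_{\mathcal L}) = d$ and let $\phi : \mathcal X \to \mathbb{C}^d$ be a $d$-dimensional orthonormal representation of $G_{\mathcal L}$. The protocol is simply that Alice sends $\phi(x)$ on input~$x$, using $\ceil{\log d}$ qubits. By the definition of $G_{\mathcal L}$, any two distinct elements of a list $L \in \mathcal L$ form an edge, so $L$ is a clique in $G_{\mathcal L}$ and $\{\phi(x) : x \in L\}$ is a set of pairwise orthogonal unit vectors. Bob, who knows $L$, then applies the measurement guaranteed by Lemma~\ref{lem:orth} to identify Alice's vector and hence her input.

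For the lower bound, consider any one-round quantum protocol in which Alice sends a (possibly mixed) state $\rho_x \in \mathbb C^{d\times d}$ on input~$x$, using $q = \log d$ qubits. For every $L \in \mathcal L$, Bob must perfectly identify Alice's input, so the states $\{\rho_x : x \in L\}$ are perfectly distinguishable; Lemma~\ref{lem:orth} then forces $\rho_x \rho_{x'} = 0$ for all distinct $x,x' \in L$. Since every edge of $G_{\mathcal L}$ arises from such a pair, choosing any unit vector $\phi(x)$ in the support of $\rho_x$ yields a $d$-dimensional orthonormal representation of $G_{\mathcal L}$, whence $\xi(G_{\mathcal L}) \leq d = 2^q$. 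Rearranging gives $q \geq \ceil{\log \xi(G_{\mathcal L})}$, matching the upper bound.

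The only mild subtlety is handling possibly mixed messages in the lower bound rather than reducing to pure states up front; this is dispatched cleanly by invoking the general (positive semidefinite) form of the Orthogonality Lemma and passing to a unit vector in the support of each $\rho_x$. No further work is required.
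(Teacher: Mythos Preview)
Your proof is correct and follows essentially the same approach as the paper: both directions invoke the Orthogonality Lemma, with the upper bound sending the orthonormal representation as the message and the lower bound extracting an orthonormal representation from Alice's messages. The only cosmetic difference is that the paper assumes without loss of generality that Alice sends pure states, whereas you work with mixed states $\rho_x$ and pass to a unit vector in each support; since $\rho_x\rho_{x'}=0$ forces orthogonal supports, this yields the same conclusion with no real extra content.
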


\begin{proof}
Consider an optimal one-round protocol. Without loss of generality, we can assume that Alice sends to Bob a pure state~$\ket{\phi_x} \in\C^d$ on input $x\in\mathcal X$.
Then, given a  list~$L\in\mathcal L$, Bob has a measurement that allows him to distinguish the states~$\{\ket{\phi_x}\st x\in L\}$.
It thus follows from Lemma~\ref{lem:orth} that these states must be orthogonal.
In particular, since for every list~$L\in\mathcal L$, each pair of distinct elements~$x,y\in L$ forms an edge in~$G_{\mathcal L}$, the vectors~$\ket{\phi_x}$, $x\in\mathcal X$, form a $d$-dimensional orthogonal representation.
Hence, $\xi(G_\mathcal L) \leq d$.

Conversely, let $f:V(G_\mathcal L)\to\C^d$ be an orthogonal representation of $G_\mathcal L$. 
Then, for every list~$L\in\mathcal L$, the vectors~$\{f(x)\st x\in L\}$ are pairwise orthogonal.
If Bob gets a list~$L\in\mathcal L$ and Alice gets an element~$x\in L$, it follows from  Lemma~\ref{lem:orth} that there is a quantum measurement allowing Bob to uniquely identify~$x$ when Alice sends~$f(x)$ using $\log d$-qubits.
Hence, the one-round quantum communication complexity is at most~$\ceil{\log \xi(G_\mathcal L)}$.
\end{proof}

For multi-round protocols, a quantum analogue of Theorem~\ref{thm:lb-classical} also holds.

\begin{lemma}\label{lem:lb-quantum}
For every family~$\mathcal L\subseteq 2^{\mathcal X}$, the quantum communication complexity of {\sc $\mathcal L$-list} is at least $\max\{ \Omega(\log \log \chi(G_{\mathcal L})), \log \omega(\mathcal L) \}$.
\end{lemma}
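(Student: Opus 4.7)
The plan is to establish the two lower bounds separately, mirroring the structure of Orlitsky's classical proof of Theorem~\ref{thm:lb-classical}, and then combine them by taking the maximum.

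\textbf{The $\Omega(\log\log\chi(G_\mathcal L))$ term.} Suppose we are given a $q$-qubit quantum protocol for {\sc $\mathcal L$-list}. I would invoke Kremer's Theorem~\ref{thm:kremer} to turn this into a one-round classical protocol of cost $2^{O(q)}$ bits for the same problem, and then apply Witsenhausen's characterization of the one-round classical communication complexity of list problems, which forces $2^{O(q)} \geq \lceil \log \chi(G_\mathcal L) \rceil$; rearranging yields $q \geq \Omega(\log\log\chi(G_\mathcal L))$. Note that going instead through the quantum one-round lower bound of Lemma~\ref{lem:list-1-quantum} would only give $\log\log\xi(G_\mathcal L)$, which is weaker because $\xi(G_\mathcal L)\leq\chi(G_\mathcal L)$; passing through Kremer's classical simulation is therefore essential.

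\textbf{The $\log\omega(\mathcal L)$ term.} Fix a list $L^\star\in\mathcal L$ with $|L^\star| = \omega(\mathcal L)$ and restrict attention to input pairs $(x,L^\star)$ with $x\in L^\star$. Since Bob's input is fixed and held before the protocol starts, all $\log\omega(\mathcal L)$ bits of classical information that he must eventually know about $x$ have to be transmitted by Alice. The plan is to apply the Holevo bound to conclude that Alice's outgoing communication to Bob must be at least $\log\omega(\mathcal L)$ qubits. Concretely, I would purify both parties' local operations so that every message is a pure quantum state, extract the ensemble of reduced states on Bob's side at the end of the protocol (indexed by Alice's input $x\in L^\star$), and apply Holevo's bound together with the fact that Bob must perfectly distinguish these $\omega(\mathcal L)$ states in order to recover $x$.

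The main obstacle is the rigorous application of the Holevo-style argument in the interactive setting: one has to rule out the possibility that adaptive back-messages from Bob allow Alice to somehow compress her outgoing transmission below $\log\omega(\mathcal L)$ qubits. Handling this carefully is standard in the quantum communication complexity literature --- for instance, by purifying both sides and then applying Holevo to the final Bob-side marginal, or equivalently by a direct analysis of the accessible classical information at the end of the protocol. Once both lower bounds are in hand, taking their maximum yields the claim.
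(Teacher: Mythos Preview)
Your proposal is correct and follows essentially the same approach as the paper: Kremer's Theorem combined with Witsenhausen's characterization for the $\Omega(\log\log\chi(G_{\mathcal L}))$ term, and Holevo's Theorem for the $\log\omega(\mathcal L)$ term. The paper's proof is in fact terser than yours---it simply asserts that ``to retrieve $\log\omega(\mathcal L)$ bits of information $\log\omega(\mathcal L)$ qubits are necessary'' without dwelling on the interactive subtlety you flag---so your more careful treatment of the Holevo step (purifying and applying the bound to Bob's final marginal) only adds rigor, not a different idea.
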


\begin{proof}
Kremer's Theorem (Theorem~\ref{thm:kremer}) shows that there is at most an exponential difference between the (multi-round) quantum and one-round classical communication complexity.
Hence, by Witsenhausen's result, the former is at least~$\Omega(\log\log \chi(G_{\mathcal L}))$.
Moreover, on the worst input Bob has to be able to distinguish among~$\omega(\mathcal L)$ different elements. Hence, $\log \omega(\mathcal L)$ bits of information must be communicated and Holevo's Theorem~\cite{Holevo:1973} says that to retrieve $\log \omega(\mathcal L)$ bits of information $\log \omega(\mathcal L)$ qubits are necessary.
\end{proof}

\subsection{Proof of Theorem~\ref{thm:simple2rnd}}

Recall that we are considering the following family of lists. 
For an even positive integer~$n$ and~$d\in[n]$, let~$\mathcal L_d \subseteq 2^{\bset{n}}$ be the family of all lists~$L\subseteq \bset{n}$ of maximal cardinality such that all strings in~$L$ have Hamming distance exactly~$d$.
We denote by~$\mathcal K$ the union $\mathcal L_{n/2}\cup\cdots\cup\mathcal L_{n}$.
In other words, this is the union of lists for which, individually, there is a one-round $O(\log n)$-qubit protocol.  

\begin{proof}[Proof of Theorem~\ref{thm:simple2rnd}]
Let $\ell = \ceil{\log n}$ and $U$ be the~$(\ell+1)$-qubit unitary matrix which satisfies
$U  \ket{0}  \ket{0}^{\otimes \ell} = \ket{0} \ket{0}^{\otimes \ell}$ and $U \ket{1} \ket{0}^{\otimes \ell}=\frac{1}{\sqrt{n}}\ket{1} \sum_{i=1}^{n}\ket{i}$.
Moreover, for any $2^\ell$-bit string $z$, we define the conditional query unitary $U_z$ which acts on the computational basis states as 
$U_z \ket{0} \ket{i} = \ket{0} \ket{i}$ and $U_z \ket{1} \ket{i} = (-1)^{z_{i}} \ket{1} \ket{i}$ for any $i \in [2^{\ell}]$.
For a small technicality, 
if $n$ is not a power of $2$, i.e. $\ell>\log n$, we will map any $n$-bit string to a $2^{\ell}$-bit string obtained by padding zeros to the original string. 
We can now explain the protocol.

Consider the input pair $(x,L)$ where~$L \in \mathcal L_{d}$. 
Bob looking at the list $L$ learns $d$ and sends to Alice the single qubit $\gamma \ket{0} + \sqrt{1-\gamma^2} \ket{1}$ where $\gamma^2 = 1 - \frac{n}{2d} \geq 0$. 
Alice pads the state $\ket{0}^{\otimes \ell}$ to the one she received and then applies in sequence the unitaries $U$ and $U_{x}$, obtaining the state 
$\ket{\phi_x} = U_{x} U \left( (\gamma \ket{0} + \sqrt{1-\gamma^2} \ket{1})\ket{0}^{\otimes \ell}\right) = \gamma \ket{0} \ket{0}^{\otimes \ell} + \sqrt{\frac{{1-\gamma^2}}{{n}}} \sum_{i=1}^{n} (-1)^{x_{i}} \ket{1}\ket{i}$.
She sends this to Bob using $\ceil{\log n} + 1$ qubits.
Notice that if $x, y \in \{0,1\}^{n}$ differ in exactly $d$ positions, then the states $\ket{\phi_x}$ and $\ket{\phi_y}$ are orthogonal to each other. 
Hence, by Lemma~\ref{lem:orth}, using the list $L$ Bob can perform a measurement that allows him to learn Alice's input $x$.
\end{proof}

\subsection{Entanglement-assisted and non-signaling communication complexity of the list problem}

In this last section, we present two results regarding the list problem when the players can only exchange classical bits but they are allow to share non-classical correlations.  

\paragraph{Quantum correlations.}
If Alice and Bob can share an entangled state and communicate classical bits, then they can use the teleportation protocol of Bennett et al.~\cite{Bennett:1993} to simulate the quantum communication with a factor of 2 overhead. 
However, there may be more efficient protocols. 
In particular, we show for the {\sc $\mathcal K$-list} problem, where~$\mathcal K$ is the union $\mathcal L_{n/2}\cup\cdots\cup\mathcal L_{n}$, a two-round entanglement-assisted protocol that uses only $\ceil{\log n} + 3$ classical bits.

\begin{lemma}\label{lem:entang}
For $\mathcal K = \mathcal L_{n/2}\cup\cdots\cup\mathcal L_{n}$, there exists a two-round entanglement-assisted protocol for {\sc $\mathcal K$-list} 
where Bob sends Alice a single bit and Alice replies with $\ceil{\log n}+2$ bits of communication.
\end{lemma}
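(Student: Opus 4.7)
My plan is to simulate the two-round quantum protocol of Theorem~\ref{thm:simple2rnd} in the entanglement-assisted classical model. Naive teleportation would turn each transmitted qubit into two classical bits for a total of $2(\log n + 2) = 2\log n + 4$ bits, a factor-of-two overhead; to reach the claimed $\log n + 3$ bits I exploit the fact that every quantum state appearing in that protocol has real amplitudes, which admits Pati-style remote state preparation (RSP) at roughly half the classical cost of full teleportation.

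The first step handles Bob's message. The qubit $\gamma\ket{0}+\sqrt{1-\gamma^2}\ket{1}$ has real amplitudes and its parameter $\gamma=\sqrt{1-n/(2d)}$ is fully determined by $d$, which Bob reads off from his list. Using one shared EPR pair and Pati's real-qubit RSP, Bob measures his half in the real orthonormal basis $\{\gamma\ket{0}+\sqrt{1-\gamma^2}\ket{1},\ -\sqrt{1-\gamma^2}\ket{0}+\gamma\ket{1}\}$ and sends the single-bit outcome to Alice; she applies $I$ or $iY$ (both real orthogonal on the $\{\ket 0,\ket 1\}$ basis) as correction and ends up holding exactly Bob's qubit. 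This accounts for Bob's single-bit message.

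Next, Alice appends $\ell=\ceil{\log n}$ ancillas in $\ket{0}$ and applies the unitaries $U$ and $U_x$ exactly as in the quantum protocol, obtaining the real-valued $(\log n + 1)$-qubit state $\ket{\phi_x}$. She now needs to transmit it to Bob using only $\log n + 2$ classical bits, rather than the $2(\log n + 1)$ bits of standard teleportation. Sharing $\log n + 1$ additional EPR pairs, I would execute a higher-dimensional analogue of Pati's RSP for real pure states in dimension $D=2n$: Alice performs a projective measurement with $2D$ outcomes labeled by pairs $(k,\pm)$, encoded in $\log(2D)=\log n + 2$ bits; Bob, having received the label, applies the corresponding real orthogonal correction $\pm C_k$ drawn from a fixed family of real orthogonal operators on his halves of the EPR pairs, and ends up with $\ket{\phi_x}$ on his side. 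Finally, Bob measures in the orthonormal basis $\{\ket{\phi_y}\}_{y\in L}$, which he can write down explicitly from his knowledge of $L$ (and hence of $d$ and $\gamma$), and outputs the unique $y\in L$ for which he obtains the corresponding outcome; by Lemma~\ref{lem:orth} this recovers $x$ with certainty.

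The main obstacle is that Alice does not classically know $\gamma$ — she only has Bob's qubit as a physical system — whereas standard RSP requires the sender to know the target state. I would resolve this by implementing Alice's measurement coherently: her RSP is realized by a quantum circuit that takes the physical $\gamma$-qubit as a quantum input and produces the outcome label from a measurement in a fixed basis on an enlarged register, so $\gamma$ is handled as a hidden parameter throughout the circuit without Alice ever having to learn it. The extra classical bit beyond $\log(2n)=\log n + 1$ in Alice's message is precisely what accommodates this coherent implementation (carrying a sign/phase that would otherwise be fixed by knowledge of $\gamma$). Putting the two steps together yields a total of $1 + (\log n + 2) = \log n + 3$ classical bits of communication, matching the claimed bound.
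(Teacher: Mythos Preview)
Your first step (Bob remote-state-prepares the real qubit $\gamma\ket{0}+\sqrt{1-\gamma^2}\ket{1}$ with one cbit) is correct and matches the paper. The gap is in step~3. Remote state preparation is, by definition, a protocol in which the sender \emph{classically knows} the target state and tailors her measurement basis accordingly; that is precisely why Pati's one-cbit protocol beats two-cbit teleportation. Here Alice does not know~$\gamma$, so she does not know~$\ket{\phi_x}$, and the primitive you invoke simply does not apply. Your ``coherent implementation'' paragraph does not repair this: feeding the physical $\gamma$-qubit into a circuit that ends with a measurement in a \emph{fixed} basis is just teleportation of an unknown state, which costs $2(\lceil\log n\rceil+1)$ bits, not $\lceil\log n\rceil+2$. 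You also assert a generic exact one-shot RSP for real pure states in dimension $D$ at cost $\log D+O(1)$; no such black-box result is available---Pati's argument is a qubit-specific trick relying on the order-two real orthogonal map $iY$, and there is no family of $2D$ real orthogonal corrections doing the analogous job for arbitrary real states in dimension~$D$.

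The paper's proof avoids both problems by never attempting to RSP the full state~$\ket{\phi_x}$. Instead it exploits the product structure of~$\ket{\phi_x}$: Alice and Bob pre-share the $n$-dimensional maximally entangled state $\tfrac{1}{\sqrt n}\sum_i\ket{i}_A\ket{i}_B$; Alice applies $U_x$ (which she \emph{does} know) and a Fourier transform to her half together with the $\gamma$-qubit, then measures her $n$-dimensional register in the computational basis and sends the outcome $\hat{\jmath}$ ($\lceil\log n\rceil$ bits), and finally teleports the single remaining $\gamma$-qubit to Bob ($2$ bits). Bob undoes the Fourier phase $\omega_n^{i\hat{\jmath}}$ and is left holding a state of the same form as~$\ket{\phi_x}$, which he can measure against his list. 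In other words, the $x$-dependent part is handled by a Fourier-based RSP of the special ensemble $\{\tfrac{1}{\sqrt n}\sum_i(-1)^{x_i}\ket i\}$ (which does admit $\lceil\log n\rceil$-bit exact RSP), while the unknown $\gamma$-qubit is handled by ordinary teleportation. This decomposition is what makes the bit count come out to $\lceil\log n\rceil+2$; a monolithic RSP of~$\ket{\phi_x}$ does not.
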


\begin{proof}
Let $(x,L)$ be Alice and Bob's input pair where $L \in \mathcal L_{d}$.
Consider the conditional query unitary $U_z$, where $z$ is a $n$-bit string, which acts on the computational basis states as 
$U_z \ket{0} \ket{i} = \ket{0} \ket{i}$ and $U_z \ket{1} \ket{i} = (-1)^{z_{i}} \ket{1} \ket{i}$ for any $i \in \{0,1\}^{n}$.
We show a two-round communication protocol that uses as shared entanglement the state $\frac{1}{\sqrt{n}} \sum_{i=1}^{n} \ket{i}_A \ket{i}_B$ together with two EPR pairs. 
We use the subscript $A$ (resp.\ $B$) to specify Alice's part of the state (resp.\ Bob's).

From the list $L$, Bob learns the distance $d$ and uses one EPR pair and one bit of communication to remote state prepare the qubit $\gamma \ket{0} + \sqrt{1-\gamma^2} \ket{1}$, where $\gamma^2 = 1 - \frac{n}{2d} \geq 0$ \cite{Pati2000,Lo2000}. 
Now Alice and Bob are sharing the entangled state: $\left( \gamma \ket{0}_A + \sqrt{1-\gamma^2} \ket{1}_A \right) \frac{1}{\sqrt{n}} \sum_{i=1}^{n} \ket{i}_A \ket{i}_B$.
Using her input $x$, Alice performs the unitary $U_{x}$ followed by the unitary $U  = \ketbra{0}{0} \otimes F_n + \ketbra{1}{1} \otimes F_n$ where $F_n$ is the $n \times n$ discrete quantum Fourier transform.
The entangled state is now:
\beqn
\gamma \ket{0}_A  \frac{1}{n} \sum_{i=1}^{n} \sum_{j =1}^{n} \omega^{ij}_n \ket{j}_A \ket{i}_B
+ 
(\sqrt{1-\gamma^2}) \ket{1}_A \frac{1}{n} \sum_{i=1}^{n} \sum_{j=1}^{n} (-1)^{x_{i}} \omega^{ij}_n \ket{j}_A \ket{i}_B,
\eeqn 
where $\omega_n$ is the $n$-th root of unity.
Alice measures in the computational basis her second $n$ qubits register and gets an outcome $\hat{j}$.
She sends $\hat{j}$ to Bob using $\ceil{\log n}$ classical bits.
Moreover, Alice teleports the qubit  $\gamma \ket{0} + \sqrt{1-\gamma^2} \ket{1}$  to Bob using the protocol of Bennett et al.~\cite{Bennett:1993}. This requires two classical bits and an EPR pair. 

He can then use $\hat{j}$ to perform a unitary $(\sum_i \omega_n^{i*} \ketbra{i}{i})^{\hat{j}}$ that will put his register in the state 
\beqn
\ket{\psi_x}= \gamma \ket{0} \frac{1}{\sqrt{n}} \sum_{i=1}^{n} \ket{i}
+ 
(\sqrt{1-\gamma^2}) \ket{1}  \frac{1}{\sqrt{n}} \sum_{i=1}^{n} (-1)^{x_{i}} \ket{i}.
\eeqn 
At last, we notice that if $x,y \in \{0,1\}^{n}$ differ in exactly $d$ bits then $\ket{\psi_x}$ is orthogonal to $\ket{\psi_y}$.
Using the elements of the list $L$, by Lemma~\ref{lem:orth}, Bob can construct a measurement that allows his to learn Alice's input. 
In total the protocol required $\ceil{\log n} + 2$ EPR pairs and $\ceil{\log n} + 3$ bits of classical communication.
\end{proof}

\paragraph{Non-signaling correlations.}
If the two parties can share non-signaling correlations, every list problem becomes trivial.
Let~$\mathcal{A,B,S}$, and~$\mathcal T$ be some sets.
Alice and Bob are said to \emph{share non-signalling correlations} if there is a (magically correlated) pair of devices taking inputs from~$\mathcal S$ and~$\mathcal T$, respectively, such that if Alice gives~$s\in \mathcal S$ to her device and Bob gives~$t\in\mathcal T$ to his, the devices return elements~$a\in \mathcal A$ and~$b\in \mathcal B$ with probability~$P(a,b|s,t)$, where the joint distributions~$P(\cdot,\cdot|s,t)$ on~$\mathcal A\times\mathcal B$ have marginals satisfying
\begin{align*}
\sum_{b\in\mathcal B}P(a,b|s,t) &= \sum_{b\in\mathcal B}P(a,b|s,t'),\quad \forall t,t'\in\mathcal T\\
\sum_{a\in\mathcal A}P(a,b|s,t) &= \sum_{a\in\mathcal A}P(a,b|s',t),\quad \forall s,s'\in\mathcal S.
\end{align*}
In words, the parties' marginal distributions depend only on their own inputs and not on the other's.

\begin{lemma}
For every family~$\mathcal L\subseteq 2^{\mathcal X}$,  there is a one-round non-signaling protocol that uses only $\ceil{\log \omega(\mathcal L)}$ bits of communication and this is optimal.
\end{lemma}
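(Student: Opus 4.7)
The plan is to establish the two bounds separately. For the upper bound I will construct a specific non-signaling correlation that effectively hands Bob the position of $x$ in his list, up to a shift that Alice transmits; for the matching lower bound I will use the fact that a non-signaling resource carries no mutual information between Alice's and Bob's inputs on its own, so all the $\log \omega(\mathcal L)$ bits required to distinguish Alice's element on a worst-case list must travel through the classical channel.

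Set $\omega = \omega(\mathcal L)$, fix a total order on $\mathcal X$ inducing a position function $\pi_L\colon L \to [\omega]$ for each $L \in \mathcal L$, with an arbitrary convention when $|L| < \omega$. Define the box
\[
P(a, b \mid x, L) = \begin{cases} 1/\omega & \text{if } a + b \equiv \pi_L(x) \pmod \omega, \\ 0 & \text{otherwise,}\end{cases}
\]
for $(a, b) \in [\omega]^2$. Both marginals equal $1/\omega$ uniformly, so the box is non-signaling. The one-round protocol is: Alice queries the box on $x$ to get $a$ and sends it using $\lceil \log \omega \rceil$ bits; Bob queries on $L$ to get $b$ and outputs the element at position $a + b \bmod \omega$ of $L$, which by construction is~$x$.

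For optimality, take a list $L^\star \in \mathcal L$ of size $\omega$ and let $X$ be uniform on $L^\star$. Suppose a non-signaling protocol uses $c$ bits of classical communication in total and always recovers Alice's element. Let $M$ denote the full classical transcript and $B$ the collection of all outputs Bob receives from the non-signaling resource. Correctness and data processing give
\[
\log \omega = H(X \mid L^\star) = I(X; \hat X \mid L^\star) \leq I(X; M, B \mid L^\star).
\]
The non-signaling property yields $I(X; B \mid L^\star) = 0$, since Bob's marginal output statistics depend only on his own inputs, which are determined by $L^\star$ (and, in the adaptive setting, by the portion of $M$ he has already received). Applying the chain rule,
\[
I(X; M, B \mid L^\star) = I(X; M \mid B, L^\star) \leq H(M \mid B, L^\star) \leq H(M) \leq c,
\]
so $c \geq \log \omega(\mathcal L)$, matching the protocol above.

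The step that will require the most care is the identity $I(X; B \mid L^\star) = 0$ in a multi-round protocol, since Bob's adaptive queries to the resource depend on intermediate messages and so indirectly on $X$. The cleanest way around this is to process rounds inductively, showing that after each round the non-signaling property persists in the form: conditioned on $L^\star$ and the portion of $M$ already received by Bob, his next batch of box outputs is independent of $X$. Once this is handled, the remaining information-theoretic manipulation is routine.
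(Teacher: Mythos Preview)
Your upper-bound construction is the same as the paper's (the paper writes the correlation as $(a,a+i)$ with $i=\pi_L(x)$, you write it as $a+b\equiv\pi_L(x)$; these are the same non-signaling box up to relabeling of Bob's output).

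For the lower bound the paper gives only a one-line assertion (``Bob has to distinguish among $\omega(\mathcal L)$ elements''), whereas you spell out an information-theoretic argument. Your argument is clean and correct in the natural reading of the paper's model where each party feeds its input to the device once, up front: then $I(X;B\mid L^\star)=0$ follows immediately from the non-signaling marginal condition, and the chain rule finishes. However, the inductive fix you propose for the fully adaptive case is not correct as stated. Take a single PR box: Alice queries with $s=X\in\{0,1\}$ and gets $a$, sends $m=a$, and then Bob queries with $t=1$ and gets $b=a\oplus X$. Conditioned on $L^\star$ and on the message $a$ that Bob has already received, his box output $b$ is \emph{not} independent of $X$; it determines $X$. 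So ``conditioned on $L^\star$ and the portion of $M$ already received, the next box output is independent of $X$'' fails whenever Alice's earlier box output (correlated with Bob's forthcoming one) has leaked into the transcript. The desired inequality $I(X;M,B\mid L^\star)\le c$ still holds in that example, but proving it in the adaptive setting needs a different bookkeeping than the one you sketch. Since the paper itself does not attempt a rigorous argument here, your proposal is at least as complete as the original, just be aware that the inductive step you flag as ``the most care'' is genuinely more delicate than your one-sentence outline suggests.
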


\begin{proof}
Fix a list~$L\in \mathcal L$ and an element~$x\in L$.
Uniquely label the elements of~$L$ with numbers in~$\Z_{\omega(\mathcal L)}$ and let~$i$ be the label assigned to~$x$.
Let~$P(\cdot,\cdot|x, L)$ be the probability distribution over~$\Z_{\omega(\mathcal L)}\times \Z_{\omega(\mathcal L)}$ that assigns probability~$1/\omega(\mathcal L)$ to each pair in $\{(a, a+i)\st a\in \Z_{\omega(\mathcal L)}\}$ and vanishes on all other pairs.
Clearly this distribution is non-signaling.
Similarly define non-signalling distributions for every other pairs~$(x',L')$ in the list problem.

Now consider the following protocol.
Upon receiving~$x\in \mathcal X$ and~$L\in\mathcal L$ such that~$x\in L$, respectively, Alice and Bob sample from the distribution~$P(\cdot,\cdot|x, L)$ as explained above and get~$a$ and $a+i\in\Z_{\omega(\mathcal L)}$, respectively.
Next, Alice sends~$a$ to Bob, using at most~$\ceil{\log \omega(\mathcal L)}$ bits of communication. 
Finally, Bob subtracts Alice's message from his input, getting ${(a+i) - a = i}$, which tells him Alice's input.

At last, we notice that any functional protocol has to communicate at least~$\ceil{\log \omega(\mathcal L)}$ bits and hence the above protocol is an optimal one. Indeed, there is an instance of the problem where Bob has to distinguish Alice's input from a list of $\omega(\mathcal L)$ different elements. 
\end{proof}

\section*{Acknowledgements}
Part of this work was conducted while H.~B., D.~L., T.~P.\ and 
F.~S.\ were at the Newton Institute of Mathematical Sciences in Cambridge, UK.
J.~B.\ was supported by a Rubicon grant from the Netherlands Organization for Scientific Research (NWO).
H.~B., T.~P.\ and F.~S.\ were funded in part by the EU project SIQS.
D.~L.\ is supported by NSERC, CRC, and CIFAR.

The authors thank Jeroen Zuiddam for useful discussions regarding Section~\ref{sec:equality}. 
T.P. thanks Monique Laurent for providing the proof of Lemma~\ref{lem:xitheta}.
\bibliographystyle{alpha}
\bibliography{RoundChromatic}

\appendix
\renewcommand{\thetheorem}{\Alph{section}\@thmcountersep\arabic{theorem}}

\section{Orthogonal rank is NP-hard}\label{app:hardrank}

\begin{theorem}\label{thm:orthrank}
For any $k \ge 3$, it is NP-hard to decide whether or not $\xi(G) \le k$.
\end{theorem}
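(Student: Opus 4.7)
The plan is a two-step reduction: a padding lemma reduces any $k \geq 3$ to $k = 3$, and a separate reduction from graph $3$-coloring handles $k = 3$.

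For the padding, I would use the graph join: given graphs $G$ and $H$, let $G \vee H$ denote the graph with vertex set $V(G) \sqcup V(H)$, containing every edge of $G$, every edge of $H$, and every pair $\{u, v\}$ with $u \in V(G), v \in V(H)$. I claim that $\xi(G \vee H) = \xi(G) + \xi(H)$. The ``$\leq$'' direction is obtained by embedding optimal orthogonal representations of $G$ and $H$ into complementary orthogonal subspaces of $\C^{\xi(G) + \xi(H)}$. The ``$\geq$'' direction follows from the observation that in any orthogonal representation $\phi$ of $G \vee H$ in $\C^d$, every vector $\phi(u)$ with $u \in V(G)$ is orthogonal to every $\phi(v)$ with $v \in V(H)$; therefore the spans of $\phi(V(G))$ and $\phi(V(H))$ are mutually orthogonal subspaces, yielding $d \geq \xi(G) + \xi(H)$. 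Applying this with $H = K_{k-3}$ (where $\xi(K_{k-3}) = k-3$) gives $\xi(G \vee K_{k-3}) \leq k$ if and only if $\xi(G) \leq 3$. Hence it suffices to establish NP-hardness for $k = 3$.

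For the $k = 3$ case I plan a polynomial-time reduction from $3$-\textsc{Coloring}. Given $G = (V, E)$, construct $\widetilde G$ by first adding three ``anchor'' vertices $a_1, a_2, a_3$ which form a triangle. In any orthogonal representation $\phi$ of $\widetilde G$ in $\C^3$, the images $\phi(a_1), \phi(a_2), \phi(a_3)$ form an orthonormal basis, which I may assume (after an irrelevant global unitary) to be the standard basis $e_1, e_2, e_3$ of $\C^3$. For each $v \in V$, attach a constant-size ``color-forcing'' gadget $H_v$ designed so that, in any orthogonal representation of $\widetilde G$ in $\C^3$, the image $\phi(v)$ is constrained to lie in $\{e_1, e_2, e_3\}$ (up to a global phase). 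Finally, retain every edge of $G$ in $\widetilde G$. The forward direction $\chi(G) \leq 3 \Rightarrow \xi(\widetilde G) \leq 3$ follows by taking a valid $3$-coloring $c$ of $G$, setting $\phi(a_i) = e_i$ and $\phi(v) = e_{c(v)}$, and extending through each gadget. The reverse direction reads off a $3$-coloring $c(v) = i$ where $\phi(v) \propto e_i$, using the orthogonality of $\phi$ on the preserved edges of $G$ to ensure $c$ is proper.

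The main obstacle is the design and rigidity analysis of the gadget $H_v$: it must pin $\phi(v)$ onto the discrete set $\{e_1, e_2, e_3\}$ rather than leave it free on the complex unit sphere. This is nontrivial because, in $\C^3$, orthogonality to a single anchor only constrains $\phi(v)$ to a $2$-dimensional subspace, and orthogonality to two anchors only pins $\phi(v)$ along the third anchor---neither of which captures the ``choose one of three'' disjunctive behavior needed for a coloring. A natural template is to introduce a small constellation of auxiliary vertices around $v$, each joined via carefully chosen edges to subsets of $\{a_1, a_2, a_3\}$ and to $v$, so that iterated applications of the Orthogonality Lemma, together with a finite case analysis in $\C^3$, leave only the three desired configurations. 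Verifying rigidity then amounts to enumerating, up to the symmetries of the anchor basis, the consistent placements of the gadget vertices, and checking that each forces $\phi(v)$ to be a basis vector. Once such a rigid constant-size gadget is exhibited and its correctness proved, the reduction is polynomial-time and hardness for $k = 3$ follows; combined with the padding argument this yields the full theorem.
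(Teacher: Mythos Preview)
Your padding step via the join $G \vee K_{k-3}$ is correct and is exactly the paper's ``suspension'' $\nabla^{k-3}G$; nothing to add there.

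For $k=3$, however, the proposal has a genuine gap: you never construct the gadget $H_v$, and that construction \emph{is} the content of the hardness proof. The difficulty you yourself flag---forcing a disjunctive ``$\phi(v)\in\{e_1,e_2,e_3\}$'' constraint purely through orthogonality relations in $\C^3$---is real, and promising a ``constellation of auxiliary vertices'' together with an unspecified case analysis does not discharge it.

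The paper (following Peeters) sidesteps anchors and per-vertex gadgets entirely. It attaches a fixed six-vertex gadget $H_{ij}$ to \emph{every pair} of vertices $i,j$ of $G$: two triangles $\{i,a_{ij},b_{ij}\}$ and $\{j,c_{ij},d_{ij}\}$ joined by the matching $\{i,d_{ij}\},\{b_{ij},c_{ij}\},\{a_{ij},j\}$. A short computation in $\C^3$ shows that if $\phi(i)$ and $\phi(j)$ are neither parallel nor orthogonal, the constraints inside $H_{ij}$ force two of the auxiliary vectors to be simultaneously parallel and orthogonal, a contradiction. Hence in any $3$-dimensional orthonormal representation of the gadgeted graph $G'$ the lines $\vspan\,\phi(i)$ are pairwise equal-or-orthogonal and so fall into at most three classes; the original edges of $G$ (retained in $G'$) force adjacent vertices into different classes, yielding a proper $3$-coloring of $G$. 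The converse is an easy extension of a $3$-coloring of $G$ to one of $G'$.

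If you wish to salvage your anchor-based plan, the cleanest fix is in fact to reuse this same Peeters gadget: placing a copy of $H_{v,a_i}$ between $v$ and each anchor $a_i$ forces $\vspan\,\phi(v)$ to be equal or orthogonal to each $e_i$, hence (being nonzero) equal to one of them. But at that point you are carrying out the paper's argument with additional overhead.
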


Peeters~\cite{Peeters:1996} proved this result in a more general context for the case of $k=3$. We use the same ideas, sometimes simplifying them for our specific case, and then use a trick from~\cite{Laurent:2013} to extend the result to any $k \ge 3$.

\begin{lemma}\cite{Peeters:1996}\label{lem:NP-xi=3}
It is NP-hard to decide whether or not $\xi(G) \le 3$.
\end{lemma}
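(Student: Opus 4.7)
The plan is to reduce from 3-colorability, which is well known to be NP-hard. Given an instance $G = (V, E)$, I would construct in polynomial time a graph $G'$ such that $\xi(G') \leq 3$ if and only if $\chi(G) \leq 3$. The forward direction is automatic: a proper 3-coloring $c : V \to \{1,2,3\}$ induces the orthonormal representation $v \mapsto e_{c(v)} \in \mathbb{C}^3$, so in fact $\xi(H) \leq \chi(H)$ for every graph $H$. The converse is in general false (orthonormal representations over $\mathbb{C}$ are strictly more flexible than colorings), so the auxiliary structure in $G'$ must force every 3-dimensional orthonormal representation to essentially mimic a 3-coloring.

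Concretely, $G'$ is built by augmenting $G$ with a reference triangle $T = \{a_1, a_2, a_3\}$ together with \emph{forcing gadgets} attached to each vertex $v \in V$. In any 3-dimensional orthonormal representation of $G'$, the triangle $T$ spans $\mathbb{C}^3$, so after a unitary change of basis one may assume $a_i \mapsto e_i$. The forcing gadget attached to $v$ will be designed so that, via iterated orthogonality constraints involving $T$ and a handful of auxiliary vertices, the image of $v$ must be a unit scalar multiple of one of $e_1, e_2, e_3$. The geometric fact being exploited is that in $\mathbb{C}^3$ the orthogonal complement of two linearly independent vectors is one-dimensional, so a unit vector orthogonal to two fixed independent vectors is determined up to a phase factor. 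Once every $v \in V$ is pinned down to one of the three phase-classes $\{\alpha\, e_i : |\alpha| = 1\}$, the adjacencies inherited from $E$ force adjacent vertices into distinct classes, yielding a proper 3-coloring of $G$. Conversely, any 3-coloring of $G$ extends to a 3-dimensional orthonormal representation of $G'$ by sending colors to basis vectors and propagating through the gadgets in a routine way.

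To extend the hardness from $k = 3$ to arbitrary $k \geq 3$, I would follow the idea of~\cite{Laurent:2013}: add to $G'$ a set $K$ of $k - 3$ new vertices forming a clique, each adjacent to every vertex of $G'$. A short argument shows that $\xi(G' \cup K) = \xi(G') + (k - 3)$: the clique $K$ must be assigned $k-3$ mutually orthogonal unit vectors, which, by the added adjacencies, are orthogonal to every representing vector of $G'$; hence the representation of $G'$ lives in a subspace of dimension $\xi(G' \cup K) - (k - 3)$. Therefore $\xi(G' \cup K) \leq k$ holds if and only if $\xi(G') \leq 3$, and the hardness transfers.

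The hard part will be designing the forcing gadget correctly over $\mathbb{C}$ rather than $\mathbb{R}$, where one enjoys only a sign ambiguity. Over $\mathbb{C}$ there is a continuous phase ambiguity, so no gadget can pin down a specific vector, only its equivalence class modulo a global phase. Since orthogonality is insensitive to such phases, confining each $v \in V$ to one of three phase-classes is exactly what is needed; but producing an explicit polynomial-size gadget that achieves this, and verifying that it leaves no ``cheating'' $\mathbb{C}^3$-representation, is the technical core of the argument, which I would handle by adapting the construction of Peeters~\cite{Peeters:1996}.
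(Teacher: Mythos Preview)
Your high-level strategy --- reduce from 3-colorability via an auxiliary graph $G'$ --- matches the paper's, and your extension to $k \ge 3$ via the suspension operation is exactly what the paper does. However, the architecture of $G'$ differs. The paper uses no reference triangle; instead, for \emph{every pair} $i,j \in V(G)$ it inserts a gadget $H_{ij}$ (two triangles $\{i,a_{ij},b_{ij}\}$ and $\{j,c_{ij},d_{ij}\}$ cross-linked by the edges $i\,d_{ij}$, $b_{ij}c_{ij}$, $a_{ij}j$) whose effect is that in any $3$-dimensional orthonormal representation of $G'$ the lines $\mathbb{C}\phi(i)$ and $\mathbb{C}\phi(j)$ are either equal or orthogonal. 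Since $\mathbb{C}^3$ admits at most three pairwise orthogonal lines, this relation partitions $V(G)$ into at most three classes, and the edges of $G$ force their endpoints into distinct classes --- a 3-coloring. Your reference-triangle approach could in fact be realized with this very gadget (place a copy of $H_{v,a_k}$ between each $v\in V(G)$ and each corner $a_k$ of $T$, forcing $\phi(v)$ parallel to some $e_k$), but the pairwise formulation is what Peeters actually does, and it sidesteps the need to fix a basis up front. The step you flag as ``hard'' --- designing a gadget that works over $\mathbb{C}$ --- is handled explicitly in the paper via $H_{ij}$, with the equal-or-orthogonal conclusion established by a short case analysis on the coefficients of $\phi(j)$, $\phi(c_{ij})$, $\phi(d_{ij})$ in the orthonormal basis $\{\phi(i),\phi(a_{ij}),\phi(b_{ij})\}$.
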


\begin{proof}
The key idea is, given a graph $G$, to construct in polynomial time a new graph $G'$ with the property that $\chi(G) \le 3$ if and only if $\chi(G') \le 3$, and furthermore $\chi(G') \le 3$ if and only if $\xi(G') \le 3$.
Since deciding whether a graph is three colorable is NP-complete, this implies that determining if $\xi(G) \le 3$ is NP-hard.

Let $G$ be a graph with vertex set $V(G) = [n]$.
The new graph $G'$ is obtained from $G$ by adding for each pair of distinct vertices $i,j \in [n]$ 
the gadget graph $H_{ij}$, see Figure~\ref{fig:gadget}.
Hence, $G'$ has $|V(G)| + 2n(n-1)$ vertices and $|E(G)| + \frac{9}{2}n(n-1)$ edges.
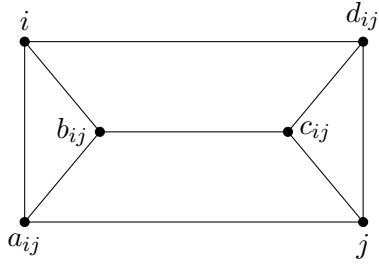
\begin{figure}[t]
\begin{center}

\begin{tikzpicture} [label distance=0.2mm]

\coordinate [label=above:$i$] (i) at (1,3);
\draw [fill=black] (i) circle (0.06cm);
\coordinate [label=below:$a_{ij}$] (a) at (1,0.6);
\draw [fill=black] (a) circle (0.06cm);
\coordinate [label=left:$b_{ij}$] (b) at (2,1.8);
\draw [fill=black] (b) circle (0.06cm);

\coordinate [label=above:$d_{ij}$] (d) at (5.5,3);
\draw [fill=black] (d) circle (0.06cm);
\coordinate [label=below:$j$] (j) at (5.5,0.6);
\draw [fill=black] (j) circle (0.06cm);
\coordinate [label=right:$c_{ij}$] (c) at (4.5,1.8);
\draw [fill=black] (c) circle (0.06cm);

\draw (i) to (a);
\draw (b) to (a);
\draw (i) to (b);
\draw (j) to (c);
\draw (d) to (c);
\draw (j) to (d);
\draw (i) to (d);
\draw (b) to (c);
\draw (a) to (j);

\end{tikzpicture}
 \caption{The gadget graph $H_{ij}$.
}
\label{fig:gadget}
\end{center}
\end{figure}
Clearly, if $\chi(G') =3$ then $\chi(G) \le 3$.
Moreover, if $\chi(G) \le 3$ we can three color the graph $G'$ as following: 
if $i$ and $j$ have the same color then use the color classes $\{\{i,j\},\{a_{ij},c_{ij}\},\{b_{ij},d_{ij}\}\}$, otherwise use $\{\{i,c_{ij}\},\{a_{ij},d_{ij}\},\{b_{ij},j\}\}$.
Thus $\chi(G) \le 3$ if and only if $\chi(G') =3$.

Now, since $3 \le \omega(G') \le \xi(G') \le \chi(G')$ we have that $\chi(G') = 3$ implies $\xi(G') = 3$.
The only thing we have to prove is that if $\xi(G') = 3$ then $\chi(G') = 3$.
Let $\phi:V(G') \to \mathbb{C}^3$ be an orthonormal representation of $G'$.
For every vertex $i$ in the subgraph $G$ of $G'$, let $V_i$ be the 1-dimensional subspace spanned by $\phi(i)$.
We claim that for every $i,j \in [n]$ either $V_{i} = V_{j}$ or $V_{i}$ is orthogonal to $V_{j}$.
Before proving this simple statement, we mention how we can use it to conclude the proof.
Every $V_{i}$ is a 1-dimensional subspace of $\mathbb{C}^{3}$ and therefore there are only three different types of subspaces. We can now color the nodes in $V(G)$ with three colors, associating each color class to a different subspace.

Let's now prove that either $V_{i} = V_{j}$ or $V_{i} \perp V_{j}$. 
Looking at the adjacency relation between the vertices in the gadget graph $H_{ij}$, we can see that the set $\{\phi(i), \phi(a_{ij}),\phi(b_{ij})\}$ and the set $\{\phi(j), \phi(c_{ij}),\phi(d_{ij})\}$ form a basis for $\mathbb{C}^{3}$. Moreover, we have $\phi(j) = \alpha \phi(i) + \beta \phi(b_{ij})$, $\phi(c_{ij}) = \hat\alpha \phi(i) + \hat\gamma \phi(a_{ij})$ and $\phi(d_{ij}) = \tilde\gamma \phi(a_{ij}) + \tilde\beta \phi(b_{ij})$ where all the constants are complex numbers. 
By contradiction, suppose that $V_{i} \neq V_{j}$  but  $V_{i}$ is not perpendicular to $V_{j}$. This means that both $\alpha$ and $\beta$ are different from zero. Since $\phi(j)$ is orthogonal to $\phi(c_{ij})$, then $\hat\alpha$ must be equal to zero and similarly, since $\phi(j)$ is orthogonal to $\phi(d_{ij})$, also $\tilde\beta$ must be equal to zero.
But now the vector $\phi(c_{ij}) = \hat\gamma \phi(a_{ij})$ is orthogonal to $\phi(d_{ij}) = \tilde\gamma \phi(a_{ij})$, which implies that one of the two vectors is the zero one. This brings a contradiction as $\phi$ is an orthonormal representation of $G'$.
\end{proof}

\begin{proof}[Proof of Theorem~\ref{thm:orthrank}]
The hardness result of Lemma~\ref{lem:NP-xi=3} can be extended to any $k$ greater than $3$ using the suspension operation on graphs as done in~\cite{Laurent:2013}.
Given a graph $G$, the suspension graph $\nabla^t G$ is obtained by adding $t$ new vertices which are pairwise adjacent and are also adjacent to all the vertices in $G$.
Clearly $\xi(\nabla^t G) = \xi(G) + t$ holds.
Hence, for any fixed $k \ge 3$, it is NP-hard to decide whether $\xi(G) \le k$.
\end{proof}

\section{Kremer's Theorem}\label{sec:Kremer}

Here we prove Kremer's Theorem (Theorem~\ref{thm:kremer}), which we restate for convenience. The original proof
by Kremer~\cite{Kremer:1995} applied to boolean functions;
we give a slight generalization of the statement so that it applies to functions with arbitrary range.
It is important to notice that the statements in this section hold for general communication protocols, not only exact ones.

\begin{theorem}
Let~$\ell$ be a positive integer, $X,Y,\mathcal R$ be finite sets and~$\mathcal D\subseteq X\times Y$.
Let $f:\mathcal D\to\mathcal R$ be a function and suppose that~$f$ admits an $\ell$-qubit quantum protocol.
Then, there exists a one-round $2^{O(\ell)}$-bit classical protocol for~$f$.
\end{theorem}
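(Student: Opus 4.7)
The plan is to follow Kremer's original tensor-decomposition argument and adapt the final step to arbitrary output alphabet. First I would set up the protocol explicitly: Alice and Bob hold private registers $A$ and $B$ initialized to $\ket{0}$, and during the course of $\ell$ rounds they exchange a single-qubit message register $M$. It is harmless to modify the protocol so that each outgoing message qubit is first copied via a CNOT into a fresh qubit of an auxiliary transcript register $C$ before being sent across; this does not affect the local views of Alice and Bob, but after all $\ell$ rounds the register $C$ faithfully records the communication in the computational basis.

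Next I would establish, by induction on the round number, the Kremer-type decomposition
\[
\ket{\psi(x,y)} \;=\; \sum_{c \in \{0,1\}^\ell} \ket{\alpha_c(x)}_{A} \otimes \ket{c}_{C} \otimes \ket{\beta_c(y)}_{B},
\]
where $\ket{\alpha_c(x)} \in \C^{d_A}$ depends only on $x$ and $\ket{\beta_c(y)} \in \C^{d_B}$ depends only on $y$. The inductive step is a short computation: in Alice's round she applies her local unitary to $A$ (and the outgoing message qubit) followed by the CNOT-and-send, which modifies only the $\ket{\alpha_c}$ components and appends one bit to the transcript; the dual statement handles Bob's rounds.

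Letting $\Pi_r$ denote Bob's final measurement operator corresponding to output $r \in \mathcal R$ (acting only on $B$), the orthogonality $\braket{c}{c'}=\delta_{c,c'}$ kills the cross terms and yields
\[
P_r(x,y) \;:=\; \bra{\psi(x,y)} \Pi_r \ket{\psi(x,y)} \;=\; \sum_{c \in \{0,1\}^\ell} p_c(x)\, q^r_c(y),
\]
with $p_c(x)=\|\alpha_c(x)\|^2$ and $q^r_c(y)=\bra{\beta_c(y)}\Pi_r\ket{\beta_c(y)}$, both in $[0,1]$. Assemble these probabilities into the matrix $Q$ whose rows are indexed by $x \in \mathcal X$ and whose columns are indexed by pairs $(y,r)$ via $Q(x,(y,r))=P_r(x,y)$. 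Setting $K := 2^\ell$, the formula above exhibits $Q$ as a sum of $K$ outer products, so $\rank_\C(Q)\le K$. Since the protocol is exact, every entry of $Q$ lies in $\{0,1\}$, and the standard linear-algebra observation that a rank-$K$ subspace of $\C^m$ is determined by its values on some $K$-element column set (and therefore contains at most $2^K$ vectors with entries in $\{0,1\}$) shows that $Q$ has at most $2^{K}=2^{2^\ell}$ distinct rows. Two inputs $x,x'$ that induce the same row of $Q$ satisfy $P_r(x,y)=P_r(x',y)$ for every $(y,r)$, hence $f(x,y)=f(x',y)$ whenever both are defined.

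To finish, Alice and Bob agree in advance on a partition of $\mathcal X$ into at most $2^K$ equivalence classes with respect to the rows of $Q$; Alice then sends Bob the $\lceil \log 2^K \rceil = 2^\ell = 2^{O(\ell)}$-bit index of her class, and Bob reads off $f(x,y)$ from his input $y$. The main point of care is the inductive proof of the Kremer decomposition, where one must carefully track how Alice's and Bob's unitaries interact with both the message qubit and the newly appended transcript bit; the rest of the argument is a formal rank/counting consequence. For the bounded-error extension mentioned in the footnote, the last step would instead discretise the values $p_c(x)$ to sufficient additive precision before transmission, letting Bob estimate the output probabilities within an error that preserves the protocol's success probability.
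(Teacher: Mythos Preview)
Your argument contains a genuine gap at the very first step. Copying the outgoing message qubit into a transcript register via a CNOT is \emph{not} harmless: it decoheres the message. Concretely, if Alice's message qubit is $\alpha\ket{0}+\beta\ket{1}$, then after the CNOT the joint state of message and transcript is $\alpha\ket{00}+\beta\ket{11}$, and Bob, who never touches the transcript qubit, receives the mixed state $|\alpha|^2\ketbra{0}{0}+|\beta|^2\ketbra{1}{1}$ rather than the intended superposition. Any interference the original protocol relied on is destroyed; for instance, the one-round $O(\log n)$-qubit protocol for \peq-$\binom{n}{n/2}$ becomes useless after this modification. So while your decomposition $\sum_c \ket{\alpha_c(x)}\ket{c}\ket{\beta_c(y)}$ does hold for the \emph{modified} protocol, that protocol need not compute~$f$, and the diagonal identity $P_r(x,y)=\sum_c p_c(x)\,q_c^r(y)$ with only $2^\ell$ terms does not describe the output probabilities of the original one.

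The fix is to drop the transcript register and invoke the Yao--Kremer decomposition for the unmodified protocol: the final state is $\sum_{\bf u}\ket{a_{\bf u}(x)}_A\ket{b_{\bf u}(y)}_B$ with $2^\ell$ (not mutually orthogonal) terms, and expanding $P_r(x,y)=\bra{\psi}\Pi_r\ket{\psi}$ yields the bilinear form $\sum_{\bf u,\bf v} a_{\bf u,\bf v}(x)\, b_{\bf u,\bf v}^r(y)$ with $2^{2\ell}$ terms, since the cross terms do not vanish. This gives only $\rank(Q)\le 2^{2\ell}$, but your rank-counting step then goes through verbatim with $K=2^{2\ell}$: a rank-$K$ matrix with $\{0,1\}$ entries has at most $2^K$ distinct rows, so Alice sends $2^{2\ell}=2^{O(\ell)}$ bits. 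The paper reaches the same $2^{2\ell}$-term bilinear form but finishes differently: rather than counting $\{0,1\}$-rows, it has Alice transmit each of the $2^{2\ell}$ complex numbers $a_{\bf u,\bf v}(x)$ to $O(\ell)$ bits of precision so that Bob can approximate every $P_r(x,y)$ and select the unique $r$ with value close to~$1$. Your rank argument is a clean alternative in the exact setting; the paper's discretization approach has the advantage of extending unchanged to bounded-error protocols.
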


The proof uses the following lemma of Yao~\cite{Yao:1993} and Kremer~\cite{Kremer:1995}.
To reduce the amount of notation needed in the proof we assume that the parties use the following general protocol. At any point during the protocol, both Alice and Bob have a private quantum register. If it is Alice's turn to communicate, say $\ell$ qubits, she appends a fresh $\ell$-qubit register to her existing register, applies a unitary to both registers and sends the $\ell$-qubit register over to Bob, who then absorbs the $\ell$-qubit register into his private register.
If it's his turn to communicate, Bob operates similarly.
This assumption will allow us to deal more easily with protocols in which different numbers of qubits are sent in each round.

\begin{lemma}[Yao--Kremer]\label{lem:yao-kremer}
Let~$\ell$ be a positive integer, $X,Y,\mathcal R$ be finite sets and~$\mathcal D\subseteq X\times Y$.
Suppose that there exists an $r$-round quantum protocol for a function~$f:\mathcal D\to \mathcal R$, where~$\ell_i$ qubits are communicated in round~$i\in[r]$.
Then, the final state of the protocol on input $(x,y)\in\mathcal D$ can be written as
\beqn
\sum \alpha_{\bf u}(x) \beta_{\bf u}(y) \ket{A_{\bf u}(x)}\ket{B_{\bf u}(y)},
\eeqn
where the sum is over all~${\bf u}  \in \bset{\ell_1}\times\cdots\times\bset{\ell_r}$, the~$\alpha_{\bf u}(x), \beta_{\bf u}(y)$ are complex numbers and the $\ket{A_{\bf u}(x)}, \ket{B_{\bf u}(y)}$ are complex unit vectors.
\end{lemma}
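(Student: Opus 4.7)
The plan is to induct on the number of rounds~$r$. For the base case $r=0$ there has been no communication, so by the communication model used in the paper Alice simply holds some pure state $\ket{A(x)}$ determined by her input and Bob holds $\ket{B(y)}$ determined by his. The joint state is the product $\ket{A(x)}\otimes\ket{B(y)}$, which already fits the claimed form with the index~${\bf u}$ running over the empty Cartesian product (a single trivial term).

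For the inductive step, assume the decomposition holds after $k-1$ rounds with index set $\bset{\ell_1}\times\cdots\times\bset{\ell_{k-1}}$, and consider round~$k$. Without loss of generality Alice is the sender (the reverse case is entirely symmetric). According to the paper's protocol model, Alice appends a fresh $\ell_k$-qubit register in state $\ket{0}^{\otimes\ell_k}$ to her private register, applies a unitary $V_x$ depending only on $x$, and then transfers the $\ell_k$-qubit register to Bob. I would apply $V_x$ termwise to the inductive decomposition and expand the resulting Alice-side vector in the computational basis of the register about to be sent:
\begin{equation*}
V_x\bigl(\ket{A_{\bf u_{<k}}(x)}\otimes\ket{0}^{\otimes\ell_k}\bigr) \;=\; \sum_{u_k\in\bset{\ell_k}} \gamma_{{\bf u_{<k}},u_k}(x)\,\ket{A'_{{\bf u_{<k}},u_k}(x)}\otimes\ket{u_k},
\end{equation*}
where the scalars $\gamma_{{\bf u_{<k}},u_k}(x)$ absorb all normalizations and $\ket{A'_{{\bf u_{<k}},u_k}(x)}$ are unit vectors. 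Once the $\ell_k$-qubit register crosses over to Bob, I relabel ${\bf u}=({\bf u_{<k}},u_k)$ and set $\alpha_{\bf u}(x)=\alpha_{\bf u_{<k}}(x)\,\gamma_{{\bf u_{<k}},u_k}(x)$, $\beta_{\bf u}(y)=\beta_{{\bf u_{<k}}}(y)$, $\ket{A_{\bf u}(x)}=\ket{A'_{{\bf u_{<k}},u_k}(x)}$ and $\ket{B_{\bf u}(y)}=\ket{u_k}\otimes\ket{B_{{\bf u_{<k}}}(y)}$. This produces exactly the required decomposition across the new Alice/Bob bipartition with enlarged index set $\bset{\ell_1}\times\cdots\times\bset{\ell_k}$. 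Iterating~$r$ times gives the claim.

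The main thing to track carefully---really the only conceptual obstacle---is that the bipartition between Alice's and Bob's quantum systems \emph{shifts} each time qubits cross the cut, so the induction hypothesis has to be invoked with respect to the bipartition \emph{before} round~$k$. What makes the step legitimate is that (i)~Alice's unitary $V_x$ acts only on her side of that old bipartition, since she still holds all her qubits at that moment, and (ii)~expanding in the computational basis of the transmitted register ensures that after the register moves to Bob's side, each term in the sum is manifestly a product across the new bipartition, with the dependence on~$x$ confined to the Alice-factor and the label $\ket{u_k}$ being input-independent. Beyond this bookkeeping no further obstacle arises; the whole argument is a careful rewrite round by round.
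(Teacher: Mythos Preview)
Your proposal is correct and follows essentially the same approach as the paper: both argue by induction on the number of rounds, expand the sender's post-unitary state in the computational basis of the register about to be transmitted, and absorb the resulting coefficients into the $\alpha$'s while tagging the receiver's vector with the basis label~$\ket{u_k}$. The only cosmetic differences are that you start the induction at $r=0$ rather than $r=1$, and you are a bit more explicit in recording that the expansion coefficients $\gamma$ depend on both $x$ and the earlier indices ${\bf u_{<k}}$ (the paper's notation suppresses the latter dependence).
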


\begin{proof}
By induction on~$r$.
The base case~$r = 1$ is trivial, since then Alice sends Bob an~$\ell$-qubit state.
For some~$i\in\{2,3,\dots, r\}$, suppose that after~$i-1$ rounds the state is given by
\beqn
\sum \alpha_{\bf v}(x) \beta_{\bf v}(y) \ket{A_{\bf v}(x)}\ket{B_{\bf v}(y)},
\eeqn
where the sum is over all~${\bf v} \in \bset{\ell_1}\times\cdots\times\bset{\ell_{i-1}}$.
Assume that the $i$-th round is Alice's turn (the case of Bob's turn is handled similarly).
She appends a fresh~$\ell_i$-qubit register to her current register, causing the state to become
\beqn
\sum \alpha_{\bf v}(x) \beta_{\bf v}(y) \ket{A_{\bf v}(x)}\ket{0_1\cdots 0_{\ell_i}}\ket{B_{\bf v}(y)}.
\eeqn
Next, she applies a unitary over both of her registers, turning the state into
\beqn
\sum \alpha_{\bf v}(x) \beta_{\bf v}(y) 
\left(
\sum_{{\bf w}\in\bset{\ell_i}}\gamma_{{\bf w}}\ket{A_{\bf v,w}(x)}\ket{{\bf w}}
\right)
\ket{B_{\bf v}(y)},
\eeqn
where $\gamma_{{\bf w}}$ is a complex number (which might depend on $x$) and for some unit vectors $\ket{A_{\bf v,w}(x)}$. 
Now define
\beqn
\alpha_{\bf v,w}(x) = \alpha_{\bf v}(x)\gamma_{\bf w},
\quad\quad
\beta_{\bf v,w}(y) = \beta_{\bf v}(y)
\quad\quad
\text{and}
\quad\quad
\ket{B_{\bf v,w}(y)}= \ket{{\bf w}}\ket{B_{\bf v}(y)},
\eeqn
so that after the~$i$-th round, after Alice has sent the~$\ell_i$-qubit register to Bob, the state equals
\beqn
\sum_{\bf v,w} \alpha_{\bf v,w}(x) \beta_{\bf v,w}(y) \ket{A_{\bf v,w}(x)}\ket{B_{\bf v,w}(y)}.
\eeqn
After~$r$ rounds the state thus looks like as claimed in the lemma.
\end{proof}

\begin{proof}[Proof of Theorem~\ref{thm:kremer}]
Assume that the protocol proceeds in~$r$ rounds and that $\ell_i$ qubits are communicated during round~$i\in[r]$.
By Lemma~\ref{lem:yao-kremer} the final state of the protocol can be written as
\beqn
\sum \alpha_{\bf u}(x) \beta_{\bf u}(y) \ket{A_{\bf u}(x)}\ket{B_{\bf u}(y)},
\eeqn

To produce his output, Bob performs a measurement~$\{M_1,\dots,M_k\}$ on his register.
For each pair ${\bf u,v} \in \bset{\ell_1}\times\cdots\times\bset{\ell_r}$ and~$j\in[k]$ we define the complex numbers
\beqrn
a_{\bf u,v}(x) &=& \overline{\alpha_{\bf u}(x)}\alpha_{\bf v}(x) \braket{A_{\bf u}(x)}{A_{\bf v}(x)}\\
b_{\bf u,v}^j(x) &=& \overline{\beta_{\bf u}(y)}\beta_{\bf v}(y) \bra{B_{\bf u}(y)}M_j\ket{B_{\bf v}(y)}.
\eeqrn
Then, the probability that Bob gets measurement outcome~$j$ equals
\beqn
p_j(x,y) = \sum_{\bf u,v} a_{\bf u,v}(x) b_{\bf u,v}^j(y).
\eeqn

The classical one-round protocol works in the following way. Let $\ell$ be the total communication of the protocol and define
$\tilde{a}_{\bf u,v}(x)$ as an approximation of $a_{\bf u,v}(x)$ using $2 \ell + 4$ bits for the real part and $2 \ell + 4$ 
bits for the imaginary part, so that $|\tilde{a}_{\bf u,v}(x) - a_{\bf u,v}(x)| \leq 2^{-2 \ell - 3}$.
Alice's message consists of all $2^{2\ell}$ numbers $\tilde{a}_{\bf u,v}(x)$, making the total communication
cost $O(\ell 2^{2\ell})$ bits. Bob calculates his approximation
of the probability of getting outcome $j$ as
\beqn
\tilde{p}_j(x,y) = \sum_{\bf u,v} \tilde{a}_{\bf u,v}(x) b_{\bf u,v}^j(y) .
\eeqn

We can bound the difference between this approximation and the acceptance probability of the original quantum protocol by
\begin{align*}
|\tilde{p}_j(x,y) - p_j(x,y)| & = \Bigl| \sum_{\bf u,v} \bigl( \tilde{a}_{\bf u,v}(x) - a_{\bf u,v}(x) \bigr) b_{\bf u,v}^j(y) \Bigr| \\
& \leq  \sum_{\bf u,v} \bigl| \tilde{a}_{\bf u,v}(x) - a_{\bf u,v}(x) \bigr| \, \bigl| b_{\bf u,v}^j(y) \bigr| \\
& \leq 2^{-2 \ell - 3} 2^{2 \ell}  \leq \frac{1}{8} \,.
\end{align*}

Therefore, given a quantum protocol with sufficiently high success probability,
in this paper in particular probability 1,
Bob can (deterministically) choose the unique outcome $j$ for which $\tilde{p}_j(x,y)$ is strictly greater than $\frac{1}{2}$,
and this outcome $j$ is equal to the function value $f(x,y)$, by correctness of the original quantum protocol.
\end{proof}

\section{Multi-round quantum protocols for \texorpdfstring{\peq-$\binom{n}{\alpha n}$}{EQ-(n an)} with \texorpdfstring{$\alpha < 1/2$}{a < 1/2}}\label{sec:exact-grover-general}

Using distributed versions of Grover's search algorithm, we find multi-round quantum communication protocols that solve the
 \peq-$\binom{n}{\alpha n}$ problem for $\alpha < 1/2$ with a logarithmic number of qubits.
For $\alpha = 1/4$, this statement is proven in Theorem~\ref{thm:exact-grover-n/4}.

When $d = \alpha n$ where $\alpha \in (1/4,1/2)$, we can pad zeros to the inputs such that the new strings are either equal or differ in exactly $1/4$-th of the positions and run the above two-rounds protocol on the new strings. This is the simple idea behind the following theorem.

\begin{theorem}\label{thm:exact-grover>n/4}
For $d = \alpha n$ with $\alpha \in (1/4,1/2)$, the two-round quantum communication complexity of \peq-$\binom{n}{\alpha n}$ is at most $ 2\ceil{\log n} + 2\ceil{\log(4 \alpha)} +1$ qubits.
\end{theorem}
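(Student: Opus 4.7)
The plan is to reduce the problem \peq-$\binom{n}{\alpha n}$ with $\alpha \in (1/4,1/2)$ to an instance of \peq-$\binom{n'}{n'/4}$ by a simple padding step, and then apply Theorem~\ref{thm:exact-grover-n/4}. Concretely, given Alice's input $x\in\{0,1\}^n$ and Bob's input $y\in\{0,1\}^n$, both parties (without communicating) append $k = n(4\alpha-1)$ zeros at the end of their respective strings to form $x',y' \in \{0,1\}^{n'}$ with $n' = 4\alpha n$. Since zero-padding preserves the Hamming distance between two equal-length inputs, we have $x' = y'$ whenever $x=y$ and $d_H(x',y') = \alpha n = n'/4$ whenever $d_H(x,y) = \alpha n$. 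Thus $(x',y')$ is a valid input pair for \peq-$\binom{n'}{n'/4}$.

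Next, I would invoke Theorem~\ref{thm:exact-grover-n/4} to obtain a two-round quantum protocol solving \peq-$\binom{n'}{n'/4}$ using at most $2\ceil{\log n'} + 1$ qubits, and bound this quantity in terms of $n$ and $\alpha$. Using $\log n' = \log(4\alpha n) = \log n + \log(4\alpha)$ together with the sub-additivity $\ceil{a+b}\leq \ceil{a}+\ceil{b}$ for real $a,b$, the cost is at most
\beqn
2\ceil{\log n'} + 1 \leq 2\ceil{\log n} + 2\ceil{\log(4\alpha)} + 1,
\eeqn
which is the claimed bound.

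The only technicality to verify is that Theorem~\ref{thm:exact-grover-n/4} is applicable, i.e.\ that $n'$ is a multiple of $8$ (or, at any rate, that $n'$ and $d' = n'/4$ are even integers). Since the original promise problem requires $n$ and $\alpha n$ to be even integers, $n' = 4\alpha n$ is an integer and is divisible by $4$, and $d' = \alpha n$ is even; the divisibility assumption underlying Theorem~\ref{thm:exact-grover-n/4} is therefore inherited from the corresponding assumption on $n$ and $\alpha n$. No genuine obstacle arises here—the entire argument is a one-line reduction followed by arithmetic—so I do not expect any substantive difficulty beyond carefully checking the integrality conditions on $n$ and $\alpha n$.
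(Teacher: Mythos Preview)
Your proposal is correct and follows essentially the same approach as the paper: both reduce \peq-$\binom{n}{\alpha n}$ to \peq-$\binom{n'}{n'/4}$ by zero-padding with $k = 4d - n = n(4\alpha - 1)$ bits (so $n' = 4\alpha n$), invoke Theorem~\ref{thm:exact-grover-n/4}, and bound $2\ceil{\log(4\alpha n)} + 1$ by $2\ceil{\log n} + 2\ceil{\log(4\alpha)} + 1$. Your extra care with the divisibility conditions is appropriate; note in fact that since $\alpha n$ is assumed even, $n' = 4\alpha n$ is a multiple of~$8$, so Theorem~\ref{thm:exact-grover-n/4} applies directly.
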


\begin{proof}
Let $x$ and $y$ be Alice's and Bob's inputs.
Both of the players pad the input received with $k = 4 d - n$ zeros.
Hence, the new bit strings $\hat{x}$ and $\hat{y}$ have length $n' = n + k = 4d$ and they are either equal or differ in $n'/4$ positions. 
%
%
%
%
Alice and Bob can now run communication protocol 
  described in the proof of Theorem~\ref{thm:exact-grover-n/4} on the new inputs $\hat{x}, \hat{y} \in \{0,1\}^{n'}$.
The total communication cost is $2\ceil{\log n'} +1 = 2\ceil{\log (4 \alpha n)} +1 \leq 2\ceil{\log n} + 2\ceil{\log(4 \alpha)} +1$ qubits.
\end{proof}

For $d = \alpha n$ where $\alpha \in (0,1/4)$, we need to introduce some technicalities to ensure an exact version of Grover's search algorithm.

\begin{theorem}\label{thm:exact-grover-general}
For $d = \alpha n$ with $\alpha \in (0,1/4)$, the quantum communication complexity of \peq-$\binom{n}{\alpha n}$ is at most 
$O(\log n)$ qubits. 
The
quantum communication protocol uses $O(\frac{1}{\sqrt{\alpha}})$ rounds.
\end{theorem}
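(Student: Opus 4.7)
The plan is to generalize the two-round protocol of Theorem~\ref{thm:exact-grover-n/4} to arbitrary constants $\alpha \in (0,1/4)$ by running an \emph{exact} version of Grover's search in a distributed fashion on the effective string $z = x \oplus y$. As before, the promise of \peq-$\binom{n}{\alpha n}$ guarantees that $|z| \in \{0, \alpha n\}$, so the number of marked items is either $0$ or the known value $t = \alpha n$. A single query $U_z$ on the $\ceil{\log n}$-qubit index register factors as $U_z = U_x U_y = U_y U_x$, so it can be implemented by having Alice and Bob each apply their own local phase oracle in sequence, exchanging the register back and forth. This costs $\ceil{\log n}$ qubits per one-way message.

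Next, I would invoke an exact variant of Grover's algorithm (e.g., the amplitude amplification construction of Brassard--H\o yer--Mosca--Tapp, which adjusts the phase of the oracle in the last iteration) that, given the known fraction $\alpha = t/n$ of marked items, finds a marked index with probability exactly $1$ using $k = O(1/\sqrt{\alpha})$ queries. Each Grover iteration consists of one query $U_z$ followed by the Grover diffusion $G = 2\ket{s}\bra{s} - I$, which can be absorbed into the local computation of whichever party currently holds the register. Consequently the protocol uses $O(1/\sqrt{\alpha})$ rounds of interaction, and since for constant $\alpha$ each message has length $O(\log n)$, the total communication is $O(\log n)$ qubits.

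I would then verify the two end-states. When $x = y$, we have $z = 0$ so $U_z = I$; since $G\ket{s} = \ket{s}$, the state stays at the uniform superposition $\ket{s}$ throughout all iterations, and any measurement outcome $i^*$ satisfies $x_{i^*} = y_{i^*}$. When $x \neq y$, exact Grover leaves Alice's register in the uniform superposition $\tfrac{1}{\sqrt{\alpha n}}\sum_{i : z_i = 1}\ket{i}$ over marked indices, so any measurement outcome $i^*$ satisfies $x_{i^*} \neq y_{i^*}$. Alice measures, sends $i^*$ and the bit $x_{i^*}$ to Bob using $\ceil{\log n} + 1$ additional qubits, and Bob outputs ``equal'' iff $x_{i^*} = y_{i^*}$.

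The main obstacle is purely technical: ensuring that the exact-Grover variant we invoke does not disturb the no-marked-item case. This holds automatically because in both the standard Grover iteration and its exact modifications, when $z = 0$ the oracle (even with a modified phase) acts as the identity on the index register, and the diffusion preserves $\ket{s}$; hence every modification designed to fix the angle mismatch $\pi/(4\arcsin\sqrt{\alpha}) \notin \Z$ affects only the $|z| = \alpha n$ branch. Combining the counts gives an $O(1/\sqrt{\alpha})$-round, $O(\log n)$-qubit protocol, as claimed.
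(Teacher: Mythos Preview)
Your overall strategy matches the paper's: run exact Grover on the XOR string $z=x\oplus y$ by splitting each standard query as $U_z = U_xU_y$, then finish with a measurement and a classical check. The round and communication counts you state are also correct.

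There is, however, a real technical gap that you gloss over. You correctly note that exact amplitude amplification (as in Brassard--H\o yer--Mosca--Tapp) replaces the \emph{last} query by a partial-phase oracle $V_z(\varphi)$ that applies phase $e^{i\varphi}$ (rather than $-1$) to marked items. But this oracle does \emph{not} factor into local pieces the way $U_z$ does: the identity $(-1)^{x_j\oplus y_j}=(-1)^{x_j}(-1)^{y_j}$ has no analogue for a generic phase, since $e^{i\varphi(x_j\oplus y_j)}\neq e^{i\varphi x_j}e^{i\varphi y_j}$. So the step ``each query factors as $U_xU_y$'' simply fails for the final modified query, and without it you do not have a distributed protocol.

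The paper handles this with an explicit ancilla trick: Alice appends a fresh qubit and applies $Q_x:\ket{j}\ket{b}\mapsto\ket{j}\ket{b\oplus x_j}$, sends the register to Bob who applies $R_y(\varphi):\ket{j}\ket{b}\mapsto e^{i\varphi(b\oplus y_j)}\ket{j}\ket{b}$, and Alice then uncomputes with $Q_x$; one checks $Q_xR_y(\varphi)Q_x=V_z(\varphi)\otimes I$ on $\ket{\psi}\ket{0}$. This costs two extra messages and one extra qubit, and is the main new ingredient beyond the $\alpha=1/4$ case. Your discussion of the $z=0$ branch is fine (indeed $V_0(\varphi)=I$ and the modified diffusion preserves $\ket{s}$ up to a global phase), but that is not where the difficulty lies.
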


\begin{proof}
If a $n$-bit string $z$ is known to contain exactly
$d$ entries that are 1, 
Grover's algorithm can be modified such that it finds an index for one of them with
certainty~\cite[Theorem~16]{Brassard:2002} \cite{Brassard:1998,Ambainis:2004}.
The number of queries $\ell$ that the exact version of Grover's algorithm needs in this case is given by
\[
\ell = \left\lceil{\frac{\pi}{4 \arcsin{\sqrt{\frac{d}{{n}}}}} - \frac{1}{2} } \right\rceil < \frac{\pi}{4} \sqrt{\frac{n}{d}} +1 \,.
\]
The exact version of Grover's algorithm is the same as the original algorithm, except for an adapted final step,
which uses a parametrized diffusion operator $G(\phi)$ and partial
query $V_z(\varphi)$, where $\phi$ and $\varphi$ are angles
that depend on the Hamming distance $d$. 
As these angles do not have a nice closed formula, we refer the reader to~\cite[Equation (12)]{Brassard:2002} for the relation that $\phi$ and $\varphi$ must satisfy.
Here \[ V_z(\varphi) \ket{j} = \begin{cases} \ket{j} & \text{if } z_j = 0 \\ e^{i \varphi} \ket{j} & \text{if } z_j = 1 \end{cases}
\]
and
\[
G(\phi) = F_n V_0(\phi) F_n^{\dagger} \,,
\]
where $F_n$ is the $n \times n$ discrete quantum Fourier transform.\footnote{Note that if $n$ is a power of 2, it is also possible to use the $n\times n$ Hadamard transform.} 

Take $x,y \in \{0,1\}^n$ to be the input strings of  Alice and Bob and let
$z = x \oplus y$.
As in the proof of the $n/4$ case of Theorem~\ref{thm:exact-grover-n/4}, we turn this search algorithm into a quantum communication protocol by writing
a single query $U_z = U_x U_y = U_y U_x$. 
We can use the commutativity of $U_x$ and $U_y$ to save rounds:
The exact Grover's algorithm is performed by executing the operations
\[
 G(\phi) V_z(\varphi) \underbrace{G U_z \ldots G U_z}_{\ell-1 \text{ times}}
\]
on starting state $\ket{s} = \frac{1}{\sqrt{n}} \sum_{i=1}^{n} \ket{i}$. Since we can write two
alternations as $G U_z G U_z = G U_x U_y G U_y U_x$, alternating whether
Alice or Bob executes the query first that round,
only $\ell-1$ rounds are needed for the $\ell-1$
ordinary Grover iterations. Alice starts the protocol if $\ell$ is even,
and Bob sends the first message if $\ell$ is odd.

For the final step, the players need to simulate a query $V_z(\varphi)$ by
local operations that depend only on $x$ or $y$. At this point
in the protocol it is Alice's turn to communicate. She currently
holds the state 
\[
\ket{\psi} = \underbrace{G U_z \ldots G U_z}_{\ell-1 \text{ times}} \ket{s} \,.
\]

Now Alice adds an auxiliary qubit that starts in state $\ket{0}$.
Define the unitary operation $Q_x$ by its action on the computational basis states
as
\[Q_x \ket{j} \ket{b} = \ket{j} \ket{b \oplus x_j}
\]
and the (diagonal) unitary matrix $R_y(\varphi)$ as 
\[
R_y(\varphi) \ket{j} \ket{b} = e^{i \varphi (b \oplus y_j)} \ket{j} \ket{b} \,.
\]
Now Alice first applies $Q_x$ on the state $\ket{\psi}\ket{0}$, sends
this state to Bob who performs $R_y(\varphi)$, sending the state back
to Alice who again performs $Q_x$. It is easy to check that $Q_x R_y(\varphi) Q_x \ket{\psi} \ket{0} = (V_z(\varphi) \otimes I) \ket{\psi} \ket{0}$, therefore Alice now discards the auxiliary qubit
and applies $G(\phi)$ to finish the simulation of the exact version of Grover's algorithm.

The final state of the exact Grover's algorithm is
$\frac{1}{\sqrt{d}} \sum_{i \text{ s.t.~} z_i=1} \ket{i}$ if $|z|=d$.
If Alice has this state in her possession, she performs a measurement in the computational basis, obtaining
an index $i^*$ such that $x_{i^*} \neq y_{i^*}$ if $x \neq y$.
Then she sends $i^*$ and the value $x_{i^*}$ over to Bob, who outputs `equal' if and only if $x_{i^*} = y_{i^*}$. This final message
consists of $\ceil{\log{n}}+1$ qubits.
By the correctness of the exact Grover's algorithm, this protocol correctly outputs `not equal' if the Hamming distance between $x$ and $y$ is the
fixed value $d$. 
Therefore we turned a $\ell$-query execution of the exact version of Grover's algorithm into a protocol that uses $(\ell+2) \ceil{\log {n}} + 2$ qubits of communication in $\ell+2$ rounds. 
\end{proof}

\section{Bound on binary entropy function}\label{app:entropy}

Here we prove the following simple lemma, whose statement we have used in the proof of Theorem~\ref{thm:xi-smaller12}.
Recall that the binary entropy function $H$ is defined as $H(p) = - p \log p - (1-p)\log(1-p)$ for $p \in [0,1]$.

\begin{lemma}\label{lem:entropy}
For any $p \in (0, 1/2)$, we have $H(p) + H ( 1/2 - \sqrt{(1-p)p} ) - 1 > 0$.
\end{lemma}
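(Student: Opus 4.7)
The plan is to reduce the claim to a single-variable entropy inequality combined with an elementary algebraic identity. Setting $q := 1/2 - \sqrt{p(1-p)}$, the lemma asserts $H(p) + H(q) > 1$ for $p \in (0, 1/2)$.

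First I would verify the identity $4p(1-p) + 4q(1-q) = 1$. Indeed,
\[
q(1-q) = \left(\tfrac12 - \sqrt{p(1-p)}\right)\left(\tfrac12 + \sqrt{p(1-p)}\right) = \tfrac14 - p(1-p) = \left(\tfrac12 - p\right)^2,
\]
so $4q(1-q) = (1-2p)^2 = 1 - 4p(1-p)$, as claimed.

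Given this, the lemma will follow from the pointwise bound $H(t) \geq 4t(1-t)$ on $[0,1]$, with equality only at $t \in \{0, 1/2, 1\}$. For $p \in (0, 1/2)$ we have $p \notin \{0, 1/2, 1\}$, hence $H(p) > 4p(1-p)$; adding the non-strict bound $H(q) \geq 4q(1-q)$ and invoking the identity yields $H(p) + H(q) > 1$.

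The main obstacle is the entropy bound, which I would prove by Taylor expansion in $x := 1 - 2t \in [-1,1]$. From $(1+x)\ln(1+x) + (1-x)\ln(1-x) = 2\sum_{k \geq 1} x^{2k}/(2k(2k-1))$ (obtained by multiplying the Mercator series for $\ln(1\pm x)$ through by $(1 \pm x)$ and summing), one derives $H(t) = 1 - (1/\ln 2)\sum_{k \geq 1} x^{2k}/(2k(2k-1))$. Using the classical evaluation $\sum_{k \geq 1} 1/(2k(2k-1)) = \ln 2$ (a rearrangement of the alternating harmonic series) to rewrite $1 - 1/(2\ln 2) = (1/\ln 2) \sum_{k \geq 2} 1/(2k(2k-1))$, subtraction from $4t(1-t) = 1 - x^2$ simplifies to
\[
H(t) - 4t(1-t) = \frac{x^2}{\ln 2}\sum_{k=2}^\infty \frac{1 - x^{2(k-1)}}{2k(2k-1)},
\]
which is manifestly nonnegative for $|x| \leq 1$ and vanishes precisely at $x \in \{-1, 0, 1\}$, i.e.\ at $t \in \{0, 1/2, 1\}$. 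An alternative but messier route analyzes $g(t) = H(t) - 4t(1-t)$ directly via its second derivative $g''(t) = 8 - 1/(\ln 2 \cdot t(1-t))$, which changes sign exactly once on $(0, 1/2)$; combined with $g(0) = g(1/2) = 0$ and the symmetry $g(t) = g(1-t)$, this again forces $g \geq 0$. The power-series argument avoids such case analysis and is the one I would write up.
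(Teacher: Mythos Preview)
Your proof is correct and follows essentially the same approach as the paper: the paper's Lemma~\ref{lem:H-bound} is exactly your bound $H(t)\geq 1-(1-2t)^2 = 4t(1-t)$, proved there by the same Taylor-series argument using $\sum_{k\ge 1}1/(2k(2k-1))=\ln 2$, and the paper's one-line deduction of Lemma~\ref{lem:entropy} is your identity $4p(1-p)+4q(1-q)=1$ in slightly different notation. Your explicit difference formula $H(t)-4t(1-t)=\tfrac{x^2}{\ln 2}\sum_{k\ge 2}(1-x^{2(k-1)})/(2k(2k-1))$ is a nice sharpening but not a different route.
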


The proof of the lemma uses the following lower bound for the binary entropy function.

\begin{lemma}\label{lem:H-bound}
For any $p \in [0,1]$, $H(p) \geq 1 - (1-2p)^{2}$ holds. Moreover, equality holds if and only if $p \in \{0,1/2,1\}$.
\end{lemma}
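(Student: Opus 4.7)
My plan is to exploit the symmetry $H(p) = H(1-p)$ and $(1-2p)^2 = (1-2(1-p))^2$ shared by both sides of the desired inequality, so as to reduce to the range $p \in [0, 1/2]$. I would substitute $t := 1 - 2p \in [0, 1]$, so that the target becomes $H\!\left(\tfrac{1-t}{2}\right) \geq 1 - t^2$, and the equality candidates in the reduced range are $t = 0$ (that is, $p = 1/2$) and $t = 1$ (that is, $p = 0$). The remaining equality case $p = 1$ then follows from symmetry.

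Next I would compute an exact Taylor expansion of $H$ about $p = 1/2$ in the variable $t$. Starting from $\ln(1 \pm t) = \pm \sum_{n \geq 1} (\mp 1)^{n-1} t^n / n$, a short calculation shows that the odd powers of $t$ cancel in $(1+t)\ln(1+t) + (1-t)\ln(1-t)$, leaving
\[
(1+t)\ln(1+t) + (1-t)\ln(1-t) = 2 \sum_{k \geq 1} \frac{t^{2k}}{2k(2k-1)}.
\]
Converting $H$ from base-$2$ to natural logarithms and substituting this identity, one obtains the closed-form expansion $H(p) = 1 - \tfrac{1}{\ln 2} \sum_{k \geq 1} \tfrac{t^{2k}}{2k(2k-1)}$.

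With this in hand, the difference we must control factors cleanly: since $1 - (1-2p)^2 = 1 - t^2$,
\[
H(p) - \bigl[1 - (1-2p)^2\bigr] = t^2 \cdot h(t^2), \qquad h(s) := 1 - \frac{1}{\ln 2} \sum_{k \geq 1} \frac{s^{k-1}}{2k(2k-1)}.
\]
Because the subtracted power series has all positive coefficients, $h$ is strictly decreasing on $[0,1]$. I would then evaluate it at the endpoints: $h(0) = 1 - \tfrac{1}{2 \ln 2} > 0$ (only the $k=1$ term contributes), while for $h(1)$ the essential input is the classical identity
\[
\sum_{k \geq 1} \frac{1}{2k(2k-1)} = \sum_{k \geq 1} \left( \frac{1}{2k-1} - \frac{1}{2k} \right) = \ln 2,
\]
the Leibniz/Mercator series for $\ln 2$. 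This yields $h(1) = 1 - \tfrac{\ln 2}{\ln 2} = 0$.

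Strict monotonicity of $h$ now gives $h(s) \geq 0$ on $[0, 1]$ with equality only at $s = 1$; hence $t^2 h(t^2) \geq 0$ on $[0, 1]$ with equality iff $t \in \{0, 1\}$, translating to $p \in \{0, 1/2\}$ and, by the initial symmetry reduction, to $p \in \{0, 1/2, 1\}$ on all of $[0, 1]$. The only step requiring actual input rather than algebraic bookkeeping is the evaluation $h(1) = 0$ via the Leibniz series; this is exactly what forces equality at $p = 0$ and makes the factorization through $h$ collapse at the right endpoint.
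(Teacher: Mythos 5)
Your proof is correct and follows essentially the same route as the paper: both expand $H$ in a Taylor series about $p=1/2$, both invoke the Mercator series for $\ln 2$ to sum the coefficients, and your monotonicity argument for $h$ is just a repackaging of the paper's term-by-term bound $(1-2p)^{2n}\leq(1-2p)^2$. The only cosmetic difference is that you factor out $t^2$ and name the remaining series $h(t^2)$, whereas the paper bounds $1-H(p)$ directly.
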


\begin{proof}
The Taylor series of the binary entropy function around the point $1/2$ gives that 
$$1 - H(p) = \frac{1}{2 \ln 2} \sum_{n = 1}^{\infty} \frac{(1-2p)^{2n}}{n(2n-1)} 
\leq \frac{(1-2p)^{2}}{2 \ln 2} \sum_{n = 1}^{\infty} \frac{1}{n(2n-1)}
 = (1-2p)^{2},
 $$
where the first inequality is due to the fact that $|1-2p| \leq 1$ and therefore $(1-2p)^{2n} \leq (1-2p)^{2}$
and the last equation is due to fact that $2 \ln 2 = \sum_{n \ge 1} \frac{1}{n(2n-1)}$.
Indeed, the Taylor series for $\ln 2$ around $0$ (also known as Mercator series) gives that $\ln 2 = \sum_{n \ge 1} \frac{(-1)^{n+1}}{n} = \sum_{n \ge 1} \frac{1}{2n(2n-1)}$, and multiplying both sides by 2 gives the wanted result.

Therefore, we deduce that $H(p) \geq 1 - (1-2p)^{2}$. Moreover, equality holds only at the points where $(1-2p)^{2n} =(1-2p)^{2}$, which are $p \in \{ 0,1/2,1 \}$.
\end{proof}

\begin{proof}[Proof of Lemma~\ref{lem:entropy}]
Using Lemma~\ref{lem:H-bound}, we have that for any $p \in (0,1/2)$ the following holds:
$$
H(p) + H ( 1/2 - \sqrt{(1-p)p} ) - 1 > 1 - (1-2p)^{2} + 1 - (1-1 +2\sqrt{(1-p)p})^{2} - 1 = 0.
$$
\end{proof}

\section{Partial results on \texorpdfstring{$\xi(G_{\mathcal K})$}{xi(G\_K)}}\label{sec:app-orth}

We collect here some small results about the orthogonal rank of the graph $G_{\mathcal K}$.
In particular, we give an upper and lower bound on $\xi(G_{\mathcal K})$ (Proposition~\ref{prop:uppxi} and Proposition~\ref{prop:lowxi}, respectively) and show that a variant of the Lov\'asz theta number 
 does not provide an useful lower bound to $\xi(G_{\mathcal K})$ (Proposition~\ref{prop:thetaG_{k}}).

Recall that for $\mathcal K = \mathcal L_{n/2}\cup \cdots \cup \mathcal L_{n}$, we defined $G_{\mathcal K} = (\{0,1\}^{n},E)$ to be the graph whose edge set $E$ consists of all pairs of strings with Hamming distance $\{n/2,\dots,n\}$.
Also recall the definition the binary entropy function, $H(p) = - p \log p - (1-p)\log(1-p)$.

\begin{proposition}\label{prop:uppxi}
The graph $G_{\mathcal K}$ as above satisfies $\xi(G_{\mathcal K}) \leq 2^{H(1/4) n + 1} \approx 2^{0.81 n + 1}$. 
\end{proposition}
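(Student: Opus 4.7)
My approach is to establish the bound $\xi(G_{\mathcal K}) \leq 2^{H(1/4)n + 1}$ via the inequality $\xi(G) \leq \chi(G)$ and to bound the chromatic number by a covering-by-anticodes argument. The implication uses the standard fact that any proper $k$-coloring induces an orthonormal representation in $\mathbb{C}^k$ by sending each color class to a distinct standard basis vector.

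First, I would fix an explicit anticode of $G_{\mathcal K}$: the Hamming ball $A := B(0, \lfloor n/4\rfloor - 1) \subseteq \{0,1\}^n$. For any $a_1, a_2 \in A$ one has $d(a_1, a_2) \leq 2(\lfloor n/4 \rfloor - 1) < n/2$, so $A$ (and each translate $x \oplus A$, $x \in \{0,1\}^n$) is an independent set of $G_{\mathcal K}$. The standard entropy bound on a partial sum of binomial coefficients gives
\begin{equation*}
|A| = \sum_{i=0}^{\lfloor n/4 \rfloor - 1} \binom{n}{i} \leq 2^{H(1/4) n}.
\end{equation*}

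Second, I would cover $\{0,1\}^n$ by at most $2|A|$ translates of $A$. A simple probabilistic argument suffices: picking $k := 2|A|$ centers $c_1, \ldots, c_k$ independently and uniformly from $\{0,1\}^n$, the probability that a fixed $x$ is missed by every ball $B(c_i, \lfloor n/4 \rfloor - 1)$ equals $(1 - |A|/2^n)^k \leq \exp(-2|A|^2/2^n)$. A union bound then gives a non-coverage probability of at most $2^n \exp(-2|A|^2/2^n)$, which tends to zero as $n \to \infty$ because $H(1/4) > 1/2$ makes $|A|^2/2^n = 2^{(2H(1/4) - 1)n - o(n)}$ grow exponentially. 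Refining any such covering to a partition (assign each vertex to any one covering translate) yields a proper coloring of $G_{\mathcal K}$ with at most $2|A| \leq 2^{H(1/4) n + 1}$ colors, so that $\xi(G_{\mathcal K}) \leq \chi(G_{\mathcal K}) \leq 2^{H(1/4) n + 1}$.

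The main obstacle is minor: the bound as stated is quite loose, since a tighter covering-code analysis would yield $\chi(G_{\mathcal K}) \leq 2^{(1-H(1/4))n + o(n)} \approx 2^{0.19 n}$, and essentially any nontrivial covering suffices. In fact an even simpler approach works: partitioning $\{0,1\}^n$ by the first $\lceil n/2 \rceil + 1$ coordinates gives $2^{\lceil n/2 \rceil + 1}$ classes each of diameter $\lfloor n/2 \rfloor - 1 < n/2$, which already beats the target since $1/2 < H(1/4)$. The only subtle point in either approach is verifying $H(1/4) > 1/2$, which is elementary.
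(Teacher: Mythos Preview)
Your proof is correct and takes a genuinely different route from the paper. The paper directly constructs an orthonormal representation of $G_{\mathcal K}$ by exhibiting a polynomial $P:\{-1,1\}^n\to\R$ with nonnegative coefficients that vanishes whenever $\sum_k z_k \le 0$; each monomial of $P$ becomes a coordinate of the representation, and the number of monomials is bounded by $2\sum_{k\le n/4}\binom{n}{k} \le 2^{H(1/4)n+1}$. You instead pass through the inequality $\xi \le \chi$ and bound $\chi(G_{\mathcal K})$ by covering $\{0,1\}^n$ with translates of a small-diameter Hamming ball. Your approach is more elementary and, as you note yourself, actually proves more: the prefix-partition already gives $\chi(G_{\mathcal K}) \le 2^{n/2+1}$, and a standard covering-code argument gives $\chi(G_{\mathcal K}) \le 2^{(1-H(1/4))n + o(n)}$, both well inside the stated bound. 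The paper's polynomial method has the conceptual virtue of targeting $\xi$ directly rather than via $\chi$, and of producing an explicit representation in one shot, but for this particular graph it yields a weaker inequality than even the trivial coloring you describe. One small imprecision worth tightening: your probabilistic covering relies on $|A|^2/2^n = 2^{(2H(1/4)-1)n - o(n)}$, which needs the \emph{lower} bound $|A| \ge 2^{H(1/4)n - o(n)}$, not just the upper bound you stated; this is standard, but should be made explicit.
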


\begin{proof}
To start, we show that the existence of a certain polynomial implies the existence of an  orthonormal representation of the graph.
Let $P : \{-1,1\}^n \to \mathbb{R}$ be a polynomial with non-negative real coefficients on $n$ variables $z_1, \ldots, z_n$, each in $\{-1,1\}$.
That is, 
\[
 P(z) = \sum_{S \subseteq [n]} \alpha_S \prod_{k \in S} z_k, \text{ with } \alpha_S \geq 0,
\]
where for brevity we use $z$ as a shorthand for the vector $z_1, z_2, \ldots, z_n$.
Let $\mon P = \bigl| \{ S : \alpha_S \neq 0 \} \bigr|$ denote the number of monomials of $P$ with a non-zero coefficient.

Note that if we have a polynomial $P$ such that $P(1,1,\ldots, 1) \neq 0$ and $P(z)=0$ for all $z \in \{-1,1\}^n$ such that $\sum_{k=1}^n z_k = 2d - n$ then
we can turn this into a $\mon P$-dimensional orthonormal representation of the graph $H(n, d)$ as following.
Label the canonical basis vectors by the sets $S$ for which $\alpha_S \neq 0$, and take the map $\phi : \{0,1\}^n \to \mathbb{C}^{\mon P} $ defined by 
\[
\phi(x) = \sum_{S : \alpha_S \neq 0} \sqrt{\alpha_S} \prod_{k \in S} (-1)^{x_k} e_S \,. 
\]
For two strings $x,y$ the inner product of the associated vector is then given by
$\langle \phi(x), \phi(y) \rangle = \sum_{S : \alpha_S \neq 0} \alpha_S \prod_{k \in S} (-1)^{x_k + y_k} = P(z)$
for $z_k = (-1)^{x_k + y_k}$. The orthogonality then follows by noting that if $x$ and $y$ differ in $d$ positions,
 we have $\sum_{k=1}^n z_k =  \sum_{k=1}^n (-1)^{x_k + y_k} = \bigl| \{k : x_k = y_k\} \bigr| - \bigl| \{k : x_k \neq y_k\} \bigr| = n-2d$.

We will find an orthonormal representation for $G_{\mathcal K}$ by constructing a polynomial $P$ such that $P(z) = 0$ for $-n \leq \sum_{k=1}^n z_k \leq 0$.
Now, for any even $d$ with $n/2 \leq d \leq n$ define the polynomial 
\[
  P_d(z) = (2d-n) + \sum_{k=1}^n z_k \, .
\]
The evaluation of $P_d$ is 0 whenever $\sum_{k=1}^n z_k = n - 2d$ for the given $d$. Also define the polynomial
\[
P_{odd}(z) = 1 + \prod_{k=1}^n z_k    \, .
\]

This polynomial equals $0$ whenever an odd number of $z_k$ equal $-1$. To simplify notation, define $c_k = 4k$. Multiplying the $P_d$ together gives a polynomial which is 0 on the correct inputs for all even $d \geq n/2$:
\begin{align*}
 P_{even}(z) &= \prod_{\substack{d = n/2 \\ d \text{ even}}}^n P_d(z) = \prod_{\substack{d = n/2 \\ d \text{ even}}}^n (2d-n + \sum_{j=1}^n z_j) =  \prod_{k=0}^{n/4} (c_k + \sum_{j=1}^n z_j) \\
 &= \sum_{k=0}^{n/4} \biggl[ \sum_{\substack{S \subset [n/4] \\ |S| = k}}  \Bigl(\prod_{j\in S} c_j \Bigr)  \Bigl(\sum_{j=1}^{n} z_j \Bigr)^{n/4-k} \biggr] \,.
\end{align*}
Since this polynomial has only monomials with degree at most $n/4$,
we can upper bound the number of monomials by       
${\mon P_{even} \leq \sum_{k=0}^{n/4} \binom{n}{k} \leq 2^{H(1/4) n}}$.
The product $P(z) = P_{even}(z) P_{odd}(z)$ gives an orthonormal representation with the desired properties, and since $\mon P_{odd} = 2$, we have that $\mon P \leq 2^{H(1/4) n + 1}$.
\end{proof}

To the best of our knowledge, the current-best lower bound on~$\xi(G_{\mathcal K})$ is linear in~$n$.
The argument presented here is due to Alon~\cite{Alon:personal}.
We use the following result due to Kleitman~\cite{Kleitman:1966}.
The \emph{diameter} of a set~$A\subseteq \bset{n}$ is defined as the maximum Hamming distance between any two elements in~$A$.

\begin{theorem}[Kleitman]\label{thm:kleitman}
Let $r$ and $n$ be positive integers such that~$r\leq n/2$.
Let~$A\subseteq \bset{n}$ be a set of diameter~$2r$.
Then,
\beqn
|A| \leq \sum_{k=0}^r \binom{n}{k} \leq 2^{nH(r/n)}.
\eeqn
\end{theorem}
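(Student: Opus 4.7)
The plan is the classical shift-and-induct argument due to Kleitman. The second inequality $\sum_{k=0}^r \binom{n}{k} \leq 2^{nH(r/n)}$ is a standard bound: from $\sum_{k=0}^r \binom{n}{k}(r/n)^k ((n-r)/n)^{n-k} \leq 1$ and the fact that (for $r\leq n/2$) the smallest summand occurs at $k = r$, one gets $\sum_{k=0}^r \binom{n}{k} \leq (r/n)^{-r}((n-r)/n)^{-(n-r)} = 2^{nH(r/n)}$. I focus on the first, substantive inequality. The plan begins with a compression: for each coordinate $i \in [n]$, define the shift $S_i$ which, on input $A$, replaces every $x \in A$ with $x_i = 1$ by $x \oplus e_i$ whenever $x \oplus e_i \notin A$, and otherwise leaves $x$ alone. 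Then $|S_i(A)| = |A|$ is immediate, and a case analysis on pairs $u, v \in S_i(A)$, splitting on which element has been shifted, shows $\mathrm{diam}(S_i(A)) \leq \mathrm{diam}(A)$ (the delicate case, when $u$ has moved but $v$ is unmoved with $v_i = 1$, uses that $v$ remained in place only because $v \oplus e_i \in A$). Iterating $S_1, \ldots, S_n$ to a fixed point yields a compressed $A^* \subseteq \bset{n}$ of the same cardinality and diameter bound, now downward-closed: if $x \in A^*$ and $y \leq x$ coordinatewise then $y \in A^*$.

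Next I induct on $n$. Split $A^* = A^*_0 \sqcup A^*_1$ by the value of the last coordinate, viewing each piece as a subset of $\bset{n-1}$. The down-set property forces $A^*_1 \subseteq A^*_0$, and both pieces remain shift-invariant down-sets in $\bset{n-1}$. A direct check gives $\mathrm{diam}(A^*_0) \leq 2r$; and for any $\tilde x, \tilde y \in A^*_1$, the down-set structure places $(\tilde x, 0), (\tilde y, 1)$ in $A^*$, whose Hamming distance $d(\tilde x, \tilde y) + 1$ is at most $2r$, yielding $\mathrm{diam}(A^*_1) \leq 2r - 1$.

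The main obstacle is parity: the bound $\mathrm{diam}(A^*_1) \leq 2r - 1$ is odd, so the even-diameter induction hypothesis does not directly apply. The standard remedy is either (a) to strengthen the inductive statement to also cover odd diameters $\leq 2s+1$ in parallel, proving both parities by a single induction (Kleitman's original route), or (b) to apply an additional coordinate-coupling compression that further sharpens $\mathrm{diam}(A^*_1)$ all the way to $2(r-1)$, directly enabling the even-case hypothesis. Either variant gives $|A^*_1| \leq \sum_{k=0}^{r-1} \binom{n-1}{k}$. Combining with the inductive bound $|A^*_0| \leq \sum_{k=0}^{r} \binom{n-1}{k}$ and applying Pascal's identity yields
\[|A| = |A^*_0| + |A^*_1| \leq \sum_{k=0}^r \binom{n-1}{k} + \sum_{k=0}^{r-1} \binom{n-1}{k} = \sum_{k=0}^r \binom{n}{k},\]
as required. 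Everything beyond the parity fix is routine bookkeeping.
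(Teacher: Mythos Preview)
The paper does not prove this statement at all: Theorem~\ref{thm:kleitman} is quoted with attribution to Kleitman~\cite{Kleitman:1966} and used as a black box in the proof of Proposition~\ref{prop:lowxi}. So there is no proof in the paper to compare your proposal against.

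That said, your outline is the standard compression-plus-induction argument going back to Kleitman's original paper, and it is essentially correct. The entropy bound you give for the second inequality is the usual one. The shift operators $S_i$ do preserve cardinality and do not increase the diameter; your case analysis for the latter is right, and iterating them produces a down-set. The splitting $A^* = A^*_0 \sqcup A^*_1$ and the diameter bounds on the two pieces are also correct. You have correctly identified the only genuine subtlety, namely that $\mathrm{diam}(A^*_1) \leq 2r-1$ is odd, and both of the fixes you name (carrying the odd-diameter case through the induction in parallel, or performing an additional compression to force the diameter down to $2(r-1)$) are standard and work. The Pascal-identity recombination at the end is routine.
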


\begin{proposition}\label{prop:lowxi}
The graph $G_{\mathcal K}$ as above satisfies $\xi(G_{\mathcal K}) \geq \Omega(n)$.
\end{proposition}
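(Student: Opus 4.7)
The plan is to combine a volumetric covering of the unit sphere in $\C^d$ with Kleitman's theorem. I would first fix an orthonormal representation $\phi:\bset{n}\to\C^d$ of $G_{\mathcal K}$ with $d=\xi(G_{\mathcal K})$, and then cover the unit sphere of $\C^d$, viewed as $S^{2d-1}\subset\R^{2d}$, by Euclidean balls of diameter strictly less than $\sqrt{2}$. A standard packing estimate produces such a cover with at most $2^{Cd}$ balls for some absolute constant $C>0$.

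Next, by the pigeonhole principle, one of the balls of the cover must contain at least $2^n/2^{Cd}$ of the representation vectors $\phi(x)$; denote by $A\subseteq\bset{n}$ the set of corresponding inputs. For any unit vectors $u,v\in\C^d$ the identity $\|u-v\|^2 = 2 - 2\Re\langle u,v\rangle$ shows that $\|u-v\|<\sqrt{2}$ forces $\Re\langle u,v\rangle > 0$, and in particular $\langle u,v\rangle\ne 0$. Hence no two elements of $A$ can be adjacent in $G_{\mathcal K}$, so $A$ is an independent set of $G_{\mathcal K}$. By the definition of $G_{\mathcal K}$ this forces all pairs in $A$ to have Hamming distance strictly less than $n/2$, so $A$ has diameter at most $n/2-1\leq 2\lfloor n/4\rfloor$.

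Kleitman's theorem then bounds
\[
|A| \;\leq\; \sum_{k=0}^{\lfloor n/4\rfloor}\binom{n}{k} \;\leq\; 2^{H(1/4)n + o(n)},
\]
via the standard entropy estimate for binomial tails, where $H$ is the binary entropy function. Combining this with the pigeonhole lower bound $|A|\geq 2^n/2^{Cd}$ yields $2^{(1-H(1/4))n - o(n)} \leq 2^{Cd}$, and since $H(1/4)<1$ this produces $d\geq\Omega(n)$. The only mildly delicate ingredient is the covering-number estimate for $S^{2d-1}$ by sets of diameter less than $\sqrt{2}$, but this is a routine volume computation; the combinatorial heart of the argument is just pigeonhole together with Kleitman.
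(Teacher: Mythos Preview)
Your proposal is correct and follows essentially the same route as the paper: a volume-based covering of the sphere carrying the representation, pigeonhole to locate a large independent set of $G_{\mathcal K}$, and then Kleitman's theorem to bound that independent set, yielding $d\geq (1-H(1/4))n/C$. The only differences are cosmetic---the paper speaks of spherical caps of small angle rather than Euclidean balls of diameter below $\sqrt{2}$, and is less explicit than you are about the identification $\C^d\cong\R^{2d}$---but the mechanism and the final inequality are the same.
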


\begin{proof}
Suppose that there exists an orthogonal representation of $G_{\mathcal K}$ into~$S^{d-1}$.
Observe that any spherical cap of $S^{d-1}$ of angle strictly less than $\pi/2$ contains only points of the representation that correspond to an independent set in~$G_\mathcal K$.
For $\eps > 0$, consider a covering of $S^{d-1}$ by spherical caps of radius $\pi/2 - \eps$ using the minimum number of caps.
A simple volume argument 
shows that for some absolute constant~$\alpha = \alpha(\eps) \in (0,\infty)$, such a covering uses at most $2^{\alpha d}$ caps.
By the Pigeonhole Principle, there exists a cap in the covering and a set $A\subseteq\bset{n}$ of cardinality at least $2^{n - \alpha d}$, such that the representation sends each string in~$A$ to the cap.
Since~$A$ must be an independent set in $G_\mathcal K$, it has diameter less than~$n/2$.
Hence, by Theorem~\ref{thm:kleitman}, we have $n - \alpha d\leq nH(1/4)$, giving the result.
\end{proof}

At last, we show that a variant of the Lov\'asz theta number cannot be used to get at stronger lower bound on $\xi(G_{\mathcal K})$ than the one presented in the above Proposition~\ref{prop:lowxi}.
The variant we consider, called~$\vartheta'(G)$, was introduced in~\cite{Schrijver:1979,McEliece:1978}  by adding nonnegativity constraints to the maximization program that defined $\vartheta(G)$ in~(\ref{eq:thetadual}):
\beqn
\vartheta'(G) = \max \{ \sum_{i,j \in [n]} X_{ij} \st X \in S_{+}, \; \sum_{i \in [n]} X_{ii} = 1, \; X_{ij} \geq 0 \; \forall \, ij \in [n] \text{ with equality if $ij \in E(G)$} \}.
\eeqn
 Clearly, for any graph $G$, the following chain of inequality holds: $\vartheta'(\overline{G}) \leq \vartheta(\overline{G}) \leq \xi(G)$. 
 In the following proposition, we show that $\vartheta'(\overline{G_{\mathcal{K}}}) \leq 2n$, meaning that this parameter does not provide a useful lower bound on $\xi(\overline{G_{\mathcal{K}}})$.
Although we don't know whether $\vartheta(\overline{G_{\mathcal{K}}})$ is also upper bounded by $cn$, for some $c$ constant, computational results seems to indicate that these two parameters are always close to each other~\cite{Meurdesoif:2005,DR:2008}. 
 
\begin{proposition}\label{prop:thetaG_{k}}
The graph $G_{\mathcal K}$ as above satisfies $\vartheta'( \overline{G_{\mathcal K}}) \leq 2n$.
\end{proposition}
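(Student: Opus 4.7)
My plan is to establish the upper bound via SDP duality. By (a minor variant of) standard SDP duality, $\vartheta'(\overline{G_{\mathcal K}}) \leq 2n$ is equivalent to the existence of a positive semidefinite matrix $Z \in \mathbb{R}^{2^n \times 2^n}$ satisfying $Z_{xx} = 2n-1$ for every $x$ and $Z_{xy} \leq -1$ for every non-edge of $\overline{G_{\mathcal K}}$, i.e.\ for every pair $x \neq y$ with Hamming distance $d(x,y) \geq n/2$; the entries $Z_{xy}$ with $d(x,y) < n/2$ are unconstrained. I will construct such a $Z$ explicitly.

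Let $B \in \mathbb{R}^{2^n \times 2n}$ be the matrix whose columns are indexed by pairs $(i,\epsilon) \in [n] \times \{0,1\}$ and whose $(x,(i,\epsilon))$-entry equals the indicator $\mathbf{1}[x_i = \epsilon]$; equivalently, each column is the characteristic vector of a ``half-cube''. A direct calculation yields $(BB^\top)_{xy} = |\{i \in [n] : x_i = y_i\}| = n - d(x,y)$. I then set $Z := 2 BB^\top - J$ with $J$ the all-ones matrix of size $2^n$. This gives $Z_{xx} = 2n-1$ and $Z_{xy} = 2n - 2d(x,y) - 1$, so the sign condition $Z_{xy} \leq -1$ holds precisely when $d(x,y) \geq n/2$, as required.

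It remains to verify $Z \succeq 0$, equivalently $2 BB^\top \succeq J$. For $v \in \mathbb{R}^{2^n}$ define $s_{i,\epsilon} := \sum_{x : x_i = \epsilon} v_x$, so $\|B^\top v\|^2 = \sum_{i,\epsilon} s_{i,\epsilon}^2$. Since $s_{i,0} + s_{i,1} = \mathbf{1}^\top v$ for every coordinate $i$, the elementary bound $(a+b)^2 \leq 2(a^2+b^2)$ yields $(\mathbf{1}^\top v)^2 \leq 2(s_{i,0}^2 + s_{i,1}^2)$. Summing over $i$ gives $n (\mathbf{1}^\top v)^2 \leq 2 \|B^\top v\|^2$, which (for $n \geq 1$) is already strictly stronger than the required $(\mathbf{1}^\top v)^2 \leq 2 v^\top BB^\top v$, so $Z$ is indeed PSD and the bound $\vartheta'(\overline{G_{\mathcal K}}) \leq 2n$ follows.

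The main difficulty is locating the right dual witness. A natural path is to first exploit the Hamming-scheme symmetry to restrict attention to elements of the Bose--Mesner algebra, verify through small cases (e.g.\ $n = 2,4$) that the target value $2n$ is attained, and then notice that the required pointwise formula $Z_{xy} = 2(n - d(x,y)) - 1$ is essentially the Gram matrix of the half-cube indicators, up to the rank-one correction $-J$ that is needed to absorb the all-ones eigenvector.
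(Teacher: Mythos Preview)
Your argument contains a fatal arithmetic slip in the very step where you verify the inequality constraints. You write that
\[
Z_{xy} \;=\; 2n - 2d(x,y) - 1
\]
and then assert that ``the sign condition $Z_{xy}\le -1$ holds precisely when $d(x,y)\ge n/2$''. But $2n-2d-1\le -1$ is equivalent to $n\le d$, not to $d\ge n/2$. In particular, for $d(x,y)=n/2$ your matrix gives $Z_{xy}=n-1$, which is strictly positive for every $n\ge 2$. Thus your $Z$ satisfies the required bound $Z_{xy}\le -1$ only at the single distance $d=n$, and fails it at every distance $n/2\le d<n$; these are exactly the edges of~$G_{\mathcal K}$ that matter. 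Concretely, for $n=4$ your matrix has $(z_0,\dots,z_4)=(7,5,3,1,-1)$, whereas a valid witness must have $z_2,z_3,z_4\le -1$; one that works is $(7,1,-1,-1,-1)$, whose Bose--Mesner eigenvalues are $(0,12,8,4,0)$ --- note that this cannot be written as an affine function of the Hamming distance.

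The underlying obstruction is structural, not just a typo: any matrix in the Bose--Mesner algebra using only the idempotents $E_0$ and $E_1$ (equivalently, any affine function of $d(x,y)$, which is what $2BB^\top - J$ and its rescalings produce) has $z_{n/2}$ determined solely by the $E_0$-coefficient, since $K_1^n(n/2)=0$. Forcing $z_{n/2}\le -1$ then requires a negative $E_0$-coefficient, destroying positive semidefiniteness. So a correct dual witness must involve higher-degree Krawtchouk components; the ``half-cube'' Gram construction cannot be salvaged by adjusting constants. The paper sidesteps this by working on the primal side: it invokes Schrijver's reduction of $\vartheta'$ in the Hamming scheme to Delsarte's linear program and then appeals to the Plotkin bound.
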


\begin{proof}
We consider the complement graph $\overline{G_{\mathcal K}}$, whose vertices are all the strings in $\{0,1\}^{n}$ and two strings are adjacent if their Hamming distance is at most $n/2 - 1$. 
This graph arises in the context of Hamming schemes (see~\cite{Delsarte:1998} for a background on association schemes) and Schrijver~\cite{Schrijver:1979} (see also~\cite{McEliece:1978}) proved that both $\vartheta(\overline{G_{\mathcal K}})$ and $\vartheta'(\overline{G_{\mathcal K}})$ can be written as a linear program. In particular, 
$$
 \vartheta'(\overline{G_{\mathcal K}})  =  
 \max \bigg\{ 1+ \sum_{k =n/2}^{n} a_{k}  \st  a_{k} \geq 0 \; \forall \, k \in \{n/2,\dots,n\};  
 \binom{n}{d}+ \sum_{k=n/2}^{n} a_{k} K_{d}^{n}(k) \geq 0 \; \forall \, d \in \{0,\dots,n\} \bigg\}
$$
where $K_{d}^{n}(k)$ is the evaluation at $k$ of the degree-$d$ (binary) Krawtchouk polynomial $K_{d}^{n}$ defined as~(\ref{Krawtchouk}).
This program is equal to the well-known Delsarte's linear program~\cite{Delsarte:PhD}.

Let us now consider a new linear program where, among the constraints using the Krawtchouk polynomials, we only keep the ones of degree-1:
$$ \lambda(\overline{G_{\mathcal K}})  :=  
\max  \bigg\{ 1+ \sum_{k =n/2}^{n} a_{k}  \st  a_{k} \geq 0 \; \forall \, k \in \{n/2,\dots,n\};  \;
 n+ \sum_{k=n/2}^{n} a_{k} K_{1}^{n}(k) \geq 0  \bigg\}. $$
Clearly, $\vartheta'(\overline{G_{\mathcal K}}) \leq \lambda(\overline{G_{\mathcal K}})$ and it can be easily derived that the constraints of $\lambda(\overline{G_{\mathcal K}})$ imply the Plotkin bound, i.e., that $\lambda(\overline{G_{\mathcal K}}) \leq 2n$ (see for example~\cite[Section 4.3]{Delsarte:PhD}). 
\end{proof}

\end{document}